\newtheorem{theorem}{Theorem}
\newtheorem{lemma}{Lemma}
\newtheorem{prop}{Proposition}
\newtheorem{conj}{Conjecture}
\def\be{\begin{equation}}
\def\ee{\end{equation}}
\def\bea{\begin{eqnarray}}
\def\eea{\end{eqnarray}}
\newcommand{\td}{\text{d}}
\title{ \bf{Moduli space of stationary vacuum black holes from integrability}}
\author{James Lucietti\footnote{j.lucietti@ed.ac.uk}\,  and Fred Tomlinson\footnote{f.tomlinson@ed.ac.uk}
\\ \\ \small \sl School of Mathematics and Maxwell Institute for Mathematical Sciences, \\ \small \sl    University of Edinburgh, King's Buildings, Edinburgh, EH9 3JZ, UK }
\date{}
\begin{document}

%\begin{titlepage}
\maketitle

\begin{picture}(0,0)(0,0)
\put(350, 240){}
\put(350, 225){}
\end{picture}

%\vskip1.5cm
%Abstract
\begin{abstract}
We consider the classification of asymptotically flat, stationary, vacuum black hole spacetimes in four and five dimensions, that admit one and two commuting axial Killing fields respectively. It is well known that the Einstein equations reduce to a harmonic map on the two-dimensional orbit space, which itself arises as the integrability condition for a linear system of spectral equations. We integrate the Belinski-Zakharov  spectral equations along the boundary of the orbit space and use this to fully determine the metric and associated Ernst and twist potentials on the axes and horizons. This is sufficient to derive the moduli space of solutions that are free of conical singularities on the axes, for any given rod structure.  As an illustration of this method we obtain constructive uniqueness proofs for the Kerr and Myers-Perry black holes and the known doubly spinning black rings.
\end{abstract}

\newpage 
\tableofcontents

\newpage 
\section{Introduction}

The classification of equilibrium black hole solutions  is a fundamental problem  in General Relativity (GR). In four spacetime dimensions this is essentially answered by the celebrated black hole uniqueness theorem, which roughly states that the only asymptotically flat, stationary black hole solution to the vacuum Einstein equations is the Kerr solution~\cite{Chrusciel:2012jk}.   A striking consequence of this is that any such solution is simply labelled by two parameters, the mass $M$ and angular momentum $J$ which must obey 
\be
|J| \leq M^2.   \label{kerrbound}
\ee  
One of the key underlying structures which allows one to establish this theorem is the remarkable fact that the vacuum Einstein equations for stationary and axisymmetric spacetimes reduce to a harmonic map on the two-dimensional orbit space.

In higher dimensional spacetimes, it has been known for some time that there can be no such simple uniqueness theorem.  This was revealed by the striking discovery of the black ring, a five-dimensional, asymptotically flat black hole solution with horizon topology $S^2\times S^1$~\cite{Emparan:2001wn}. Alongside the spherical topology black hole discovered by Myers-Perry~\cite{Myers:1986un}, this explicitly shows that even vacuum black holes are not uniquely specified by their mass and angular momenta.  A natural problem which then presents itself  is to classify all higher-dimensional stationary black hole solutions to the Einstein equations.  This is a central open problem in higher-dimensional GR and is largely unsolved, see the reviews~\cite{Emparan:2008eg, Hollands:2012xy}.

Nevertheless, substantial progress has been made for $D$-dimensional stationary spacetimes which admit $D-3$ commuting axial Killing fields that commute with the stationary Killing field~\cite{Emparan:2001wk, Harmark:2004rm}. These generalise the four-dimensional stationary and axisymmetric spacetimes that contain the Kerr solution. Crucially, the vacuum Einstein equations for spacetimes with such symmetry  reduce to an integrable harmonic map on the two-dimensional orbit space. However, asymptotic flatness is only compatible with such a symmetry assumption for $D=4,5$. This is because if $D>5$ the rank of the rotation group $SO(D-1)$ is less than $D-3$ (however, this symmetry assumption is compatible with Kaluza-Klein asymptotics).  For this reason, most advances in constructing and classifying higher-dimensional black hole solutions has been for the class of $D=5$ asymptotically flat, stationary spacetimes admitting two commuting axial Killing fields. Indeed, both the Myers-Perry black hole and the black ring belong to this class of solutions. 

In this paper we will consider $D=4,5$ asymptotically flat stationary vacuum spacetimes that  admit $D-3$ commuting axial Killing fields. For such spacetimes it has been shown that the two-dimensional orbit space can be identified with a half plane $\{ (\rho, z) \; | \; \rho>0 \} \subset \mathbb{R}^2$, where $(\rho, z)$ are the global Weyl-Papapetrou coordinates~\cite{Hollands:2007aj, Hollands:2008fm}.  The following uniqueness theorem for black hole spacetimes in this class has been previously established by Hollands and Yazadjiev (an analogous result also holds for the $D$-dimensional asymptotically Kaluza-Klein case): 

 \begin{theorem}[\hspace{-.1cm}~\cite{Hollands:2007aj, Hollands:2008fm}] There is at most one $D=4$ or $5$-dimensional  asymptotically flat, stationary, vacuum spacetime with $D-3$ commuting axial Killing fields,  containing a  non-degenerate\footnote{An analogous theorem can be established for degenerate horizons, i.e. for extreme black holes in this class~\cite{Figueras:2009ci}. In this paper we will only consider non-degenerate horizons.} event horizon, for a given rod structure and a given set of horizon angular momenta. \label{th1}
 \end{theorem}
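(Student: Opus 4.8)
The strategy is to reduce the uniqueness question to a statement about the boundary data of the harmonic map on the half-plane, and then to show that this boundary data is completely fixed by the rod structure together with the horizon angular momenta. First I would recall that, for spacetimes in this class, the Einstein equations are equivalent to a harmonic map $\Phi : \{(\rho,z) \mid \rho>0\} \to SL(D-2,\mathbb{R})/SO(D-2)$, whose target is non-positively curved; concretely $\Phi$ is built from the Gram matrix of the axial Killing fields together with the associated twist potentials. The half-plane boundary $\rho=0$ is partitioned into the intervals (``rods'') of the rod structure: semi-infinite axis rods at the ends, finite axis rods and horizon rods in between. The key analytic input is that asymptotic flatness pins down the behaviour of $\Phi$ at infinity (in the $(\rho,z)$ plane), while the rod structure prescribes which one-parameter subgroup degenerates along each axis rod, and the horizon angular momenta are exactly the Komar-type integrals that appear as constants of integration for the twist potentials across the horizon rods.

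The main step is a Mazur-type (or $\sigma$-model Bunting-type) identity: given two solutions $\Phi_1,\Phi_2$ with the same boundary data, form the relative quantity $\Psi = \Phi_2 \Phi_1^{-1}$ and compute $\partial_a \partial_a \tr \Psi$ (with the appropriate flat Laplacian in $(\rho,z)$, including the $\rho$-weight). Non-positive curvature of the target makes the right-hand side a non-negative divergence-type expression, so $\tr(\Psi - I)$ is subharmonic; one then integrates over a large half-disc in the half-plane and applies the maximum principle. The boundary terms must be shown to vanish: on the axis and horizon rods this uses the matched singular behaviour of $\Phi_1,\Phi_2$ dictated by the common rod structure and horizon angular momenta, at spatial infinity it uses asymptotic flatness, and — crucially — at the ``corners'' where a horizon rod meets an axis rod, and at the poles of the horizon, one needs a careful local analysis (in adapted coordinates near these points the metric looks like flat space or a smooth horizon cross-section) to check that no finite boundary contribution survives. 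Concluding $\Psi = I$, hence $\Phi_1 = \Phi_2$, gives equality of the full higher-dimensional metrics after reconstructing the conformal factor from the harmonic map, which is determined up to a constant fixed again by asymptotic flatness.

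The hard part, and the technical heart of the argument, is establishing the precise local form of $\Phi$ near the boundary — especially at the endpoints of the rods (axis--axis corners, axis--horizon corners, and the horizon ``poles'') and at spatial infinity — with enough uniformity to guarantee the decay of the Mazur boundary integrals. This is exactly where the regularity theory of the orbit space and the Weyl--Papapetrou coordinates enters, and where the absence of conical singularities on the axis rods is used; in dimension five there is the additional subtlety that the rod vectors are integer triples subject to an admissibility (coprimality/unimodularity) condition, and one must verify that matching rod structures really do force matching local behaviour of the Killing Gram matrices there. Everything else — the algebraic $\sigma$-model identity, the subharmonicity, and the global maximum-principle argument — is by now standard once the boundary analysis is in place.
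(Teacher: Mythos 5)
This theorem is quoted from Hollands--Yazadjiev (\cite{Hollands:2007aj, Hollands:2008fm}) and is not proved in the paper itself; the paper only remarks that the proof is nonconstructive and rests on a Mazur-type divergence identity characterising the ``difference'' of two solutions of the reduced harmonic map problem on the half-plane. Your sketch follows exactly that route (coset-valued harmonic map, subharmonicity of the Mazur quantity, vanishing of the boundary integrals forced by matching rod structure, horizon angular momenta and asymptotic flatness), so it is essentially the same approach as the cited proof, with the technical boundary and corner analysis correctly identified as the part you have not carried out.
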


Roughly speaking, the rod structure is data that encodes the fixed points of the axial Killing fields and the topology of the horizons. More precisely, the boundary of the orbit space can be identified with the $z$-axis of the half-plane which divides into a set of intervals, called rods, each of which either corresponds to a component of the axis or horizon. Each axis rod is defined by the vanishing of a certain periodic linear combination of the axial Killing fields, called the rod vector (of course, for $D=4$ there is only one axial Killing field and hence only one type of axis rod). The rod structure  corresponds to this set of rods together with their lengths and the axis rod vectors.  

For $D=4$ and a connected horizon the above theorem reduces to the classic black hole uniqueness theorem for the Kerr black hole: it says that any solution is uniquely parameterised by the horizon rod length $\ell_H$ and angular momentum $J$ (there are no finite axis rods). The (nonextreme) Kerr solution realises all possible values of this data, $\ell_H>0, J \in \mathbb{R}$, and hence the classification for this case is complete (in this case one can  of course also use the $M, J$ to label solutions as is traditionally done).  As in the classic $D=4$ case, the proof of Theorem \ref{th1} is nonconstructive and involves a nonlinear divergence identity (Mazur identity) which characterises the `difference' of two solutions to the corresponding harmonic map problem.   Therefore this theorem does not address the crucial question of existence:  {\it for what rod structures and horizon angular momenta do regular solutions actually exist?}

Indeed, the existence question is largely open even for $D=4$. In this case the other possible rod structures correspond to black holes with multiple horizons, with finite axis rods separating the disjoint horizon rods.  There is a general expectation that equilibrium configurations describing such solutions in the vacuum cannot exist due to their mutual gravitational attraction. In fact, by adapting existence results for harmonic maps to this problem, Weinstein has shown that a unique $N$-component black hole solution exists given any rod structure and horizon angular momenta, which is regular everywhere away from the axis~\cite{Weinstein1990, Weinstein1992, Weinstein1994}. However, such solutions may still suffer from conical singularities on the finite axis components (i.e. those not connected to infinity).  Physically, these singularities are related to the force of attraction between the black holes and it is conjectured that for $N>1$ such solutions always do possess conical singularities. Evidence that this force is always attractive has been obtained by studying various special cases~\cite{TY, Weinstein1994}.  

Candidate multi-black hole solutions, known as the multi-Kerr-NUT solutions, have been known for some time~\cite{KN, N, BZ2}, although an analysis of the potential conical singularities has proven to be essentially intractable. Naturally, the $N=2$ case corresponding to a double-black hole has been the most extensively studied. From the above theorem this solution depends on five-parameters (two horizon rod lengths, one axis rod length and the angular momentum of each horizon), that are related by the equilibrium condition (i.e. the condition for removal of the conical singularity on the finite axis rod).  The study of the equilibrium condition for the double-Kerr-NUT solution has been the subject of much work, see e.g.~\cite{Tomimatsu:1981bc, MR}. However, even if one can give a general proof that the equilibrium condition for the double-Kerr-NUT solution is never satisfied, this would still not give a proof of the nonexistence of a regular double-black hole, since it is not a priori clear that it contains the general solution with these boundary conditions.   
Recently, this conjecture has been settled by Hennig and Neugebauer~\cite{Neugebauer:2011qb}: a regular double-black hole solution does not exist. The proof consists of two steps: (i) employing the inverse scattering method from integrability theory to prove that the general solution with such boundary conditions is contained in the known double-Kerr-NUT solution (this was already shown in earlier work by Varzugin~\cite{Varzugin:1997ee} and Meinel and Neugebauer~\cite{Neugebauer:2003qe}); (ii) showing that the equilibrium conditions are incompatible with the area-angular momentum inequality for a marginally trapped surface~\cite{Hennig:2008yw, Dain:2011pi, Chrusciel:2011iv}. 

The $D=5$ case is more complicated for two principle reasons. Firstly, there are more horizon topologies compatible with biaxial symmetry: $S^3$, $S^2\times S^1$ and lens spaces $L(p,q)$. Secondly, for every horizon topology (including multi-horizons) there can be an arbitrary number of finite axis rods on which different linear combinations of the two axial Killing fields  vanish -- these correspond to nontrivial 2-cycles in the domain of outer communication (DOC).  Recently, a theorem which partially addresses the existence question in this context has been established by Khuri, Weinstein and Yamada~\cite{Khuri:2017xsc}.   It is a five-dimensional analogue of Weinstein's theorem for $D=4$ multi-black holes. Indeed, the proof involves the theory of harmonic maps adapted to this setting, although it requires one to make a certain technical assumption on the rod structure.
 
  \begin{theorem}[\hspace{-.04cm}\cite{Khuri:2017xsc}] Given any admissible rod structure obeying a compatibility condition, and given any set of horizon angular momenta, there exists exactly one $5$-dimensional, asymptotically flat, stationary, vacuum, spacetime with two commuting axial Killing fields and containing a non-degenerate event horizon, if and only if the metric is smooth at the axes. \label{th2}
   \end{theorem}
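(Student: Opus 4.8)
The plan is to recast the statement as an existence-and-uniqueness problem for a singular harmonic map, and then to analyse when the reconstructed metric closes up across the axes. For a $D=5$ stationary spacetime with two commuting axial Killing fields that commute with $\partial_t$, the vacuum Einstein equations on the domain of outer communication (DOC) are equivalent to the data of an axisymmetric harmonic map $\Phi : \mathbb{R}^3 \setminus \Gamma \to SL(3,\mathbb{R})/SO(3)$, where $\Gamma$ is the $z$-axis and the target is the non-positively curved (NPC) Riemannian symmetric space that parametrises the reduced gravitational degrees of freedom (norms and twist potentials of the axial Killing fields), together with a conformal factor $e^{2\nu}$ recovered from $\Phi$ by a line integral whose closedness is guaranteed by harmonicity of $\Phi$. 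An admissible rod structure and a choice of horizon angular momenta prescribe, rod by rod, the precise singular boundary behaviour of $\Phi$ along $\Gamma$: on an axis rod with rod vector $v$ the metric must degenerate as a smooth $\mathbb{R}^2$ axis with rotational generator $v$, on a non-degenerate horizon rod $\Phi$ must exhibit the bounded ``rotating rod'' behaviour fixed by the given angular momenta, and at spatial infinity $\Phi$ must approach the flat model. It therefore suffices to show: (i) there is at most one such harmonic map; (ii) one exists; and (iii) the metric it reconstructs defines a genuine spacetime precisely when it extends smoothly across the finite axis rods.

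For (i) I would argue as in the proof of Theorem~\ref{th1}. If $\Phi_1,\Phi_2$ are two harmonic maps with the prescribed asymptotics, then the squared symmetric-space distance $u := d(\Phi_1,\Phi_2)^2$ is subharmonic on $\mathbb{R}^3\setminus\Gamma$ by NPC convexity. Using the common rod models near each rod, near each rod endpoint, and near infinity, one shows that $u$ is bounded and extends continuously by $0$ to $\Gamma\cup\{\infty\}$, so the maximum principle forces $\Phi_1=\Phi_2$; the conformal factor is then determined by $\Phi$. The only delicate point is the boundary vanishing of $u$ --- that the singular parts of $\Phi_1$ and $\Phi_2$ cancel --- which follows from both maps being asymptotic to the same explicit rod models and is essentially the content of the Hollands--Yazadjiev (Mazur-type) argument, robust under these singularities.

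For (ii) the strategy is the direct method of the calculus of variations applied to a renormalised energy. First build a smooth reference map $\Phi_0$ carrying exactly the prescribed singularities, by patching along $\Gamma$ a model solution for each rod --- for an axis rod the flat $\mathbb{R}^2$-axis model rotated by $v$, for a non-degenerate horizon rod a Myers--Perry / black-ring-type rod model with the given angular momenta --- and interpolating smoothly in the interior and at infinity. Checking that two adjacent rod models are mutually consistent at the shared endpoint, so that $\Phi_0$ really is a map into the target with only the allowed singularities, is exactly where the \emph{admissibility} condition (the determinant of two adjacent axis rod vectors equals $\pm1$, so the corner is a smooth point of the axis) and the additional \emph{compatibility} condition on the rod structure must be invoked. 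One then minimises $E(\Phi) = \int_{\mathbb{R}^3\setminus\Gamma}\big(|\nabla\Phi|^2-|\nabla\Phi_0|^2\big)$ over maps with $\Phi-\Phi_0\to 0$ in a suitable weighted Sobolev sense near $\Gamma$ and at infinity; NPC convexity gives weak lower semicontinuity, and with a Poincar\'e-type inequality, coercivity, so a minimiser $\Phi$ exists. Interior regularity away from $\Gamma$ follows from the Schoen--Uhlenbeck / Jost theory of energy-minimising maps into NPC targets, and a barrier argument built from the reference asymptotics shows that $\Phi$ inherits the prescribed leading behaviour along $\Gamma$; in particular the horizon rods are genuine non-degenerate horizons and the angular momenta come out as prescribed.

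Finally, for (iii), reconstruct the $5$-metric from $\Phi$ and $e^{2\nu}$. By construction it is a smooth, asymptotically flat, stationary, vacuum solution on the DOC with the prescribed rod structure and horizon angular momenta, so it defines a spacetime exactly when it extends smoothly across each finite axis rod. At a single such rod the only possible obstruction is a conical defect, whose absence is the condition that the period of the angular coordinate singled out by smoothness at that rod equals $2\pi$; imposing this at every finite axis rod is precisely what ``the metric is smooth at the axes'' means, and conversely any spacetime in the class is smooth there by definition. Combining this with uniqueness from (i) yields the statement. I expect the real obstacle to be step (ii): both constructing a globally consistent reference map through every rod corner --- which is what forces the compatibility hypothesis, and the reason the theorem is not unconditional --- and, more seriously, the asymptotic and regularity analysis of the minimiser near $\Gamma$, and especially near the rod endpoints, where $\Phi$ is genuinely singular and standard harmonic-map regularity does not apply, so one must develop tailored blow-up and barrier estimates.
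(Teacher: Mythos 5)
The paper does not prove this statement: Theorem~\ref{th2} is imported verbatim from Khuri, Weinstein and Yamada~\cite{Khuri:2017xsc}, and the authors only remark that ``the proof involves the theory of harmonic maps adapted to this setting'' as a five-dimensional analogue of Weinstein's $D=4$ multi-black-hole existence results, with the compatibility condition entering as a technical hypothesis of that proof. So there is no in-paper argument to compare against. That said, your sketch is a faithful outline of the strategy of the cited reference: reduction to a singular harmonic map into $SL(3,\mathbb{R})/SO(3)$ with prescribed rod-by-rod boundary behaviour, uniqueness via NPC convexity of the squared distance between two candidate maps (the Mazur/Hollands--Yazadjiev mechanism underlying Theorem~\ref{th1}), and existence via minimisation of a renormalised energy relative to a patched reference map, with admissibility and compatibility invoked to make the reference map globally consistent through the corners. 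Your identification of where the real work lies (the construction of the model map near rod endpoints and the boundary regularity of the minimiser) is also accurate.

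One substantive correction to your step (iii): you assert that ``the only possible obstruction is a conical defect'' on the finite axis rods. The paper is explicit (end of Section~\ref{sec:prelim}, discussion following Theorem~\ref{th2}) that the regularity left open by the theorem comprises \emph{two} conditions: (i) the metric components must extend as smooth, even functions of $\rho$ up to the axes, and (ii) the absence of conical singularities on the inner axis rods, the semi-infinite rods being handled by~\cite{Khuri:2018udf}. The ``if and only if the metric is smooth at the axes'' clause therefore covers both, and collapsing it to the conical condition alone overstates what the harmonic-map construction delivers for free. This does not invalidate your overall architecture, but your step (iii) as written would not close the equivalence asserted in the theorem.
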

   
The rod structure is required to be admissible in order to avoid potential orbifold singularities at the fixed points of the biaxial symmetry; on the other hand the compatibility condition on the rod structure appears to be a technical condition required for the proof (see Section \ref{sec:prelim} for details).  While Theorem \ref{th2} does not settle the classification of {\it regular} solutions, it greatly simplifies the problem. 
 In particular, it reduces it to a regularity analysis of the axes that requires two conditions to be met: (i) the metric components must be smooth and even functions of $\rho$ up to the axes, (ii) there are no conical singularities at the inner axis rods (it has  been shown that there are no conical singularities at the two semi-infinite axis rods~\cite{Khuri:2018udf}).  

 In contrast to the $D=4$ case, it is known there are a number of regular solutions with nontrivial rod structure. In addition to the black ring, several remarkable multi-black hole solutions have been constructed: the black Saturn~\cite{Elvang:2007rd} --  an equilibrium configuration comprising of a black hole surrounded by a black ring -- and various double-black ring configurations~\cite{Iguchi:2007is, Izumi:2007qx, Elvang:2007hs}. These were constructed using the inverse scattering method of Belinski-Zakharov (BZ) which is  based on their spectral equations~\cite{BZ1, BZ2}.  Notably, however, the existence of a regular vacuum black lens, i.e. a black hole with lens space horizon topology, has remained an open problem. Several attempts at constructing such solutions have been made, again using the BZ method~\cite{Evslin:2008gx, Chen:2008fa, Tomizawa:2019acu}. These have all resulted in singular solutions, the mildest being a conical singularity on an inner axis rod. Unfortunately, the BZ method is not fully systematic and requires guesswork at various stages and therefore does not necessarily reproduce the most general solution for a given rod structure (cf. the  $D=4$  multi-Kerr black hole discussed above). Therefore,  these works cannot be taken as proof of  the nonexistence of regular vacuum black lenses.

The purpose of this paper is to use the spectral equation of BZ to {\it systematically} investigate all possible solutions for any given rod structure.  In particular, we explicitly integrate the BZ spectral equations along the axes and horizons, and around infinity. Then, using this we show that one can determine the metric everywhere on the axes and horizons for any given rod structure purely algebraically. Our main result can be summarised  as follows (see Theorem \ref{th:main} for a precise statement):

\begin{theorem}
\label{th3}
Consider a $D=4,5$ stationary vacuum spacetime as in Theorem \ref{th1}.  
On every component of the axis and horizon, the general solution for the metric components and the associated Ernst or twist potentials are rational functions of $z$.  These functions are explicitly determined in terms of the rod structure, horizon angular momenta, horizon angular velocities and certain gravitational fluxes, which are subject to a set of nonlinear algebraic constraints.
\end{theorem}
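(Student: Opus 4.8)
The plan is to integrate the Belinski--Zakharov spectral equations along the boundary of the orbit space and then propagate the resulting data around infinity. Recall that the Einstein equations reduce to the principal chiral (harmonic map) equations for the symmetric $(D-2)\times(D-2)$ matrix $g(\rho,z)$ with $\det g = -\rho^2$, together with the associated twist potentials (for $D=4$, the Ernst potential $\mathcal E = f + i\chi$), and that these are the integrability condition for a linear system for a generating matrix $\Psi(\lambda,\rho,z)$ normalised by $\Psi(0,\rho,z) = g(\rho,z)$. Since the entire orbit-space geometry is encoded in $\Psi$, and the $\lambda$-dependence of the linear system trivialises on the boundary, the strategy is to solve the system there exactly.

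First I would take $\rho\to 0$. On the interior of each rod one works along the Belinski--Zakharov pole trajectories $\mu_\pm(\rho,z)$, the two roots of $\mu^2 + 2(z-z_k)\mu - \rho^2 = 0$ anchored at a turning point $z_k$; along these the linear system reduces to ordinary differential equations in $z$, and as $\rho\to0$ one root vanishes while the other tends to $-2(z-z_k)$, so the surviving equations have coefficients that are rational in $z$ with simple poles only at the turning points $z_1<\dots<z_n$. Imposing the rod data then selects the relevant solution --- on an axis rod $g$ has the prescribed null eigenvector (the rod vector $v$, normalised so that the corresponding axial field has period $2\pi$), while on a horizon rod the appropriate combination of the stationary and axial Killing fields degenerates --- and one reads off the metric components and twist potentials on each rod. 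I expect this to be the main obstacle: showing that \emph{every} solution with these boundary conditions --- not only the solitonic ones produced by the inverse-scattering construction --- has boundary data that is rational in $z$ with poles confined to the turning points. This requires controlling the analytic structure of $\Psi$ near the axis sharply enough to exclude branch cuts, logarithms and essential singularities, and it is here that the constraints $\det g = -\rho^2$, symmetry of $g$, and non-degeneracy of the horizons have to be used together.

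Next I would glue the rods. At each turning point, continuity of $g$ and regularity of the geometry --- non-degeneracy of the horizons, and smoothness and evenness in $\rho$ at the axes --- relate the integration constants of adjacent rods, and $\det g = -\rho^2$ together with $g^T = g$ fix further constants. Integrating the boundary system around a large semicircle and imposing asymptotic flatness fixes the overall normalisation and identifies the ADM mass and the angular momenta as boundary data, while on horizon rods the horizon angular velocities and angular momenta enter as independent parameters. The remaining free integration constants are the gravitational fluxes --- integrals of the twist potentials and of the components of $g$ along the boundary, which encode the jumps across rods and the $2$-cycles of the domain of outer communication.

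Assembling these pieces then shows that on the whole boundary the metric components and the Ernst or twist potentials are piecewise-rational functions of $z$, explicitly determined by the rod structure, the horizon angular velocities, the horizon angular momenta and the fluxes; and the requirement that this piecewise-rational boundary data come from a single globally consistent $\Psi$ --- equivalently, that the matching conditions hold at every turning point --- is precisely a finite system of nonlinear algebraic constraints on that data. This establishes the theorem.
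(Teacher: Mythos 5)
There is a genuine gap: your proposal has no mechanism that actually \emph{determines} the metric and potentials on the boundary. Taking $\rho\to 0$ in the linear system does reduce it to an ODE in $z$ on each rod, but that ODE has the \emph{unknown} boundary data as its coefficients (the limit $\mathring{U}$ of $\rho\,\partial_\rho g\, g^{-1}$), so its general solution on a rod is of the form $X_a(z,k)M_a(k)$ where $X_a$ contains the unknown Ernst/twist potential and $M_a(k)$ is an arbitrary matrix of $k$. Nothing can be ``read off'' at this stage, and no amount of controlling the analytic structure of $\Psi$ in $z$ (excluding branch cuts, logarithms, etc., as you propose) will close the system: you would merely be re-expressing the unknown data in terms of itself. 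The step you flag as ``the main obstacle'' is therefore misdiagnosed, and the route you sketch for overcoming it would fail.

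The missing idea is the involution symmetry of the spectral problem on the two-sheeted Riemann surface in the spectral parameter: writing $\mu_\pm$ for the two roots, one has $\Psi_\pm = g\,\Psi_\mp^{T-1}B(k)$ with $B(k)$ a symmetric matrix independent of $(\rho,z)$. Imposing continuity of $\Psi_+$ and $\Psi_-$ at the branch points $k=z\pm i\rho$ and letting $\rho\to 0$ turns this into a purely algebraic relation $\mathring{g}(z)=X_a(z,z)F_a(z)$ on each rod, in which the arbitrary matrices $M_a(k)$ have cancelled out and $F_a$ is an explicit finite product of transition matrices $P_a$ (fixed by the rod structure, fluxes, angular momenta and velocities) together with a constant matrix $C$ from the matching at infinity. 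Rationality in $z$ is then manifest, because $F_a(k)$ is rational in $k$ by construction; and the nonlinear algebraic constraints arise from the requirement that $F_a(k)$ be symmetric (inherited from the symmetry of $B$), supplemented by the endpoint consistency conditions on the potentials --- not, as you suggest, from the matching conditions at the turning points, which are automatically satisfiable by adjusting the $M_a$. Your gluing and asymptotic-matching steps are in the right spirit and do appear in the argument, but without the involution symmetry and the branch-point continuity condition the proof cannot be completed along the lines you describe.
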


Thus the solution depends on a number of continuous parameters which are geometrically defined: the rod lengths,  the angular momenta and angular velocities of each horizon, and certain gravitational fluxes. The gravitational fluxes are invariants associated to each finite axis rod. In the spacetime the finite axis rods correspond to noncontractible $(D-3)$-cycles and the fluxes are integrals of certain closed $(D-3)$-forms constructed from the Killing fields. For every axis rod one can define an associated Ernst potential from the Killing fields which are nonzero on that rod. The change in  Ernst potential across the associated rod is then precisely the gravitational flux through the corresponding 2-cycle. It is worth noting that similar gravitational fluxes arise in the recently found thermodynamic identities for $D=5$ black holes in this class~\cite{Kunduri:2018qqt}. 

As mentioned in  our theorem, the parameters in the general solution must obey certain nonlinear algebraic equations. These arise from integrating the BZ spectral equations along the $z$-axis and around the `semi-circle' at infinity. Furthermore, imposing  the metric is free of conical singularities on the axes and horizons typically imposes further constraints on the parameters. Thus we are able to address part (ii) of the regularity problem left open by Theorem \ref{th2}. Hence, our method is particularly useful for ruling out regular solutions with a prescribed rod structure. For example, one can prove that a $D=5$ solution with no horizon and one finite axis rod must be conically singular at the finite axis rod; this is of course guaranteed by the no-soliton theorem for vacuum solutions (even without biaxial symmetry), although it illustrates that our method is capable of showing that certain rod structures must lead to conically singular solutions.

In principle, using our theorem one can determine the full moduli space of regular black holes in this class, up to the regularity problem (i).  Indeed, specialising Theorem \ref{th3} to the rod structures of the known nonextremal black holes with connected horizons allows us to show that the full moduli space of regular solutions coincides with that of the Kerr black holes, the Myers-Perry black holes and the known doubly spinning black rings.  For the latter case, this proves that the Pomerasky-Sen'kov doubly spinning black ring~\cite{Pomeransky:2006bd} is indeed the most general solution with that rod structure, a fact which does not seem to have been addressed in the previous literature. In a subsequent paper we will apply our method to investigate the (non)existence of new types of regular black hole solutions, most notably a black lens; in the Discussion we present our preliminary findings.

Our method may be thought of as a higher-dimensional analogue of the $D=4$ methods of Varzugin~\cite{Varzugin:1997ee,Varzugin:1998wf} and Meinel and Neugebauer~\cite{Neugebauer:2003qe}, which both lead to simple constructive uniqueness proofs for Kerr.  In particular, Varzugin integrated the  BZ spectral equations along the axis and horizons and used this to show that the $N$-black hole solution is contained in the $2N$-soliton solution of BZ~\cite{BZ2}. We also integrate the BZ spectral equations along the boundaries, although our analysis of its solution differs, and we give a simple method to extract the spacetime metric, so even for $D=4$ it offers an alternative approach. On the other hand, Meinel and Neugenbauer integrated a different spectral equation along the axis, whose integrability condition gives the Ernst equations, and used this to determine the Ernst potential on the axis.  It would be interesting to investigate the precise relationship between these various methods.

This paper is organised as follows. In Section \ref{sec:stataxis} we recall well-known properties of  stationary vacuum spacetimes with $D-3$ commuting axial Killing fields and introduce various Ernst potentials which will feature later (this section also serves to set our notation). In  Section \ref{sec:lax} we derive the general solution to the BZ spectral equations on the axes, horizons and around infinity, and use this to construct the general solution to the Einstein equations on the axes and horizons. In Section \ref{sec:4d} we specialise to $D=4$: we compute the asymptotic charges of the general solution and derive the moduli space of Kerr black holes. In Section \ref{sec:5d} we specialise to $D=5$: we compute the asymptotic charges of the general solution and derive the moduli spaces of the Myers-Perry black hole and the doubly spinning black ring. In Section \ref{sec:disc} we discuss our results and future work. We relegate various results to the Appendix.

\section{Stationary spacetimes with $D-3$ axial Killing fields}
\label{sec:stataxis}

\subsection{Einstein equations and rod structure}
\label{sec:prelim}

Let $(M, \mathbf{g})$ be a $D$-dimensional stationary spacetime with $D-3$ commuting axial Killing vector fields that also commute with the stationary Killing field. We denote the stationary Killing field $k$ and the remaining $D-3$ axial Killing fields $m_i, i = 1, \dots, D-3$, and assume these generate an isometry group $G=\mathbb{R}\times U(1)^{D-3}$. We define coordinates $(t, \phi^i)$ adapted to  the stationary and axial symmetries, so $k=\partial_t$ and $m_i= \partial_{\phi^i}$, and choose $m_i$ to be generators with $2\pi$-periodic orbits, i.e. the angles $\phi^i$ are $2\pi$-periodic. We also assume that there is at least one point in spacetime that is a fixed point of the axial symmetry (as is the case for asymptotically flat spacetimes).

As is well known, under such assumptions the spacetime metric can be written in Weyl-Papapetrou coordinates~\cite{Emparan:2001wk, Harmark:2004rm}
\be
\mathbf{g} = g_{AB}(\rho,z) \td x^A \td x^B + e^{2\nu(\rho,z)}(\td \rho^2+ \td z^2), \label{WP}
\ee
where $A \in \{ 0, 1, \dots, D-3 \}$, $\partial_A$ are the Killing fields and 
\be
\det g_{AB} = -\rho^2.  \label{detg}
\ee  
The vacuum Einstein equations reduce to
\be
\partial_\rho U+ \partial_z V=0,  \label{einsteineq}
\ee
where 
\be
U= \rho \partial_\rho g g^{-1}, \qquad V= \rho \partial_z g g^{-1}  \label{UV}
\ee  
and the conformal factor, $e^{2 \nu}$, is then determined by
\be
\partial_\rho \nu  =-\frac{1}{2\rho}+ \frac{1}{8\rho}\text{Tr}(U^2-V^2),\,  \qquad \partial_z \nu = \frac{1}{4\rho} \text{Tr}\, UV \; .  \label{conf}
\ee
Indeed, the integrability condition for (\ref{conf}) is (\ref{einsteineq}).

We now turn to  global assumptions. We will restrict to asymptotically flat spacetimes, i.e., asymptotically Minkowski such that $k=\partial_t$ in the standard Cartesian chart. This is only compatible with our symmetry assumption if $D=4,5$, which we assume throughout.  In fact, it is necessary to make a number of further global assumptions (see review~\cite{Hollands:2012xy}). In particular we assume: (i) there exists a spacelike hypersurface $\Sigma$ with asymptotically flat end that intersects the event horizon (if there is one) on a compact cross-section $H$; (ii) the stationary Killing field $k$ is complete; (iii) the domain of outer communication (DOC) is globally hyperbolic; (iv) the horizon is non-degenerate.  
   
Under these assumptions, a number of global results have been derived~\cite{Hollands:2007aj, Hollands:2008fm, Hollands:2012xy}. In particular, it has been shown that Weyl-Papapetrou coordinates (\ref{WP}) provide a global chart in the DOC away from the horizon and axes of symmetry. Furthermore,  the orbit space of $M$ under the isometry group $\hat{M}= M/G\cong \Sigma/U(1)^{D-3}$, is a 2d simply connected manifold with boundaries and corners, which may therefore be identified with the  half-plane 
\be
\hat{M}= \{(\rho, z)\,  | \, \rho>0 \} \; .  \label{HP}
\ee
The boundary of the orbit space $\rho=0$ corresponds to the $z$-axis and this splits into intervals, called rods,  $(-\infty, z_1), (z_1, z_2), \dots, (z_n, \infty)$, with $z_1<z_2<\dots <z_n$, each of which corresponds to a connected component of the horizon orbit space $\hat{H}=H/U(1)^{D-3}$, or an axis where an integer linear combination of the axial Killing fields -- called the rod vector -- vanishes. For $D=4$ there is only one axial Killing field and therefore only one type of axis rod.  The endpoints of the rods  $z_a, a=1, \dots, n$, correspond to the corners of the orbit space, each of which corresponds   to where an axes intersects a horizon, or for $D>4$, a fixed point of the $U(1)^{D-3}$-action (i.e. $m_i=0$ for all $i=1, \dots, D-3$, which occurs precisely where two axes intersect). 

 Let us denote the rods by $I_a$,  for $a= 1, \dots, n+1$, and the length of the finite rods  $I_a=(z_{a-1}, z_a)$  by  $\ell_a=z_a-z_{a-1}$ for $a=2, \dots, n$. Given any axis rod $I_a$ the corresponding  rod vector takes the form
 \be
 v_a=v_{a}^i m_i  \label{vaxis}
 \ee
where $(v_{a}^{i})_{i=1, \dots, D-3}$ are coprime integers.  If $D=5$, for any adjacent axis rods $I_a$ and $I_{a+1}$ separated by the corner $z=z_a$ the associated rod vectors must satisfy the condition
\be
\det \left(\begin{array}{cc} v^1_a & v_a^2 \\ v^1_{a+1} & v^2_{a+1}\end{array} \right) = \pm 1  \; .   \label{admissible}
\ee
Following~\cite{Khuri:2017xsc}, we will call any rod structure satisfying (\ref{admissible}) {\it admissible}.  We will denote the union of all axis rods by $\hat{A}$ and all horizon rods by $\hat{H}$.
 The collection of all this data  
 \be
 \{ (\ell_a, v_a) \, |  \, I_a \subset \hat{A}\} \cup \{ \ell_a \, | \, I_a\subset \hat{H} \}\label{rodstruct}
 \ee
 is known as the {\it rod structure}. We will often denote the semi-infinite axis rods by $I_L= I_1= (-\infty, z_1)$ and $I_R= I_{n+1}= (z_n, \infty)$. For definiteness, in the $D=5$ case we will choose the $m_i$ such that $m_2=0$ on $I_L$ and $m_1=0$ on $I_R$, i.e., the rod vectors $v_L= (0,1)$ and $v_R=(1,0)$ relative to the basis $(m_1, m_2)$.
 
 For $D=5$ any finite axis rod $I_a$ lifts to a 2-cycle in the spacetime. Explicitly this is given by the surface $C_a$ obtained from the fibration of the nonzero $U(1)$ Killing field $u_a=u_a^i m_i$ over the closure of $I_a$ (recall $v_a=0$ on $I_a$). If the adjacent rods  are both axis rods then $u_a$ must vanish at the endpoints of $I_a$ and $C_a$ has the topology of $S^2$;  if only one adjacent rod is an axis rod (and hence the other a horizon) then $u_a$ only vanishes at the corresponding endpoint so $C_a$ is topologically a 2-disc; finally if both adjacent rods are horizon rods then $u_a$ does not vanish at either endpoint and  $C_a$ is topologically a cylinder. 
 
Another important set of invariants for such solutions are the Komar angular momenta of each connected component of the horizon $H_a$ defined by
\be
J^a_i = \frac{1}{16\pi} \int_{H_a}  \star \td m_i,
\ee
where we fix the orientation $\epsilon_{0 1\dots D-3 \rho z}>0$.
From a standard argument, invoking Stokes' theorem and the Einstein equation, these are related to the total angular momenta of the spacetime $J_i=\sum_{a} J_i^a$.
Due to the assumed symmetry these can be reduced to integrals over the horizon rods using 
\be
\int_{H_a} \star \alpha =(2\pi)^{D-3} \int_{I_a} \star (m_1 \wedge \dots \wedge m_{D-3} \wedge \alpha)   \label{Hint}
\ee  where $\alpha$ is any $U(1)^{D-3}$-invariant 2-form. This gives
\be
J_i^a= \frac{1}{8} (2\pi)^{D-4} (\chi_i(z_{a})-\chi_i(z_{a-1}))  \; ,   \label{Jhorizon}
\ee
where $\chi_i$ are the twist potentials defined by 
\be
\td \chi_i =  \star ( m_1\wedge \dots m_{D-3}  \wedge \td m_i)  \; .  \label{twist}
\ee
The existence of globally defined twist potentials follows from the fact the vacuum Einstein equations imply the RHS of (\ref{twist}) is a closed 1-form and that under our assumptions the DOC is simply connected. Observe that the twist potentials are constant on any axis rod. Therefore, they can only vary across a horizon rod and the above shows that the change in twist potential across any horizon rod is precisely the angular momenta of the corresponding horizon in spacetime.
 
 We now are now in a position to consider the uniqueness and existence theorems mentioned in the Introduction in more detail. Theorem \ref{th1} guarantees that there is at most one solution for a given rod structure (\ref{rodstruct}) and horizon angular momenta (\ref{Jhorizon}).   However, as highlighted in the Introduction, the main limitation of this theorem is that it does not address the crucial question of existence: for what rod structure and angular momenta do there exist regular black hole solutions? This is not an issue for $D=4$ as the uniqueness theorem reduces to the classic no-hair theorem for the Kerr black holes (although for multi-black holes this is largely open, as explained in the  Introduction).
  
 However, for $D=5$ the uniqueness theorem is less powerful as even for a connected horizon an arbitrary number of axis rods are  allowed in principle.  To this end, Theorem \ref{th2} has been recently established, which guarantees the existence of a solution for any admissible rod structure that obeys the following {\it compatibility} condition:  if there are three consecutive axis rods $I_{a-1}, I_a, I_{a+1}$, then the compatibility condition states that
 \be
 v_{a-1}^1 v_{a+1}^1\leq 0  \;  , \label{comp}
 \ee
 whenever  the admissibility condition (\ref{admissible}) between the pairs $I_{a-1}, I_a$ and $I_a, I_{a+1}$ are obeyed with positive determinant.
As explained in the Introduction, this theorem guarantees the solution is regular in the DOC away from the axes. Therefore it does not address regularity of the solution on the axes, which generically will possess conical singularities on the finite axis rods. 
 It is instructive to consider certain special cases of Theorem \ref{th1} and \ref{th2}.
 
  First, consider the case of no horizon.  Then, it is well known from the no-soliton theorem that the only regular solutions in this class of spacetimes  is Minkowski spacetime (indeed, this result does not even assume biaxial symmetry). Hence, it must be that the only regular solution with the same rod structure as Minkowski spacetime is Minkowski spacetime itself. Furthermore, any solution with non-Minkowski rod structure  must be singular on some component of the axis. For example,  consider the Eguchi-Hanson soliton
  \be
\td s^2_{\text{EH}} = -\td t^2+  \frac{\td  R^2}{1-\frac{a^4}{R^4}}+ \tfrac{1}{4} R^2 \left( 1- \frac{a^4}{R^4} \right) (\td \psi+ \cos\theta \td \phi)^2+ \tfrac{1}{4} R^2 (\td \theta^2+ \sin^2\theta \td \phi^2)  \; ,\label{EH}
\ee
where $R \geq a$.
 As is well known, if $\psi$ is periodically identified with period $2\pi$ this gives a smooth metric with a bolt at $R=a$ which is asymptotically locally Euclidean with $S^3/\mathbb{Z}_2$ topology for large $R$.  However, if instead we take $(\theta, \psi, \phi)$ to be Euler angles on $S^3$, we get an asymptotically Minkowski spacetime, except now with a conical singularity at the bolt. This example then gives a nontrivial rod structure with one finite axis rod corresponding to the bolt $R=a$ separating the two semi-infinite rods. In particular, relative to the basis $(m_1, m_2)$ introduced above, the rod vectors are $v_L=(0,1)$, $v_B=(1,1)$ and $v_R=(1,0)$, where $v_B$ is the rod vector on the bolt, thus giving an admissible rod structure (\ref{admissible}). It is a one parameter family of solutions, where the parameter can be taken to be length of the axis rod, in line with the above theorems (since there is no horizon the only moduli are the rod lengths). One might wonder whether the more general Gibbons-Hawking solitons similarly give solutions with multiple axis rods in Theorem \ref{th2}. In Appendix \ref{apen:GH} we show that in fact these do not possess an admissible rod structure (instead they possess orbifold singularities at the corners $z_2, \dots, z_{n-1}$ and thus correspond to solutions of a different theorem in~\cite{Khuri:2017xsc}).

 Now consider black hole solutions with a single horizon.  First, suppose that the angular momenta $J_i=0$.  Then it can be shown that the solution must be static~\cite{Figueras:2009ci} and hence by the static uniqueness theorem the solution must be the Schwarzschild black hole~\cite{Gibbons:2002bh}.  This implies that any regular solution in this class must have the same rod structure as Schwarzschild, i.e. one horizon rod separating the two semi-infinite axis rods. In other words, any solution with a single horizon, $J_i=0$ and finite axis rods, must be conically singular on the axis rods. This shows that for single black holes, not all rod structures and angular momenta lead to regular solutions.
   
 Next, consider the rod structure of the Myers-Perry solution: a single horizon rod and two semi-infinite axis rods (i.e. this is the same as that of Schwarzschild). Then, since the Myers-Perry solution realises all possible data $\ell_H>0$ and $(J_1, J_2) \in \mathbb{R}^2$ it is the only solution in this class. A self-contained proof of this was given in earlier work~\cite{Morisawa:2004tc}. This case is analogous to the Kerr solution in four dimensions.
 
 For a black ring something more interesting happens. Consider the rod structure of the known black ring, i.e. one horizon rod and one finite axis rod (and two semi-infinite axis rods). In this case there are four parameters in the uniqueness theorem, namely the horizon and finite axis rod lengths $\ell_H, \ell_A$ and  the angular momenta $J_i$.  However, the most general known regular black ring solution is the three parameter doubly spinning solution~\cite{Pomeransky:2006bd}.  Thus, in this case the known regular solutions do not occupy all parts of the possible parameter space. A way to understand this is that generically one has a conical singularity at the finite axis rod and its removal imposes a constraint on the four available parameters thus leaving a three parameter subset. Nevertheless, this raises the question: are there other regular black rings which occupy different parts of the possible moduli space? A definitive answer requires constructing the most general solution with such a rod structure. We answer this question in the negative in this work.\footnote{In fact, a four parameter family of `unbalanced' doubly spinning back rings has been constructed~\cite{Chen:2011jb}, i.e., these suffer from a conical singularity at the axis rod.  It is possible that these do fill out the whole moduli space, although as far as we aware this has not been checked in the literature. If so, then by the uniqueness theorem this would have to be the general solution and hence the known three-parameter family of black rings would have to be the most general regular solution with this rod structure.}
  
Remarkably,  regular multi-black hole solutions do exist in five dimensions. The first such example constructed was the black Saturn, an equilibrium configuration of a spherical black hole surrounded by a black ring that is balanced by angular momentum~\cite{Elvang:2007rd}. This solution is a four parameter family corresponding to the horizon rod lengths and one angular momentum for each black hole (the rod length of the finite axis rod between the black holes is fixed by removing the associated conical singularity). There should be a more general six-parameter family where both the spherical black hole and black ring are doubly spinning which is yet to be constructed.  Similarly, regular four-parameter multi-black rings have been constructed: di-rings are concentric rings rotating in the same plane~\cite{Iguchi:2007is}, and bi-rings rotate in orthogonal planes~\cite{Izumi:2007qx, Elvang:2007hs}. Again, these should be part of a more general six-parameter family of two doubly spinning black rings that remains to be constructed.

Let us now consider the dimension of the moduli space of solutions in the above theorems.  Suppose we have $h$ horizon rods and $a$ finite axis rods. Then, counting the number of continuous parameters appearing in the above theorems (i.e. $h+a$ rod lengths and $(D-3)h$ horizon angular momenta), shows the dimension of the moduli space of solutions with $h$ horizons and $a$ finite axis rods that are potentially  singular on the axis is,
\be
\text{dim} \;  \mathcal{M}_{\text{sing}}^{h,a} = (D-2)h+a \;  \label{dimmodsp}.
\ee
From experience with the known solutions one expects that removal of the conical singularities on each finite axis rod reduces the number of parameters by one, thus reducing the total by $a$.  Hence, a natural conjecture, which agrees with the known solutions, is that provided regular solutions actually exist, the dimension of the moduli space of regular solutions  $\mathcal{M}_\text{reg}^{h,a}$  with $h$ horizon rods and $a$ finite axis rods is simply
\be
\text{dim}\, \mathcal{M}_{\text{reg}}^{h,a} \overset{?}{=} (D-2)h \; .
\ee

\subsection{Geometry of axes and horizons}

In this section we write down a general form for the metric near $\rho=0$, i.e., near any axis or horizon, which will be useful for our purposes. The analysis of the geometry near an axis  and near a horizon is very similar, although for clarity of presentation we will use different notations for the metric in these two cases.  Most of the material in this section is  well-known. In Appendix \ref{app:corners} we also include a regularity analysis at the corners of the orbit space which is perhaps less well-known.

\subsubsection{Axes}
\label{sec:axes}

First consider an axis rod $I_a$.  For simplicity of notation we temporarily drop the labelling of each rod.  It is convenient to introduce an adapted basis for the $D-2$ commuting Killing fields $\tilde{E}_A=(e_\mu, v)$ where $\mu=0, \dots D-4$ and $v=v^im_i$ is the rod vector corresponding to $I_a$. For $D=4$ we simply take $e_0=k$. For $D=5$ we take $e_\mu=(k,u)$ where $u$ is an axial Killing field 
\be
u= u^im_i, \qquad \text{such that} \qquad A=\left( \begin{array}{cc} u^1 & u^2 \\ v^1 & v^2  \end{array} \right) \in GL(2, \mathbb{Z}) \; ,   \label{uAdef}
\ee
i.e. $(u, v)$ are $2\pi$-periodic generators of the $U(1)^2$-action.  It is worth emphasising that $u$ is defined only up to an additive integer multiple of the rod vector $v$. Then,  relative to the adapted basis the metric on the orbits of the isometry can be written as
\be
\tilde{g} = \left( \begin{array}{cc} h_{\mu\nu} - \rho^2 h^{-1}w_\mu w_\nu & h^{-1} \rho^2 w_{\mu} \\  h^{-1} \rho^2 w_{\nu} & - h^{-1} \rho^2  \end{array} \right)  \; .  \label{gaxisadapted}
\ee
Note that the normalisation (\ref{detg}) is automatically imposed in this basis. Here,  $h_{\mu\nu}$ is an invertible  $(D-3)\times (D-3)$ matrix and its determinant $h = \det h_{\mu\nu}<0$.      A regular axis requires $h_{\mu\nu}, w_\mu$ to be smooth functions of $(\rho^2, z)$ and 
\be
\lim_{\rho \to 0, \;  z\in  I_a} \frac{\rho^2 e^{2\nu}}{| v |^2}=  1  \label{smoothaxis}  \; .
\ee
This ensures the absence of a conical singularity at $I_a$~\cite{Harmark:2004rm}. 

The inverse metric in this adapted basis is
\be
\tilde{g}^{-1} = \left( \begin{array}{cc}  h^{\mu\nu} & w^\mu \\ w^\nu & - h \rho^{-2} + w^\rho w_\rho \end{array} \right)
\ee
where $h^{\mu\nu}$ is the inverse matrix of $h_{\mu\nu}$ and $w^\mu= h^{\mu\nu} w_\nu$. The requirement of a smooth axis  implies the following limits exist
\be
\mathring{U} = \lim_{\rho \to 0} U  \; ,\qquad \mathring{V}= \lim_{\rho\to 0} \frac{V}{\rho} \; ,\label{U0V0}
\ee
where here and throughout we denote quantities evaluated in the limit $\rho \to 0$ by a circle above. Explicitly, relative to the adapted basis we find
\be
\mathring{\tilde{U}} = \left( \begin{array}{cc} 0 &- 2w_\mu \\ 0 & 2  \end{array} \right)  \; , \qquad   \label{UVtilde0}
 \mathring{\tilde{V}}= \left( \begin{array}{cc}   (\partial_z h_{\mu\rho} ) h^{\rho \nu}  & - h h_{\mu\nu} \partial_z(h^{-1} w^\nu)  \\  0  &    - (h^{-1} \partial_z h ) \end{array} \right)   
\ee
where here, and in what follows, all quantities on the RHS are understood to be evaluated at $\rho=0$. Taking the $\rho \to 0$ limit of the second equation in  (\ref{conf}) it follows that the conformal factor  on the axis obeys
\be
\partial_z \mathring{\nu} = - \frac{\partial_z h }{2 h}
\ee
which integrates to
\be
e^{2\mathring{\nu}}= -\frac{c^2}{h}  \label{nu0}
\ee
where $c$ is a constant.

Collecting these results, we deduce that the metric induced on the axis component associated to $I_a$ is
\be
\mathbf{g}_a= -\frac{c_a^2 \td z^2}{h^a(z)} + h^a_{\mu\nu}(z) \td x^\mu \td x^\nu  \; , \label{axisrodmetric}
\ee
where $x^\mu$ are adapted coordinates so that $e_\mu= \partial_\mu$, $\mu=0,1$, and we have reinstated the rod labels.  This is a $(D-2)$-dimensional smooth Lorentzian metric for $z \in I_a$. The condition for the absence of a conical singularity in the spacetime at $I_a$ (\ref{smoothaxis})  is
\be
c_a=1 \; , \label{balance}
\ee
which is sometimes referred to as the equilibrium or balance condition.

For $D=5$ one or both of the adjacent rods to $I_a$ may be another axis rod (for $D=4$ it must be the case that any adjacent rod is a horizon rod). If $I_{a+1}$ is another axis rod then $u= \partial/\partial x^1=0$ at $z=z_a$ and the above metric will have a conical singularity at this endpoint unless
\be
\frac{{h^a}'(z_a)^2}{h^a_{00}(z_a)}=-4c_a^2 \; , \label{regupper}
\ee
in which case the metric extends smoothly at this point. Note that we used ${h^a}'(z_a)= h^a_{00}(z_a){h^a_{11}}'(z_a)$ to simplify the above expression which in turn comes from $h_{\mu1}^a(z_a)  = 0 $ and smoothness. Similarly, if $I_{a-1}$ is an axis rod then $u= \partial/\partial x^1=0$ at $z=z_{a-1}$ and the above metric extends smoothly at this endpoint iff
\be
\frac{{h^a}'(z_{a-1})^2}{ h^a_{00}(z_{a-1})}=-4c_a^2 \; . \label{reglower}
\ee
Therefore,  if $I_a$ is a finite axis rod and provided these regularity conditions are met, the axis metric extends to a smooth Lorentzian metric on $\mathbb{R}\times C_a$. The 2-cycle $C_a$ is topologically $S^2$, a 2-disc or a 2-cylinder depending on if $I_{a-1}, I_a$ are either both axis rods, one axis rod and one horizon, or both horizon rods, respectively. In Appendix \ref{app:corners} we analyse the geometry where two axis rods meet and derive further relations that follow from the above regularity analysis. In particular, we find that for two axis rods $I_a$ and $I_{a+1}$ the function $| z-z_a| e^{2\mathring{\nu}}$ is continuous at $z=z_a$, a result that has been previously proven in~\cite{Tod:2013ska}. 

\subsubsection{Horizons}
\label{sec:horizons}

The analysis of the metric near a horizon is very similar. Consider a component of the horizon, $H_a$, with corresponding rod $I_a$ (again, for simplicity of notation we will temporarily drop the labelling of each rod).  The Killing field null on the horizon is 
\be
\xi= k+ \Omega_i m_i \; , \label{corotKVF}
\ee
where $\Omega_i$ are the angular velocities of the black hole.  Now, working in an adapted basis  for the $D-2$ commuting Killing fields, $\tilde{E}_A= (m_i, \xi)$, the metric can be written as
\be
\tilde{g} = \left( \begin{array}{cc} \gamma_{ij} - \rho^2 \gamma^{-1}\omega_i \omega_j & \gamma^{-1} \rho^2 \omega_{i} \\  \gamma^{-1} \rho^2 \omega_{j} & - \gamma^{-1} \rho^2  \end{array} \right)   \label{ghoradapted}  \; ,
\ee
where $\gamma_{ij}$ is an  an invertible  $(D-3)\times (D-3)$ positive definite matrix with determinant  $\gamma = \det \gamma_{ij}$ (again the normalisation (\ref{detg}) is automatically imposed in this basis). A regular non-degenerate horizon requires $\omega_i, \gamma_{ij}$ to be smooth functions of $(\rho^2, z)$ and
\be
\lim_{\rho \to 0, \;  z\in  I_a} \frac{\rho^2 e^{2\nu}}{| \xi |^2}=  -\frac{1}{\kappa^2}  \label{smoothhor}  \; ,
\ee
where $\kappa \neq 0$ is the surface gravity~\cite{Harmark:2004rm}.

The analysis of the metric induced on the horizon proceeds in an essentially identical fashion to the axis metric analysis above. The inverse metric in this adapted basis is
\be
\tilde{g}^{-1} = \left( \begin{array}{cc}  \gamma^{ij} & \omega^i \\ \omega^j & - \gamma \rho^{-2} + \omega^i \omega_i \end{array} \right)
\ee
where $\gamma^{ij}$ is the inverse matrix of $\gamma_{ij}$ and $\omega^i=\gamma^{ij} \omega_j$. The requirement of a smooth horizon then implies the limits (\ref{U0V0}) exist, which relative to the adapted basis are
\be
\mathring{\tilde{U}} = \left( \begin{array}{cc} 0 &- 2\omega_i \\ 0 & 2  \end{array} \right)  \; , \qquad 
 \mathring{\tilde{V}}= \left( \begin{array}{cc}   (\partial_z \gamma_{i k} ) \gamma^{k j}  & - \gamma \gamma_{ij} \partial_z(\gamma^{-1} \omega^j)  \\  0  &    - (\gamma^{-1} \partial_z \gamma ) \end{array} \right)  \; .  \label{U0V0hor}
\ee
Then the second equation in  (\ref{conf})  integrates to
\be
e^{2\mathring{\nu}}= \frac{\tilde{c}^2}{\gamma}  \label{nu0hor}
\ee
where $\tilde{c}$ is a constant and imposing the smoothness condition (\ref{smoothhor}) gives
\be
\tilde{c}=\kappa^{-1} \label{chor}  \; .
\ee
We deduce that the metric induced on the horizon component $H_a$ associated to the rod $I_a$ is
\be
\mathbf{g}|_{H_a}= \frac{\td z^2}{\kappa^2\gamma(z)} + \gamma_{ij}(z) \td \phi^i \td \phi^j   \; , \label{hormetric}
\ee
which is a $(D-2)$-dimensional smooth Riemannian metric for $z \in I_a$ (recall the axial Killing fields $m_i= \partial_{\phi_i}$).

Given the metric on a horizon $H_a$, one can determine the surface gravity as follows. In general there are conical singularities in the metric (\ref{hormetric}) at the endpoints $z=z_{a-1}, z_a$ and demanding that they are absent  will fix $\kappa$.  For $D=4$ we have $m=\partial_\phi$ vanishing at each endpoint so the condition for no conical singularities is simply
\be
\kappa= \frac{2}{\gamma'(z_{a-1})} =-\frac{2}{\gamma'(z_{a})} \; .  \label{4dkappa}
\ee
In order to fix the sign we have used the fact that ${\gamma}'(z_{a-1})>0$ and ${\gamma}'(z_a)<0$ (these follow from $\gamma>0$ in the interior of $I_a$). Observe this gives two ways of calculating $\kappa$ and hence in principle can provide a nontrivial constraint on the parameters of the solution.
For $D=5$ the adjacent rods $I_{a-1}$ and $I_{a+1}$ are axis rods with rod vectors $v_{a-1}$ and $v_{a+1}$. 
In particular $v_{a-1}=0$ at $z=z_{a-1}$ and $v_{a+1}=0$ at $z=z_a$, so that the horizon metric has conical singularities at the endpoints of $I_a$. The horizon metric extends to a smooth metric at these end points iff the surface gravity
\be
\kappa^2 = \frac{4}{{\gamma}'(z_{a-1}) {\gamma}'_{ij}(z_{a-1})v^i_{a-1}v^j_{a-1} } =\frac{4}{{\gamma}'(z_{a}) {\gamma}'_{ij}(z_{a})v^i_{a+1}v^j_{a+1} }  \; . \label{5dkappa}
\ee
Therefore, again, in principle this gives two independent expressions for $\kappa$ and hence may provide a constraint on the parameters of the solution.  In  Appendix \ref{app:corners} we obtain further relations for the surface gravity by studying the geometry near where an axis rod meets a horizon rod. Similarly to the analysis of a corner between two axes described in the previous section, we find that if an axis rod and horizon rod meet at $z=z_a$ then $|z-z_a| e^{2\mathring{\nu}}$ is continuous at $z=z_a$.

Using (\ref{hormetric}) one can also compute the area of a cross-section of the horizon  
\be
A = \int_{H_a} \kappa^{-1} \td z \td \phi^1 \cdots \td \phi^{D-3}=  \frac{(2\pi)^{D-3} \ell_a}{\kappa} \; ,   \label{area}
\ee
a relation which has been previously derived~\cite{Hollands:2008fm}.

\subsubsection{Standard basis}

In order to compare the solutions on each rod it is useful  to write them in a common basis of Killing fields.   For definiteness we will take a basis adapted to the  semi-infinite rod $I_L$, i.e. the standard basis $E_A=(k, m_1, \dots, m_{D-3})$.   We can relate the adapted bases $\tilde{E}_A$ associated to each rod $I_a$ to the standard basis by $\tilde{E}_A= (L_a^{-1})_A^{~B} E_B$ where $L_a$ is a change of basis matrix.  The metric $\tilde{g}$ in the adapted basis $\tilde{E}_A$,  relative to the standard basis  is thus
\be
g= L_a \tilde{g} L_a^T \; ,  \label{gstandard}
\ee
where $\tilde{g}$ is given by (\ref{gaxisadapted}) or (\ref{ghoradapted}) for an axis rod or horizon rod respectively.

 If  $I_a$ is a horizon rod then $\tilde{E}_A=(m_i, \xi_a)$ where $\xi_a$ is the corotating Killing field (\ref{corotKVF}) for the component of the horizon $H_a$, so
\be
L_a = \left( \begin{array}{cc}  -\Omega^a_j & 1 \\ \delta_j^{i} & 0 \end{array} \right) \; .   \label{Lhorrod}
\ee
On the other hand, now suppose $I_a$ is an axis rod. In 4d there is of course only one axial Killing field and so there is only one type of axis rod and hence the transformation matrices $L_a$ are the identity matrix for all axis rods. In 5d we take the basis $\tilde{E}_A=(k, u_a, v_a)$, where $(u_a,v_a)$ is a basis of $U(1)^2$ Killing fields such that $v_a$ is the rod vector, which gives
\be
L_a = \left( \begin{array}{cc} 1 & 0 \\ 0 &  A_a^{-1} \end{array} \right)
 \; ,  \label{Laxisrod}
\ee
with $A_a$ a $GL(2, \mathbb{Z})$ matrix given by (\ref{uAdef}). 
In particular, in 5d the right semi-infinite rod $I_R$ has rod vector $v_R= m_1$ and choosing $u_R=m_2$  gives 
\be
L_R=  \left( \begin{array}{ccc} 1 & 0  & 0 \\ 0 &0 & 1 \\ 0 & 1 & 0  \end{array} \right)  \; .  \label{LRtrans}
\ee
It is worth noting that for any horizon and axis rods $\det L_a=\pm 1$. Therefore, using (\ref{gstandard}) we deduce that the normalisation (\ref{detg}) is also obeyed in the standard basis.

\subsection{Ernst potentials and gravitational fluxes}

We will need to introduce the following Ernst potentials $b^a_\mu$ associated to each axis rod $I_a$,
\be
\td b^a_\mu = (-1)^{D-1}\tilde{\star} (e_0 \wedge \dots \wedge e_{D-4} \wedge \td e_\mu) \; ,  \label{ernst}
\ee
where $\tilde{E}_A=(e_\mu, v_a)$ is the adapted basis defined above and we fix an orientation $\tilde{\epsilon}_{0 \cdots  D-3  \rho z}>0$. Therefore $\tilde{\star}= (\det L_a) \star$ where $\star$ is the Hodge dual with respect to the standard orientation (defined above) and $L_a$ is the transformation matrix between the adapted basis and the standard basis. Closure of the 1-form on the RHS of (\ref{ernst}) follows by the vacuum Einstein equations and simple connectedness ensures the potentials are globally defined. Explicitly, in Weyl coordinates we have 
\be
\partial_\rho b^a_\mu = \rho \tilde{g}^{D-3 A} \tilde{g}_{A \mu, z} \; , \qquad \partial_z b^a_\mu =-  \rho \tilde{g}^{D-3 A} \tilde{g}_{A \mu, \rho}   \; .
\ee
From the explicit form of the metric in the adapted basis (\ref{gaxisadapted}) it follows that near each axis rod $I_a$
\be
\partial_z b^a_\mu =  2 w_\mu+ O(\rho) \; ,\qquad \partial_\rho b^a_\mu= O(\rho)   \label{baxis}
\ee
as $\rho \to 0$.

The above Ernst potentials associated to each axis rod depend on the corresponding rod vector. For $D=4$ there is only one type of axis rod and the corresponding Ernst potential is simply
\be
\td b = - \star ( k \wedge \td k)\; .  \label{4dernst}
\ee
For $D=5$,  there are many possible axis rods,  although there are two rods which appear in any asymptotically flat solution: the two semi-infinite axis rods $I_L$ and $I_R $ on which $m_2=0$ and $m_1=0$ respectively. The Ernst potentials (\ref{ernst}) associated to $I_L$ and $I_R$ are
\bea
&&\td b^L_{\mu}= \star ( k \wedge m_1 \wedge \td e^L_{\mu}),  \qquad e^L_{\mu}=(k, m_1) \; ,  \label{bL} \\
&& \td b^R_{\mu}= -\star ( k \wedge m_2 \wedge \td e^R_{\mu}) , \qquad e^R_{\mu}=(k, m_2)  \; . \label{bR}
\eea
where the sign in the latter arises from the transformation  (\ref{LRtrans}) between the adapted basis and the standard basis being orientation reversing, $\det L_R=-1$. 

We will also need similar potentials associated to any horizon rod $I_a$.  We define these analogously to the Ernst potentials (\ref{ernst}). Thus, given our adapted basis for a horizon rod $\Tilde{E}_A=(m_i, \xi)$,  these potentials are precisely the usual twist potentials (\ref{twist}) (observe our choice of orientation in these two formulas is consistent).  Therefore, similarly to the Ernst potentials, we find the twist potentials obey 
\be
\partial_\rho \chi_i = \rho \tilde{g}^{0 A} \tilde{g}_{A i, z} \; , \qquad \partial_z \chi_i =-  \rho \tilde{g}^{0 A} \tilde{g}_{A i, \rho}  
\ee
and using  (\ref{ghoradapted}) we find that near a horizon rod $I_a$ 
\be
\partial_z \chi_i=  2 \omega_i+ O(\rho) \; ,\qquad \partial_\rho \chi_i= O(\rho)   \label{chiaxis}
\ee
as $\rho \to 0$.

As shown earlier, the change in twist potential over a horizon rod is related to the Komar angular momenta of the horizon (\ref{Jhorizon}). Similarly, one can relate the change in the Ernst potentials (\ref{ernst}) across their associated axis rods $I_a$ to  certain gravitational fluxes. For $D=4$ we can define the flux
\be
\mathcal{G}[I_a] =-  \int_{I_a} \star (k \wedge \td k)
\ee
for any finite axis rod. Since the integrand is closed by the vacuum Einstein equations these fluxes may be evaluated over any curve homotopic to $I_a$. Clearly, from (\ref{4dernst}) we deduce
\be
\mathcal{G}[I_a]= b(z_a)-b(z_{a-1}),  \label{4dflux}
\ee
which gives a geometric interpretation to the change in Ernst potential over an axis rod. 

Similarly, for $D=5$, given any finite axis rod $I_a$ we may define gravitational fluxes on the corresponding 2-cycle $C_a$.  Explicitly, for each 2-cycle $C_a$ one can define a set of fluxes
\be
\mathcal{G}_{\mu}[C_a]= \frac{1}{2\pi} \int_{C_a} \tilde{\star} ( e_0 \wedge \td e_\mu),
\ee
where $e_\mu=(k, u_a)$, $\mu=0,1$, is our adaped basis of Killing fields on $C_a$ (recall $\tilde{E}_A= (k ,u_a, v_a)$ is the adapted basis of Killing fields in the full spacetime).  The integrand is closed by the vacuum Einstein equations so one can evaluate these fluxes over any 2-surface homologous to $C_a$ so it only depends on the homology class $[C_a]$.  Thus these fluxes define  gravitational topological charges.  Due to the invariance under the Killing fields these integrals can be reduced to ones over the corresponding axis rods,\footnote{Here we are using the identity $\int_{C_a} \tilde{\star} \alpha =-2\pi \int_{I_a} \tilde{\star} ( u_a \wedge \alpha)$, valid for any $U(1)^2$-invariant 3-form $\alpha$. This also shows that $\int_{C_a} \tilde{\star} (e_1 \wedge \td e_\mu)=0$ so that these quantities do not give rise to new charges. } 
\be
\mathcal{G}_{\mu}[C_a]= \int_{I_a} \tilde{\star} (e_0 \wedge  e_1 \wedge \td e_\mu) =  b_\mu^a(z_a)-b_\mu^a(z_{a-1})  \; , \label{5dflux}
\ee
where we have used the definition of the Ernst potentials (\ref{ernst}).  Thus we see that the fluxes $\mathcal{G}_{\mu}[C_a]$ precisely correspond to the change in the Ernst potential $b^a_\mu(z)$ over the associated axis rod $I_a$ giving it a geometric interpretation. A similar set of topological charges have appeared in recent identities that relate the thermodynamic variables to the topology of solutions in this class~\cite{Kunduri:2018qqt}.

Finally, it is worth noting  that one can also relate the changes in Ernst potentials $b^a_\mu(z)$ (\ref{ernst}) over a horizon rod $I_a$ to the standard thermodynamic quantities. We give these expressions in Appendix \ref{apen:ernstHorizon}.

\section{Integrability of Einstein equations}
\label{sec:lax}

\subsection{Belinski-Zakharov spectral equations}
As shown by Belinski and Zakharov (BZ), the Einstein equations (\ref{einsteineq}) are the integrability conditions for the following auxiliary linear system~\cite{BZ1, BZ2},
\be
D_z \Psi = \frac{\rho V - \mu U}{\mu^2+\rho^2} \Psi, \qquad D_\rho \Psi = \frac{\rho U+ \mu V}{\mu^2+\rho^2} \Psi  \; , \label{BZLP}
\ee
where 
\be
D_z= \partial_z-\frac{2\mu^2}{\mu^2+\rho^2} \partial_\mu, \qquad D_\rho= \partial_\rho+\frac{2\mu \rho}{\mu^2+\rho^2} \partial_\mu
\ee
are commuting differential operators, $\mu$ is a complex `spectral' parameter and $\Psi$ is an invertible $(D-2)\times (D-2)$ complex matrix function of $(\rho, z, \mu)$. 

We will work with a slightly different version of the BZ linear system~\cite{MY, BZM, Varzugin:1997ee}.  This is obtained by a change of spectral parameter defined by the coordinate change $(\rho, z, \mu) \to (\rho, z, k)$ where 
\be
k = z+ \frac{\mu^2-\rho^2}{2\mu} \; ,  \label{k}
\ee
which in particular implies $D_z \to \partial_z, D_\rho\to \partial_\rho$. This results in the linear system 
\be
\partial_z \Psi = \frac{\rho V - \mu U}{\mu^2+\rho^2} \Psi, \qquad \partial_\rho \Psi = \frac{\rho U+ \mu V}{\mu^2+\rho^2} \Psi  \; , \label{LP}
\ee
where $\mu=\mu(k)$ is defined implicitly by (\ref{k}) and $k$ is the new complex spectral parameter. We will assume $\Psi$  is a smooth function of $(\rho, z)$ and meromorphic in $k$ (in a suitable domain).   Henceforth we will work exclusively with this alternate form of the BZ linear system (\ref{LP}). It turns out to be more useful for our purposes since, as (\ref{k}) shows, the spectral parameter $k$ is defined on a two-sheeted Riemann surface.

Independently of (\ref{BZLP}), one can check directly from (\ref{LP})  that $\partial_z\partial_\rho \Psi= \partial_\rho \partial_z \Psi$ iff
\be
\partial_\rho \left(\frac{V}{\rho} \right)- \partial_z\left(\frac{U}{\rho} \right)-  \frac{1}{\rho^{2}}[ U,V ]=0   \label{flat}
\ee
and
\be
\partial_\rho \mu = \frac{2 \rho \mu}{\mu^2+\rho^2} , \qquad \partial_z \mu = - \frac{2\mu^2}{\mu^2+\rho^2}   \label{mupde}
\ee
and the Einstein equations (\ref{einsteineq}) are satisfied. Equation (\ref{flat}) is in fact identically satisfied as it is  the integrability condition for the existence of a matrix $g$ such that (\ref{UV}), whereas the general solution to (\ref{mupde}) is given by (\ref{k}) where $k$ is the integration constant. For some purposes it will be convenient to write the linear system in the equivalent form
\be
(\rho \partial_\rho - \mu\partial_z) \Psi = U \Psi, \qquad  (\mu \partial_\rho +\rho \partial_z) \Psi = V \Psi. \label{LPalt}
\ee
In particular, this form will be useful when evaluating on the boundary of the half-plane.

Although solving for $\Psi$ in general is complicated, it is straightforward to solve for the general form of $\det \Psi$.  Right multiplying (\ref{LP}) by $\Psi^{-1}$ and taking the trace gives
\be
\partial_\rho \det \Psi = \frac{2\rho  \det \Psi}{\mu^2+\rho^2}, \qquad \partial_z \det \Psi = -\frac{2\mu \det\Psi}{\mu^2+\rho^2}\; ,
\ee
where we have used $\text{Tr}\,  U=2$ and $\text{Tr} \,V=0$.
Comparing to (\ref{mupde}) it follows that 
\be
\det \Psi = \mu f(k),   \label{det}
\ee
where $f(k)$ is an arbitrary function of $k$ (i.e independent for $\rho,z$).

As we will take $k$ to be a complex parameter we need to take care to treat the implicitly defined function $\mu$ in (\ref{k}) properly. Locally, we may solve for $\mu$ to get
\be
\mu =k-z\pm \sqrt{\rho^2+(k-z)^2}   \; .  \label{mu}
\ee 
Thus there are branch points at $k= w$ and $k = \bar{w}$ where $w= z+i \rho$ and so we take the branch cut to be the finite line in the complex $k$-plane between these points. Hence we consider the linear system (\ref{LP}) on the two sheeted Riemann surface $\Sigma_w \subset \mathbb{C}^2$ defined by 
\be
y^2= (k- w)(k-\bar{w}) \; ,  \qquad (k, y) \in \mathbb{C}^2 \; .
\ee
The square root function (\ref{mu}) is then defined by $\mu : \Sigma_w \to \mathbb{C}$ where
\be
\mu(k, y) = k- z + y 
\ee
 We will denote $y$ on the two sheets (i.e. the two square roots) by $y_\pm(k)$ and use $k$ as a local coordinate on each sheet. For definiteness we define $y_+$ by having positive real part for $\text{Re} \, (k-w)>0$. We also define $\mu_\pm = \mu(k, y_\pm)$ and note the useful identity $\mu_+ \mu_- = - \rho^2$. 
  
We will also denote the corresponding $\Psi$ on the two sheets by $\Psi_\pm$ and similarly for any other quantity on $\Sigma_w$. Since $\Psi_\pm$ corresponds to $\Psi$ evaluated on the two sheets of the same Riemann surface we must require a continuity condition at the branch points:
\be
\Psi_+(\rho, z, k)= \Psi_-(\rho, z, k)  \qquad \text{at} \quad k= z\pm i \rho  \; .   \label{bpcont}
\ee
This condition will be important in our later analysis. Taking the determinant of this and comparing to (\ref{det}) shows that $f_+(k)= f_-(k)$ (for $\text{Im} \; k \neq 0$, and by  continuity, for all $k$ except perhaps at isolated points) and so we drop the subscript on this quantity.

The spectral equations have an important involution symmetry which allow one to map solutions on one Riemann sheet to the other. The matrices defined by
\be
\tilde{\Psi}_{\pm} = g \Psi_{\mp}^{T-1}  \; ,  \label{tildedef}
\ee
obey the same equations as $\Psi_{\pm}$, i.e.,
\be
(\rho \partial_\rho - \mu_\pm \partial_z) \tilde{\Psi}_{\pm} = U \tilde{\Psi}_{\pm} , \qquad  (\mu_{\pm} \partial_\rho +\rho \partial_z) \tilde{\Psi}_{\pm} = V \tilde{\Psi}_{\pm}  \; .
\ee
It is easy to show that given two solutions $\Psi_\pm$ and $\tilde{\Psi}_\mp$ to the above equations their `difference'  $B_{\pm}=\tilde{\Psi}_\pm^{-1} \Psi_\pm$ must be independent of $(\rho, z)$. Therefore, it follows from (\ref{tildedef}) that
\be
\label{symBpm}
\Psi_\pm = g \Psi_{\mp}^{T-1}B_\pm    \; ,
\ee
where $B_\pm=B_\pm(k)$ are invertible matrices. It immediately follows from this that $B_\pm = B_\mp^T$.  Furthermore, for $\rho>0$ we can write (\ref{symBpm}) as $B_\pm= \Psi^T_\mp g^{-1} \Psi_\pm$ and evaluating this at the branch points $k=z\pm i \rho$ and using the continuity condition (\ref{bpcont}) shows that $B_\pm(k)$ is symmetric (for $\text{Im}\; k\neq 0$, and by continuity, for all $k$ except perhaps at isolated points). Putting all this together we deduce that $B_+= B_-^T=B_-$ so we may drop the subscript on $B$.   Thus this symmetry may be simply written as
\be
\Psi_\pm = g \Psi_{\mp}^{T-1}B \label{sym}
\ee
where $B=B(k)$ is an invertible symmetric matrix. Taking the determinant shows
\be
\det B(k) =     f(k)^2 \; .  \label{detB}
\ee

\subsection{Spectral equations on semi-circle at infinity}
\label{subsec:infinity}

We will consider asymptotically flat spacetimes in four and five dimensions.  In both cases the asymptotic region corresponds to the semi-circle at infinity in the half-plane (\ref{HP}). Thus it is convenient to introduce polar coordinates $(r, \theta)$ on the half-plane where
\be
\rho = r \sin \theta, \qquad z = r \cos\theta   \label{polarcoords}
\ee
and $0\leq \theta \leq \pi$.  In terms of the complex coordinate $w=z+ i \rho$ we have $w= r e^{i \theta}$. The semi-circle at infinity then simply corresponds to $r  \to \infty$.  More precisely, we introduce the contour $C_r = \{ r e^{i \theta} : 0 \leq \theta \leq \pi \}$ in the half-plane with anticlockwise orientation and consider large $r$.

Now, fix a sheet of $\Sigma_w$ with local coordinate $k$ and consider traversing $C_r$ starting at $\theta=0$.  The branch points $w, \bar{w}$ trace out corresponding semi-circles in the upper and lower half of the complex $k$-plane with a moving branch cut between the upper and lower semi-circle.  Any fixed value of $k$ on the sheet must pass through the  moving branch cut as we traverse $C_r$ for large enough $r$ (i.e. $r>|k|$). This occurs at an angle given by $\text Re (k-w)=0$,  i.e. 
\be
\cos \theta_* = \frac{\text{Re} (k) }{r}  \implies  \theta_*= \frac{\pi}{2}  - \frac{\text{Re} (k) }{r} + O(r^{-3})  \; .
\ee
Now, passing through the branch cut corresponds to changing sheet of $\Sigma_w$. Therefore, in effect, traversing $C_r$ imposes a change of sheet as we pass through $\theta= \theta_*$. In particular, given a solution to the linear system $\Psi_\pm (r, \theta, k)$ on the two sheets, this implies the following continuity conditions on the semi-circle at infinity 
\be
\lim_{\epsilon \to 0^+} \Psi_\pm(r, \theta_*- \epsilon, k) = \lim_{\epsilon \to 0^+} \Psi_\mp( r, \theta_*+\epsilon,k)  \; .   \label{BCinfty}
\ee
Notice this provides a relation between the $\Psi_+$ and $\Psi_-$ fields at infinity.

 The above considerations  also affect the asymptotic expansion of quantities defined on each sheet along infinity.  For instance, consider $\mu_+$ on the $+$ sheet. Traversing $C_r$ from $\theta=0$, it is easy to see that the branch cut approaches a fixed $k$ from the right (where $y_+(k)$ has negative real part) so
\be
\mu_+(r, \theta, k) = (k-r) (1+ \cos \theta) + O(r^{-1})  \qquad 0 \leq \theta < \theta_*  \; ,
\ee
whereas traversing $C_r$ from $\theta = \pi$, the branch cut approaches $k$ from the left so
\be
\mu_+(r, \theta, k) = (k+r) (1- \cos \theta) + O(r^{-1})  \qquad \theta_*  < \theta \leq \pi  \; .   \label{asympmu}
\ee
A similar argument for $\mu_-$ shows that
\bea
\mu_- (r, \theta, k) = \left\{  \begin{array}{c} (k+ r) (1- \cos \theta) + O(r^{-1})  \qquad 0 \leq \theta < \theta_*  \\ 
 (k-r) (1+  \cos \theta) + O(r^{-1})  \qquad \theta_*  < \theta \leq \pi  \end{array}  \right.  \; .
\eea
Observe that the continuity conditions  $\lim_{\epsilon \to 0^+} \mu_\pm(r, \theta_*- \epsilon, k) = \lim_{\epsilon \to 0^+} \mu_\mp(r, \theta_*+\epsilon,k)$ are indeed satisfied.

It is convenient to write our  linear system (\ref{LP}) in  polar coordinates, which gives, 
\bea
%\left( r \partial_r + \frac{\mu}{r \sin \theta} \partial_\theta \right) \Psi = S \Psi, \qquad \left(\sin \theta \partial_\theta- \mu \partial_r \right) \Psi = T \Psi   
&&\partial_r \Psi = Y_r \Psi,  \qquad   Y_r = \frac{r\sin^2 \theta S -  \mu T }{ \mu^2+ r^2 \sin^2\theta}  \; \\ 
 &&\partial_\theta \Psi = Y_\theta \Psi \;, \qquad Y_\theta = \frac{r\sin \theta(\mu  S + r T)}{\mu^2+r^2\sin^2\theta} \label{LPpolar}
\eea
where $S = r \partial_r g g^{-1}$ and $T=\sin \theta \partial_\theta gg^{-1}$.  We now consider the solution to the spectral equations in the limit $r\to \infty$.

The explicit solution depends on the dimension, although it has some common features which will be key in our analysis. Let $\bar{g}$ denote the Minkowski metric and $\bar{\Psi}$ a corresponding solution to the spectral equation (\ref{LPpolar}).  Now define the `difference',
\be
\Delta = \bar{\Psi}^{-1} \Psi  \; ,
\ee
 between a Minkowski solution $\bar{\Psi}$ and a solution $\Psi$ to (\ref{LPpolar}) for any asymptotically flat metric $g$.
Then, it easily follows that
\bea
&&(\partial_r \Delta )\Delta^{-1} =\Upsilon_r , \qquad \Upsilon_r \equiv   \bar{\Psi}^{-1} (Y_r-\bar{Y}_r) \bar{\Psi},  \nonumber \\ 
 &&(\partial_\theta \Delta )\Delta^{-1}=\Upsilon_\theta\; , \qquad \Upsilon_\theta \equiv \bar{\Psi}^{-1} (Y_\theta-\bar{Y}_\theta) \bar{\Psi}  \; .  \label{Deltaeq}
\eea
The matrices $\Upsilon$ depend on the explicit solution in Minkowski spacetime and the definition of asymptotic flatness, which for $D=4,5$ will be given later.  All that we need at this stage is that for both dimensions, all matrix entries of $\Upsilon_r$ and $\Upsilon_\theta$ are $O(r^{-2})$ and $O(r^{-1})$ respectively, as $r\to \infty$.  Thus, asymptotically, $\Delta$ must be only a function of $k$. In other words, the solution to the spectral equation for an asymptotically flat spacetime is asymptotic to that for Minkowski spacetime, as one would expect.

More precisely, consider the solution on the $+$ sheet of $\Sigma_w$
\be
\Psi_+ = \bar{\Psi}_+ \Delta_+  \; . \label{Psiinfty}
\ee
From the above it follows that
\be
\Delta_+ =\left\{ \begin{array}{cc} N_R(k)+ O(r^{-1}) \qquad 0 \leq \theta < \theta_* \\  N_L(k)+ O(r^{-1})   \qquad \theta_*  < \theta \leq \pi \end{array} \right.  \; ,   \label{Deltainfty}
\ee
where $N_{R,L}(k)$ are invertible matrices and  $R, L$ denote the right and left segment (these in general are different since  $\Upsilon_{r+}, \Upsilon_{\theta+}$ are discontinuous on $C_r$ at $\theta=\theta_*$).
Using the involution symmetry (\ref{sym}) we find that 
\be
\Psi_- = g \bar{\Psi}_+^{T-1} \Delta_+^{T-1} B  \; ,
\ee
and hence imposing the continuity conditions (\ref{BCinfty}) we deduce that
\bea
C &\equiv& N_R^{T-1}(k) B(k) N_L(k)^{-1}  \label{Cdef} \\
 &=& \lim_{r \to \infty} \bar{\Psi}^T_+(r, \theta_*^-, k) g(r, \theta_*)^{-1} \bar{\Psi}_+(r, \theta_*^+, k)  \; .  \label{C}
\eea
The relation (\ref{C}) allows one to compute $C$ given the asymptotics of the Minkowski solution. It is worth remarking that although (\ref{BCinfty}) consists of two continuity equations, the fact that $B$ is a symmetric matrix (\ref{sym}) ensures that they are equivalent.

There is a certain freedom in the choice of $\bar{\Psi}_+$ corresponding to right-multiplication by a matrix function of $k$. Since the asymptotic expansion (\ref{asympmu}) for $\mu_+$ to leading order is independent of $k$, we may choose $\bar{\Psi}_+(r, \theta, k)$ such that as $r \to \infty$ the leading term in each entry  is independent of $k$.  Making this choice, one then expects from (\ref{C}) that $C$ is independent of $k$ and hence is a constant matrix (we will confirm this explicitly later).

\subsection{General solution on the axes and horizons}

We will now show that the linear system simplifies when evaluated on the boundary of the half-plane. Recall that smoothness of the axes and horizons requires the metric must be a smooth function of $(\rho^2, z)$. Therefore we may assume $\Psi$ is a smooth function of $(\rho^2, z)$. 

First we make a few general remarks. In order to evaluate limits to the boundary we will need the following useful relations
\be
\mu_+ \sim 2 (k-z), \qquad \mu_- \sim -\frac{\rho^2}{2(k-z)}
\ee
as $\rho \to 0$.  Thus taking the limit of the determinant $\det \Psi_\pm$ and using (\ref{det}) shows that $\Psi_+$ is generically a nonsingular matrix on the boundary whereas $\Psi_-$ is singular. Therefore we will only consider $\Psi_+$ and use (\ref{sym}) to deduce $\Psi_-$.

We are now in a position to evaluate the limit of the linear system (\ref{LPalt}) for $\Psi_+$ as $\rho \to 0$. It is easy to see this system reduces to an ODE
\be
(z-k)\partial_z \mathring{\Psi} = \tfrac{1}{2}\mathring{U} \mathring{\Psi},  \label{LPbdry}
\ee
where we define $\mathring{\Psi}(z, k) = \lim_{\rho \to 0} \Psi_+(\rho, z, k)$ and the second equation  vanishes identically due to our assumption that $\Psi_+$ is a smooth function of $\rho^2$. We will  explicitly solve the linear system along the boundary $\rho=0$.

First consider an axis rod $I_a$. In the corresponding adapted basis the metric is given by (\ref{gaxisadapted}). The general solution to the linear system (\ref{LPbdry}) on $I_a$ in this basis can be written as
\be
 \tilde{X}_a(z, k) \tilde{M}_a(k), \qquad \tilde{X}_a(z, k) = \left( \begin{array}{cc} -\delta_{\mu}^{~\nu} & b^a_\mu(z) \\ 0 & 2(k-z) \end{array} \right), \quad z \in I_a   \label{LPaxissoladap}
\ee
where we have used (\ref{UVtilde0}, \ref{baxis}) and $\tilde{M}_a(k)$ is an arbitrary integration matrix. The particular solution $\tilde{X}_a(z, k)$ satisfies 
\be
\partial_z \tilde{X}_a= - \mathring{\tilde{U}}_a.
\ee
We note there is a lot of freedom in the choice of particular solution $\tilde{X}_a(z, k)$. In particular, the integration constant for the Ernst potential $b^a_\mu(z)$ may be set to any value we like by right multiplying the particular solution by a constant upper triangular matrix with unit diagonals (which can then be absorbed into a redefinition of $\tilde{M}_a(k)$). For convenience we will choose the potentials to vanish at the lower endpoint of the finite rods 
\be
b^a_{\mu}(z_{a-1})=0   \label{ErnstBC}
\ee
for $a=2, \dots, n$ and 
\be
\lim_{z\to -\infty} b^L_{\mu}(z) = 0 \; ,   \qquad  \lim_{z\to \infty} b^R_{\mu}(z)=0 \; .  \label{bLRinfty}
\ee
The latter are consistent with the asymptotics $b^L_\mu \to 0$ and $b^R_\mu\to 0$ at infinity (even off axis).

In order to compare the solutions on each rod we will write them all relative to the standard basis. The metric near each axis rod $I_a$ relative to the standard basis is given by (\ref{gstandard}),  which implies $U= L_a \tilde{U} L_a^{-1}$.  Hence, from (\ref{LPaxissoladap}), we deduce that the general solution to the linear system (\ref{LPbdry}) on an axis rod $I_a$ relative to the standard basis takes the form
\be
\mathring{\Psi}_a(z, k)= X_a(z, k) M_a(k) \;,  \quad z \in I_a \; ,\label{LPgensol} 
\ee
where 
\be
X_{a}(z, k) =L_a \left( \begin{array}{cc} -\delta_{\mu}^{~\nu} & b^a_{\mu}(z) \\ 0 & 2(k-z) \end{array} \right) L_a^{-1}
\label{Xaxis}
\ee
and  $M_a(k)$ are arbitrary matrices. It is also worth recording that the metric on $I_a$ relative to the standard basis (\ref{gstandard}) is simply
\be
\mathring{g}(z) = L_a \left( \begin{array}{cc} h_{\mu\nu}^a(z) & 0 \\ 0 & 0 \end{array} \right) L_a^{T}  \; . \label{gaxis}
\ee
Recall that in these formulas, if $D=4$ the matrix $L_a$ is the identity matrix, whereas if $D=5$ it is given by  (\ref{Laxisrod}).

Now consider a horizon rod $I_a$. An entirely analogous derivation of the solution can be given in this case using (\ref{U0V0hor}, \ref{chiaxis}).  Thus we find the general solution to the linear system (\ref{LPbdry}) on a horizon rod $I_a$ relative to the standard basis can be again written as (\ref{LPgensol}) where 
\bea
X_{a}(z, k) =L_a \left( \begin{array}{cc} -\delta_{i}^{~j} & \chi^a_i(z) \\ 0 & 2(k-z) \end{array} \right) L_a^{-1}
\label{Xhor}
\eea
and $\chi^a_i(z)= \chi_i(z)- \chi_i(z_{a-1})$ (which corresponds to a choice of integration constant), the matrix $L_a$ is given by (\ref{Lhorrod}) and $M_a(k)$ are arbitrary matrices. The metric on $I_a$  relative to the standard basis (\ref{gstandard}) is simply
\be
\mathring{g}(z) = L_a \left( \begin{array}{cc} \gamma_{ij}(z) & 0 \\ 0 & 0 \end{array} \right) L_a^{T}  \; .  \label{ghor}
\ee
We  now have the general solution to the linear system on all components of the boundary $\rho=0$.

Before moving on it is worth noting that for both axis and horizon rods we have
\be
\det X_a(z,k)= 2 (-1)^{D-3} (k-z)   \label{detX}
\ee
and combining this with (\ref{det}) implies
\be
\det M_a(k) =(-1)^{D-3} f(k) \; ,  \label{detM}
\ee
for all $a=1, \dots, n+1$.

  Clearly we must impose continuity of $\mathring{\Psi}(z,k)$ at $z=z_a$ for $a= 1, \dots, n$, where adjacent rods $I_a$ and $I_{a+1}$ touch, i.e.,
\be
\mathring{\Psi}_{a}(z_a, k)= \mathring{\Psi}_{a+1}(z_a, k)   \; .
\ee
Upon using the general solution this gives
\be
M_a(k) = P_a(k) M_{a+1}(k)
\ee
where we have introduced the matrices
\be
P_a(k)= X_a(z_a, k)^{-1}X_{a+1}(z_a,k)   \; , \label{Padef}
\ee
for each $a=1, \dots n$.  Observe that from (\ref{detX}) it follows  that $\det P_a(k)=1$ automatically.
Iterating we find
\bea
&&M_a(k) = Q_{a}(k)  M_{R}(k), \label{MaIdentity}\\ 
&&Q_a(k) \equiv  P_a(k)P_{a+1}(k) \cdots P_n(k)  \label{Qadef}
\eea
for $a=1, \dots, n+1$ with $Q_{n+1}(k)$ understood as the $(D-2)$-dimensional identity matrix. In particular
\be
M_L(k) = Q_1(k) M_R(k) \; . \label{MLR2}
\ee
Note the fact $P_a(k)$ is unit determinant implies $\det Q_a(k)=1$ automatically.

We may now match the solution on the semi-infinite axis rods to the solution for an asymptotically flat spacetime near infinity (\ref{Psiinfty}) and (\ref{Deltainfty}). Firstly, the solutions for Minkowski spacetime on the semi-infinite axes can be deduced from the above by setting $b^{L,R}_\mu(z)=0$. A convenient choice,  such that these solutions are independent of $k$ to leading order as $|z| \to \infty$, is
\be
\mathring{\bar{\Psi}}_L (z, k) =  \left( \begin{array}{cc} -\delta_{\mu}^{~\nu} & 0 \\ 0 & 2 (k-z) \end{array} \right), \qquad \mathring{\bar{\Psi}}_R (z, k)=  L_R \left( \begin{array}{cc} -\delta_{\mu}^{~\nu} & 0 \\ 0 & 2(k-z) \end{array} \right) L_R^{-1}  \; .  \label{barPsiaxis}
\ee
Thus from (\ref{Psiinfty}) we get
\be
\mathring{\Delta}_{+L}  = M_L(k)+ O(z^{-1}), \qquad \mathring{\Delta}_{+R}  = M_R(k)+ O(z^{-1}) \; ,  \label{Deltaaxis}
\ee
where we have used (\ref{bLRinfty}) and further assumed   the asympotic expansion for $b_\mu^{L,R}(z)=O(z^{-1})$ (this follows from the definition of asymptotic flatness as we will see later). Therefore, comparing to (\ref{Deltainfty}) we deduce that
\be
N_{R}(k)= M_{R}(k)  \; , \qquad N_{L}(k)= M_{L}(k)  \; .
\ee
We may use this to eliminate the matrices $N_{L/R}$ in favour of $M_{L/R}$ and thus from (\ref{Cdef}) we obtain
\be
M_{L}= C^{-1} (M_{R})^{T-1} B  \; .  \label{MLR}
\ee
Recall that the choice of asymptotic solutions corresponds to a choice of  matrix $C$ (\ref{C}). Later we will see that our choice (\ref{barPsiaxis}) fixes $C$ to be a dimension dependent constant matrix.  In any case, taking the determinant of (\ref{MLR}) and using (\ref{detB}) and (\ref{detM}) implies
\be
\det C=1   \label{detC}
\ee
independently of the dimension.

It is convenient to define the following matrix
\be
\tilde{Q}_1(k) = C Q_1(k)  \;  .  \label{Qtilde}
\ee
We are now ready to state our first result.
\begin{prop}
\label{prop:sym} The matrices
\be
\label{Fadef}
F_a(k) = - Q_a(k) \tilde{Q}_1(k)^{-1} Q_a(k)^T  \;,
\ee
are symmetric for $a=1, \dots, n+1$.
\end{prop}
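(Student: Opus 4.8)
The plan is to deduce the symmetry of every $F_a(k)$ from a single identity that writes $\tilde{Q}_1(k) = C Q_1(k)$ as a congruence transform of the symmetric matrix $B(k)$. The two ingredients are already in hand: (i) $B(k)$ is symmetric (this was established just after (\ref{sym})); and (ii) the rod data satisfy both $M_L(k) = Q_1(k) M_R(k)$ — the $a=1$ case of (\ref{MaIdentity}), i.e.\ (\ref{MLR2}) — and $M_L(k) = C^{-1} (M_R)^{T-1} B$ from the matching at infinity, (\ref{MLR}). Equating the two expressions for $M_L$ and rearranging gives
\be
\tilde{Q}_1(k) = C Q_1(k) = M_R(k)^{T-1}\, B(k)\, M_R(k)^{-1}\;,
\ee
which is a congruence transform of $B$ and hence symmetric, using $B^T = B$ together with $(M_R^{T-1})^T = M_R^{-1}$.

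To reach all $a$ I would introduce $G_a(k) := M_a(k)^{T-1} B(k) M_a(k)^{-1}$, which is manifestly symmetric for each $a$. Substituting $M_a = Q_a M_R$ from (\ref{MaIdentity}) and the identity above yields $G_a = Q_a^{T-1} \tilde{Q}_1 Q_a^{-1} = -F_a^{-1}$, where invertibility of $Q_a$ is guaranteed by $\det Q_a = 1$. Since $G_a$ is symmetric, so is $F_a = -G_a^{-1}$. Equivalently, once $\tilde{Q}_1$ is known to be symmetric one can argue directly: $F_a^T = -Q_a (\tilde{Q}_1^T)^{-1} Q_a^T = -Q_a \tilde{Q}_1^{-1} Q_a^T = F_a$.

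I do not anticipate a genuine obstacle here: the argument is pure linear algebra once (\ref{MaIdentity}) and (\ref{MLR}) are available. The only points requiring care are the transpose/inverse bookkeeping that makes $M_R^{T-1} B M_R^{-1}$ honestly symmetric, and the observation that all matrices being inverted ($Q_a$, $M_R$, $B$, $\tilde{Q}_1$) are invertible for all but isolated values of $k$ by the determinant formulas (\ref{detM}), (\ref{detB}), (\ref{detC}) and $\det Q_a = 1$; since the asserted symmetry is a meromorphic identity in $k$, verifying it away from those isolated points is enough.
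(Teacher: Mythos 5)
Your proof is correct and follows essentially the same route as the paper: combining (\ref{MLR2}) with (\ref{MLR}) to exhibit $\tilde{Q}_1(k)$ (equivalently its inverse) as a congruence transform of the symmetric matrix $B(k)$ by $M_R(k)$, and then propagating to all $a$ via $M_a = Q_a M_R$. Your $G_a = M_a^{T-1} B M_a^{-1}$ is just the inverse of the paper's relation $M_a B^{-1} M_a^T = -F_a$, so the two arguments are the same up to trivial rearrangement.
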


\begin{proof} Clearly, if $\tilde{Q}_1(k)$ is symmetric then $F_a(k)$ is also symmetric for all $a=1, \dots, n+1$.  Combining the condition arising from asymptotic flatness (\ref{MLR}) with the continuity condition (\ref{MLR2})  gives
\be
M_R(k)  B(k)^{-1}  M_R(k)^{T}= \tilde{Q}_1(k)^{-1}  \label{Q1tilde}
\ee
which immediately implies the result (recall $B$ is symmetric).  Symmetry of $F_a(k)$ for $a=1, \dots, n$,  also directly follows from the relation
\be
M_a(k)  B(k)^{-1}  M_a(k)^{T}= - F_a(k)   \; .  \label{Fa}
\ee
which can be established by combining (\ref{Q1tilde}) with (\ref{MaIdentity}).
\end{proof}

\noindent {\bf Remarks.}
\begin{enumerate}
\item  The matrices $F_a$ can be rewritten explicitly in terms of $P_a(k)$ to give
    \begin{equation}
    \begin{aligned}
        F_L &= - C^{-1} Q_1^T = -C^{-1} P_n^T \cdots P_1^T,\\
        F_a &= -P_{a-1}^{-1} \cdots P_1^{-1} C^{-1} P_n^T \cdots P_a^T
        ,\\   \label{Faalt}
        F_R &= - \tilde{Q}_1^{-1} = - P_n^{-1} \cdots P_1^{-1}C^{-1},
    \end{aligned}
    \end{equation}
    where $a = 2,\dots, n$.
\item In general the determinant of $F_a$ is
\be
\det F_a(k)=(-1)^{D-2}   \label{detFa}
\ee
as a consequence of $Q_a(k)$ being unit determinant and (\ref{detC}).
\end{enumerate}

We are now ready to state the main result of this section.
\begin{prop}
The metric data on each rod  satisfies the algebraic equation
\be
\mathring{g}(z)= X_a(z,z) F_a(z) \; , \qquad z \in I_a  \label{g0sol}
\ee
where $F_a(z)$ is given by (\ref{Fadef}), whereas $\mathring{g}(z)$ and $X_a(z, z)$ are given by (\ref{gaxis}), (\ref{Xaxis}) for an axis rod and (\ref{ghor}), (\ref{Xhor}) for a horizon rod.
\end{prop}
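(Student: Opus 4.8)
The plan is to recover the metric on a given rod $I_a$ from the boundary solution $\mathring\Psi_a(z,k)=X_a(z,k)M_a(k)$ already obtained, combined with the involution symmetry (\ref{sym}) and the branch-point continuity (\ref{bpcont}). The argument is uniform in the two cases: throughout I write $\mathring\Psi_+:=\mathring\Psi_a$ and take $X_a,\mathring g,L_a$ as in (\ref{Xaxis}),(\ref{gaxis}) for an axis rod or (\ref{Xhor}),(\ref{ghor}) for a horizon rod, which share the same block structure (vanishing last row and column of $\mathring g$ in the adapted basis, and a single factor $2(k-z)$ in the last diagonal slot of $X_a$). I will also use the identity $M_a(k)B(k)^{-1}M_a(k)^T=-F_a(k)$ from the proof of Proposition \ref{prop:sym} (equation (\ref{Fa})), equivalently $B(k)=-M_a(k)^TF_a(k)^{-1}M_a(k)$.

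First I would determine the boundary limit of $\Psi_-$ on $I_a$. Since $\det\Psi_-=\mu_- f$ and $\mu_-\to 0$ as $\rho\to 0$, $\Psi_-$ is singular on the boundary, so rather than solving its (degenerate) linear system I would read it off from $\Psi_-=g\Psi_+^{T-1}B$. Taking $\rho\to 0$ at fixed $k\ne z$, where $\mathring\Psi_+(z,k)=X_a(z,k)M_a(k)$ is invertible, gives $\mathring\Psi_-(z,k)=\mathring g(z)X_a(z,k)^{-T}M_a(k)^{-T}B(k)$. Although $X_a(z,k)^{-1}$ has a simple pole at $k=z$, the pole sits in exactly the row and column of $X_a(z,k)^{-T}$ that is annihilated by the block structure of $\mathring g(z)$, so one obtains the clean identity $\mathring g(z)X_a(z,k)^{-T}=-\mathring g(z)$ for all $k\ne z$. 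Substituting $B(k)=-M_a(k)^TF_a(k)^{-1}M_a(k)$ then collapses everything to $\mathring\Psi_-(z,k)=\mathring g(z)F_a(k)^{-1}M_a(k)$, which in particular extends continuously to $k=z$ for $z$ in the interior of $I_a$.

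Next I would pass to the limit in the branch-point continuity (\ref{bpcont}): for every $\rho>0$ one has $\Psi_+(\rho,z,z\pm i\rho)=\Psi_-(\rho,z,z\pm i\rho)$, and as $\rho\to 0^+$ both branch points collide at $k=z$. Granting that the fixed-$k$ limits $\mathring\Psi_\pm(z,\cdot)$ extend continuously to $k=z$ and that these limits may be interchanged, this yields $\mathring\Psi_+(z,z)=\mathring\Psi_-(z,z)$. Inserting the explicit forms $X_a(z,z)M_a(z)=\mathring g(z)F_a(z)^{-1}M_a(z)$, cancelling the invertible $M_a(z)$, and right-multiplying by $F_a(z)$ gives $\mathring g(z)=X_a(z,z)F_a(z)$ on $I_a$, which is the claim; the identity $\det F_a=(-1)^{D-2}$ guarantees $F_a(z)$ is invertible.

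The hard part will be justifying the limit of (\ref{bpcont}). Near $(\rho,k)=(0,z)$ the surface $\Sigma_w$ degenerates — the branch points $z\pm i\rho$ pinch to a node — and $\Psi_\pm$ becomes singular there ($\det\Psi_\pm\to 0$), so evaluating $\Psi_\pm$ at the moving branch point needs care. I would handle this either by a local analysis of the linear system in the scaled spectral variable $k=z+i\rho s$ near $s=1$, or by regarding $\Psi$ as a single-valued meromorphic section over the family $\Sigma_w$ and taking the limit to the nodal curve, on whose normalisation the two branches are $\mathring\Psi_+$ and $\mathring\Psi_-$ and single-valuedness at the node is exactly $\mathring\Psi_+(z,z)=\mathring\Psi_-(z,z)$. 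The remaining steps are elementary linear algebra built on the block structure of $X_a$, $\mathring g$ and the symmetry identities already in hand.
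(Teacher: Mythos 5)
Your argument is correct and is essentially the paper's own proof: both rest on the branch-point continuity condition (\ref{bpcont}) evaluated at $\rho=0$, $k\to z$, the involution symmetry (\ref{sym}) to express $\Psi_-$ through $\Psi_+$, the elementary identity $\mathring{g}(z)X_a(z,k)^{T-1}=-\mathring{g}(z)$, and the relation (\ref{Fa}). The only difference is that you explicitly flag the need to justify interchanging the $\rho\to 0$ and $k\to z$ limits as the branch points pinch — a genuine technical point that the paper passes over silently by simply writing $\lim_{k\to z}\Psi_\pm(0,z,k)$.
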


\begin{proof}  We impose continuity at the branch points (\ref{bpcont})  on the axis $\rho=0$:
\be
\lim_{k \to z} {\Psi}_+(0, z, k) =\lim_{k \to z} {\Psi}_{-}(0, z, k)  \; . \label{bpaxiscont}
\ee
Using (\ref{sym}) to write $\Psi_{-}$ in terms of $\Psi_{+}$, the continuity condition (\ref{bpaxiscont}) reads
\be
\mathring{\Psi}(z,z) = \lim_{k\to z} \mathring{g}(z) \mathring{\Psi}(z,k)^{T-1} B(k)  \; .  \label{bpaxis}
\ee
Evaluating on each rod and using the general solution (\ref{LPgensol}), equation (\ref{Fa}) and the elementary  identity  $\mathring{g}(z) X_a(z, k)^{T-1}= - \mathring{g}(z)$, gives  (\ref{g0sol}) as claimed.
\end{proof}
We emphasise that, crucially, equation (\ref{g0sol})  does not depend on the arbitrary matrices $M_a(k)$ and hence  provides a constraint on the spacetime geometry. In fact, (\ref{g0sol})  fully determines the metric on each rod $I_a$. Indeed, both $\mathring{g}(z)$ and $X_a(z,z)$ for $z\in I_a$ are rank-$(D-3)$ so (\ref{g0sol}) gives $\tfrac{1}{2}(D-3)(D-2)+ D-3$ algebraic equations for the $\tfrac{1}{2}(D-3)(D-2) +D-3$ unknowns, either $(h^a_{\mu\nu}(z), b^a_\mu(z))$ or $(\gamma_{ij}(z), \chi_i(z))$ (depending on if $I_a$ is an axis or horizon rod).

\subsection{Classification theorem and moduli space of solutions}
\label{sec:modspace}

We now show that (\ref{g0sol}) fully determines the metric on each rod. The explicit solution is summarised by the following theorem which is the main result of this paper.

\begin{theorem} \label{th:main}
 Consider a $D=4$ or $5$-dimensional vacuum spacetime as in Theorem \ref{th1}.
\begin{enumerate}
\item The general solution $(h^a_{\mu\nu}(z), b^a_\mu(z))$ on any axis rod $I_a$ is 
\be
h^a_{\mu\nu}(z) =-\tilde{F}_{a\mu\nu}(z) +\frac{\tilde{F}_{a\mu N} (z) \tilde{F}_{a N \nu}(z)}{\tilde{F}_{a NN}(z)}, \qquad b^a_\mu(z)= \frac{\tilde{F}_{a \mu N} (z)}{\tilde{F}_{a NN}(z)}  \; , \label{gensolaxis}
\ee
where $\mu=0,\dots, D-4$ and $N=D-3$ and the matrices $\tilde{F}_a(k)$ are defined by 
\be
F_a(k) =L_a \left( \begin{array}{cc} \tilde{F}_{a\mu\nu}(k) & \tilde{F}_{a\mu N}(k) \\ \tilde{F}_{a N \nu}(k) & \tilde{F}_{a N N}(k) \end{array} \right)L_a^T  \; .    \label{Facompaxis}
\ee
In particular, this implies
\be
\det h^a_{\mu\nu}(z) = -\frac{1}{\tilde{F}_{aNN}(z)}  \label{detaxis}
\ee
and
\be
\tilde{F}_{a NN}(z)>0 \; , \qquad z \in I_a \; .   \label{FNNpos}
\ee
\item The general solution $(\gamma_{ij}(z), \chi^a_i(z))$ on any horizon rod $I_a$ is 
\be
\gamma_{ij}(z) =-\tilde{F}_{aij}(z) +\frac{\tilde{F}_{a i 0} (z) \tilde{F}_{a 0 j}(z)}{\tilde{F}_{a 00}(z)}, \qquad \chi^a_i(z)= \frac{\tilde{F}_{a i 0} (z)}{\tilde{F}_{a 00}(z)}  \; ,\label{gensolhor}
\ee
where $i=1, \dots, D-3$ and $\tilde{F}_a(k)$ is defined by
 \be
F_a(k) =L_a \left( \begin{array}{cc} \tilde{F}_{aij}(k) & \tilde{F}_{a i 0}(k) \\ \tilde{F}_{ a 0 j}(k) & \tilde{F}_{a 00}(k) \end{array} \right)L_a^T  \; .   \label{Facomphor}
\ee
In particular, 
\be
\det \gamma_{ij}(z) = -\frac{1}{\tilde{F}_{a 00}(z)}  \label{dethor}
\ee
and 
\be
\tilde{F}_{a 00}(z)<0 \;, \qquad z \in I_a \; .   \label{F00neg}
\ee
\end{enumerate}
In both cases $F_a(k)$ are the matrices defined by (\ref{Fadef}). The solution depends on the `moduli'
\begin{equation}
    \{b^L_\mu(z_1),b_\mu^R(z_n)\} \cup \{(\ell_a,v_a,b_\mu^a(z_a)|I_{a\neq L,R} \subset \hat{A}\} \cup \{(\ell_a,\Omega_i^a,\chi_i^a(z_a)|I_a \subset \hat{H}\},\label{moduli}
\end{equation}
where $\hat{A}$ and $\hat{H}$ are the union of axis and horizon rods respectively, subject to algebraic constraints arising from Proposition \ref{prop:sym} 
and the inequalities (\ref{FNNpos}), (\ref{F00neg}).
\end{theorem}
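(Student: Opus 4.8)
\emph{Proof strategy.} The idea is to substitute the two block normal forms of the geometric data directly into the algebraic identity (\ref{g0sol}) and solve it. On an axis rod $I_a$, insert $\mathring g(z)$ from (\ref{gaxis}), $X_a(z,z)$ from (\ref{Xaxis}) (whose lower-right entry vanishes at $k=z$), and $F_a(z)=L_a\tilde F_a(z)L_a^T$ with $\tilde F_a$ the symmetric block matrix of (\ref{Facompaxis}). The outer factors $L_a$ and $L_a^T$ cancel, and (\ref{g0sol}) collapses to the block equations $h^a_{\mu\nu}=-\tilde F_{a\mu\nu}+b^a_\mu\tilde F_{aN\nu}$ and $0=-\tilde F_{a\mu N}+b^a_\mu\tilde F_{aNN}$ with $N=D-3$; the second fixes $b^a_\mu(z)=\tilde F_{a\mu N}(z)/\tilde F_{aNN}(z)$, and substituting into the first gives exactly (\ref{gensolaxis}). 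The horizon case is the identical manipulation with (\ref{ghor}), (\ref{Xhor}) replacing (\ref{gaxis}), (\ref{Xaxis}), the distinguished index relabelled $N\to 0$ (see (\ref{Facomphor})), and $(h^a_{\mu\nu},b^a_\mu)\to(\gamma_{ij},\chi^a_i)$, producing (\ref{gensolhor}).

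Next I would read off the determinants and sign conditions. The Schur complement of the symmetric matrix $\tilde F_a$ with respect to its $(N,N)$ entry is $\tilde F_{a\mu\nu}-\tilde F_{a\mu N}\tilde F_{aNN}^{-1}\tilde F_{aN\nu}=-h^a_{\mu\nu}(z)$, a $(D-3)\times(D-3)$ matrix, so that $\det F_a=(\det L_a)^2\det\tilde F_a=\det\tilde F_a=\tilde F_{aNN}\det(-h^a_{\mu\nu})$ using $\det L_a=\pm1$. Combining with $\det F_a=(-1)^{D-2}$ from (\ref{detFa}) yields $\det h^a_{\mu\nu}(z)=-1/\tilde F_{aNN}(z)$, i.e.\ (\ref{detaxis}). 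Since the axis metric (\ref{gaxisadapted}) is Lorentzian on $I_a$, one has $h=\det h^a_{\mu\nu}(z)<0$ there, hence $\tilde F_{aNN}(z)>0$, which is (\ref{FNNpos}). The same argument on a horizon rod, where $\gamma_{ij}$ is positive definite, gives (\ref{dethor}) and the opposite sign (\ref{F00neg}).

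Finally I would identify the moduli and constraints. Unwinding (\ref{Fadef}), (\ref{Qadef}), (\ref{Padef}), (\ref{Qtilde}), each $F_a$ is built from the chain $P_1(k),\dots,P_n(k)$, with $P_a(k)=X_a(z_a,k)^{-1}X_{a+1}(z_a,k)$, and from the fixed dimension-dependent constant matrix $C$ (evaluated from the Minkowski asymptotics in the $D=4,5$ analyses below). With the normalizations (\ref{ErnstBC}), (\ref{bLRinfty}) for the particular solutions $X_a$, the data entering each $P_a$ reduces to: the rod endpoints, hence --- modulo the overall $z$-translation --- the rod lengths $\ell_a$; the matrices $L_a$, which encode the rod vector $v_a$ for a $D=5$ axis rod via (\ref{Laxisrod}) and the angular velocities $\Omega^a_i$ for a horizon rod via (\ref{Lhorrod}); and the endpoint potentials $b^a_\mu(z_a)$ or $\chi^a_i(z_a)$ --- precisely the list (\ref{moduli}). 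Admissibility of a given choice of these moduli then requires that the resulting rod data come from a genuine metric: by Proposition \ref{prop:sym} the matrix $\tilde Q_1(k)$, equivalently $F_R(k)^{-1}$, must be symmetric --- a rational matrix identity in $k$, hence a finite list of algebraic equations on the moduli --- together with the nondegeneracy inequalities (\ref{FNNpos}), (\ref{F00neg}).

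The step requiring the most care is checking there is no hidden circularity or over/under-determination. One must verify that the function $b^a_\mu(z)$ produced by (\ref{gensolaxis}) is consistent both with the normalization $b^a_\mu(z_{a-1})=0$ and with the endpoint value $b^a_\mu(z_a)$ fed into $\tilde F_a$ through $P_a$, and that the arbitrary integration matrices $M_a(k)$ of (\ref{LPgensol}) have genuinely dropped out --- the latter being exactly the content of the Proposition preceding the theorem, which is why (\ref{g0sol}), a priori $\tfrac12(D-3)(D-2)+(D-3)$ equations in the same number of unknowns, is consistently and uniquely solvable with no residual gauge freedom. Confirming that $\tilde F_{aNN}$ (resp.\ $\tilde F_{a00}$) does not vanish on the open rod, so that the divisions in (\ref{gensolaxis}), (\ref{gensolhor}) are legitimate, then completes the argument.
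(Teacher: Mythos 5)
Your proposal is correct and follows essentially the same route as the paper: substitute the block forms of $\mathring{g}$, $X_a(z,z)$ and $F_a$ into (\ref{g0sol}), cancel the $L_a$ conjugation, read off the two block equations, solve the off-diagonal one for the potential, and then extract (\ref{detaxis})--(\ref{F00neg}) from unimodularity of $F_a$ and the signature of the induced metric (your explicit Schur-complement computation is just a spelled-out version of the paper's terse determinant step). The one item you defer to the end --- that $\tilde F_{aNN}$ (resp.\ $\tilde F_{a00}$) is nonvanishing on the open rod --- is exactly the point where the paper supplies a short argument, and it follows immediately from equations you have already written: the second block relation $\tilde F_{a\mu N}=b^a_\mu\,\tilde F_{aNN}$ holds prior to any division, so $\tilde F_{aNN}(z_0)=0$ would force the entire last column (and, by symmetry of $F_a$, the last row) of $\tilde F_a(z_0)$ to vanish, giving $\det F_a(z_0)=0$ in contradiction with (\ref{detFa}). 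With that one line inserted, your argument is complete and matches the paper's.
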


\begin{proof}
First consider an axis rod $I_a$ and let us write $F_a(k)$ as (\ref{Facompaxis}). Then, using (\ref{gaxis}) and (\ref{Xaxis}) reveals that (\ref{g0sol}) is equivalent to  $h^a_{\mu\nu}=- \tilde{F}_{a\mu\nu}+ b^a_\mu \tilde{F}_{a N\nu}$ and $\tilde{F}_{a \mu N}=F_{ a NN} b^a_\mu$.  We can solve this for $b^a_\mu= \tilde{F}_{a \mu N} /\tilde{F}_{a NN}$, since  $\tilde{F}_{aNN}\neq 0$ for any $z \in I_a$; to see this latter condition simply note that if $\tilde{F}_{aNN}=0$ then $\tilde{F}_{a\mu N}=0$ which contradicts the fact $F_a(k)$ must be unimodular (\ref{detFa}).  Thus we find the unique solution on an axis rod $I_a$ is (\ref{gensolaxis}). Then, recalling that $\det L_a=\pm 1$ for any rod,  (\ref{detFa}) implies that  (\ref{detaxis}). Finally, since $h^a_{\mu\nu}(z)$ must be a smooth Lorentzian metric on $I_a$ we must require (\ref{FNNpos}).

A completely analogous analysis holds for any horizon rod $I_a$, with the only difference being that $\gamma_{ij}(z)$ must be a smooth positive definite metric on $I_a$ so we must require (\ref{F00neg}).

The matrices $F_a(k)$ are given by (\ref{Faalt}), where the matrices $P_a(k)$ are defined by (\ref{Padef}). From the explicit form for $X_a(z, k)$ on each axis rod (\ref{Xaxis}) or horizon rod (\ref{Xhor}), it is clear that the set of matrices $P_a(k)$ depend on the parameters $z_a,v_a,  b_{\mu}^a(z_a), b_\mu^R(z_n), \chi_i^a(z_a), \Omega_i^a$. However, due to the translation freedom in the choice of origin of the $z$-axis the solution can only depend on the constants $z_a$ via the rod lengths $\ell_a=z_a-z_{a-1}$ and therefore the solution  depends on (\ref{moduli}).
\end{proof}
\noindent {\bf Remarks.}
\begin{enumerate}
\item  If $D=4$ the determinant fully fixes the metric $h^a(z)= - \tilde{F}_{aNN}(z)^{-1}$ and $\gamma(z) = - \tilde{F}_{a00}(z)^{-1}$  if $I_a$ is an axis or horizon rod.  If $D=5$ symmetry of $F_a(k)$ implies symmetry of the metric $h^a_{\mu\nu}$ and $\gamma_{ij}$ (but not vice-versa).
\item Alternate forms of the general solution can be obtained by replacing $F_a(k)$ with $F_a(k)^T$ for some $a \in \{ 1, \dots, n+1 \}$. Of course, these are all equivalent since  $F_a(k)$ must be symmetric by Proposition \ref{prop:sym}. In fact the symmetry of $F_a(k)$  implies the moduli (\ref{moduli}) satisfy a complicated set of  algebraic constraint equations which will be discussed  below. 
\item The horizon moduli $\chi_i^a(z_a)$ are (up to a constant) the horizon angular momenta $J_i^a$ (\ref{Jhorizon}). We will recover this result from an asymptotic analysis of the general solution later.  On the other hand, the axis moduli $b_\mu^a(z_a)$ are equal to the gravitational fluxes (\ref{4dflux}) and (\ref{5dflux}).
 \item From the explicit form of the matrices (\ref{Faalt}), (\ref{Padef}), (\ref{Xaxis}), (\ref{Xhor})  it is easy to see that the metric components and potentials on each rod are rational  functions of $z$.
\item In general, regularity of the axes imposes further constraints on these moduli from the conditions for the removal of conical singularities (\ref{balance}), (\ref{regupper}), (\ref{reglower}), (\ref{4dkappa}), (\ref{5dkappa}) (see also Appendix \ref{app:corners}). Observe that these regularity conditions also require that $\det h^a_{\mu\nu}$ vanishes at the endpoints of the associated axis rod $I_a$ and  that $\det \gamma_{ij}$ vanishes at the endpoints of a horizon rod.
\end{enumerate}

The parameters (\ref{moduli}) that the general solution for the Ernst and twist potentials $\{ b^a_{\mu}(z), \chi_i^a(z)\}$ on the finite rods depend on include $\{b^a_\mu(z_a), \chi_i^a(z_a)\}$, so there are potential constraints on these from the obvious consistency relations: $b^a_\mu(z)|_{z \to z_{a-1}} = 0$ (recall (\ref{ErnstBC})) and  $b^a_\mu(z)|_{z \to z_a} = b^a_\mu(z_a)$ and the corresponding constraints for horizon rods. 
In total these amount to $2(D-3)(n-1)$  conditions, $(D-3)(n-1)$ of which are automatically satisfied by our solution as the following shows.

\begin{prop}
\label{prop:consistency}
For the general solution (\ref{gensolaxis}), (\ref{gensolhor}) the following identities are satisfied for generic values of the moduli:
\bea
&& \lim_{z \to z_{a-1}} b^a_{\mu}(z) = 0, \label{balower} \\
&& \lim_{z \to z_{a-1}} \chi^a_i(z) = 0 \label{chialower},
\eea
if $I_a$ is a finite axis rod or horizon rod respectively.

On the other hand, for the general solution with $F_a(k)$ replaced by $F_a(k)^T$ the following identities are satisfied for generic values of the moduli:
\bea
&&  \lim_{z \to z_{a}} b^a_{\mu}(z) = b_{\mu}^a(z_a) , \label{baupper}\\
&&  \lim_{z \to z_{a}} \chi^a_i(z) = \chi_i^a(z_a),  \;   \label{chiaupper}
\eea
if $I_a$ is a finite axis rod or horizon rod respectively.
\end{prop}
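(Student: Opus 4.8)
The plan is to read off all four identities directly from the explicit pole structure of the matrices $F_a(k)$ at the two endpoints $z_{a-1},z_a$ of the rod $I_a$, using crucially that the integration constants in the particular solutions $X_a(z,k)$ have already been fixed by the conventions: (\ref{ErnstBC}) for a finite axis rod (so $b^a_\mu(z_{a-1})=0$) and $\chi^a_i(z)=\chi_i(z)-\chi_i(z_{a-1})$ for a horizon rod (so $\chi^a_i(z_{a-1})=0$). Throughout, the starting point is the factorisation (\ref{Faalt}) of $F_a(k)$ into products of $P_b(k)^{\pm1}$, $C^{-1}$ and transposes, together with the elementary observation from (\ref{Padef}), (\ref{Xaxis}), (\ref{Xhor}) that $X_a(z,k)$ is linear in $k$ with $\det X_a(z_b,k)\propto(k-z_b)$, so that each factor $P_b(k)$ is singular only at $k=z_b$.

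For the lower-endpoint identities (\ref{balower}) and (\ref{chialower}) I would first note that among all the factors in (\ref{Faalt}) only the leftmost one, $P_{a-1}(k)^{-1}=X_a(z_{a-1},k)^{-1}X_{a-1}(z_{a-1},k)$, is singular at $k=z_{a-1}$ (those to its right are singular at $z_{a-2},\dots,z_1<z_{a-1}$, the transposed ones at $z_a,\dots,z_n>z_{a-1}$, and $X_{a-1}(z_{a-1},k)$ is a polynomial). Hence $F_a(k)=-X_a(z_{a-1},k)^{-1}G(k)$ with $G(k)$ regular at $z_{a-1}$. Because the convention forces the relevant potential to vanish at $z_{a-1}$, the form (\ref{Xaxis})/(\ref{Xhor}) gives $X_a(z_{a-1},k)^{-1}=L_a\,\mathrm{diag}\!\big(-\delta_\mu^{~\nu},\tfrac{1}{2(k-z_{a-1})}\big)L_a^{-1}$, whose simple pole is supported entirely along the distinguished direction ($N$ for an axis rod, $0$ for a horizon rod). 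Conjugating by $L_a^{-1}$ and reading off the components in the notation of (\ref{Facompaxis})/(\ref{Facomphor}) then shows that $\tilde F_{a\mu N}(k)$ stays regular at $z_{a-1}$ while $\tilde F_{aNN}(k)=-\tilde G_{NN}(k)/\big(2(k-z_{a-1})\big)$ (and likewise with $N\to 0$, $\mu\to i$ on a horizon rod). Plugging this into the solution (\ref{gensolaxis})/(\ref{gensolhor}) produces an explicit overall factor $(z-z_{a-1})$ — e.g. $b^a_\mu(z)=-2(z-z_{a-1})\,\tilde G_{\mu N}(z)/\tilde G_{NN}(z)$ — which vanishes as $z\to z_{a-1}$.

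For the upper-endpoint identities (\ref{baupper}) and (\ref{chiaupper}) the strategy is the same, now applied at $k=z_a$, where only the rightmost factor $P_a(k)^T=X_{a+1}(z_a,k)^T X_a(z_a,k)^{-T}$ in (\ref{Faalt}) is singular, so $F_a(k)=-H(k)\,X_a(z_a,k)^{-T}$ with $H(k)$ regular at $z_a$. The essential difference is that $X_a(z_a,k)$ now carries the \emph{modulus} $b^a_\mu(z_a)$ (resp.\ $\chi^a_i(z_a)$) rather than zero, so the residue of $X_a(z_a,k)^{-T}$ at $z_a$ has components both along and off the distinguished direction. Extracting the blocks exactly as before gives $\tilde F_{aNN}(k)=-\tilde H_{NN}(k)/\big(2(k-z_a)\big)$ and $\tilde F_{aN\mu}(k)=\tilde H_{N\mu}(k)-\tilde H_{NN}(k)\,b^a_\mu(z_a)/\big(2(k-z_a)\big)$, with $\tilde H$ regular at $z_a$. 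Hence the solution (\ref{gensolaxis}) with $F_a$ replaced by $F_a^T$, i.e.\ $b^a_\mu(z)=\tilde F_{aN\mu}(z)/\tilde F_{aNN}(z)$, becomes $b^a_\mu(z)=b^a_\mu(z_a)-2(z-z_a)\,\tilde H_{N\mu}(z)/\tilde H_{NN}(z)$, which tends to $b^a_\mu(z_a)$; the twist-potential version follows by the substitution $N\to0$, $\mu\to i$, $b^a_\mu\to\chi^a_i$.

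The one genuinely delicate ingredient is the pole-counting in the first step of each case: I must verify that exactly one factor of (\ref{Faalt}) is singular at the relevant endpoint and that the product of the remaining factors — absorbed into $G(k)$ or $H(k)$ together with the polynomial $X$-factor — is regular \emph{and} non-degenerate there. This non-degeneracy is precisely the content of ``for generic values of the moduli'': one needs $\tilde F_{aNN}$ (resp.\ $\tilde F_{a00}$) to possess a genuine simple pole at the endpoint, equivalently $\tilde G_{NN}(z_{a-1})\neq0$ (resp.\ $\tilde H_{NN}(z_a)\neq0$), which fails only on a proper algebraic subset of the moduli. Everything beyond this is the routine block-matrix algebra indicated above, carried out in the two cases of an axis rod and a horizon rod.
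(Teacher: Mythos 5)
Your argument follows essentially the same route as the paper's proof: the same factorisation $F_a(k)=X_a(z_{a-1},k)^{-1}G_a(k)$ (resp.\ $F_a(k)^T=X_a(z_a,k)^{-1}H_a(k)$) with the non-singular part absorbed into $G_a$ or $H_a$, the same block extraction of the simple pole along the distinguished direction, and the same resulting formulas $b^a_\mu(z)=-2(z-z_{a-1})\tilde G_{a\mu N}(z)/\tilde G_{aNN}(z)$ and $b^a_\mu(z)=b^a_\mu(z_a)-2(z-z_a)\tilde H_{a\mu N}(z)/\tilde H_{aNN}(z)$; your pole-counting in (\ref{Faalt}) is also correct. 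The one place where you stop short is the genericity statement itself: you correctly identify that everything hinges on $\tilde G_{aNN}(z_{a-1})\neq 0$ and $\tilde H_{aNN}(z_a)\neq 0$, but you assert that this ``fails only on a proper algebraic subset of the moduli'' without proof. Since these are rational functions of the moduli, that assertion is equivalent to showing they are not identically zero, which requires exhibiting at least one explicit choice of moduli where they are nonzero; the paper does exactly this in Appendix \ref{apen:ghneq0}, choosing parameters so that all $P_b(z_{a-1})$ with $b\neq a-1$ reduce to the identity and then checking the resulting expression case by case according to whether $I_{a-1}$ and $I_a$ are axis or horizon rods. Your proof would be complete once this verification is supplied.
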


\begin{proof}
First, using (\ref{Faalt}), we can write  $F_a(k)= X_a(z_{a-1}, k)^{-1} G_a(k)$, where $G_a(k)$ is a matrix with a finite limit as $k \to z_{a-1}$, for $a=2, \dots, n$. Then, if $I_a$ is an axis rod, from (\ref{Xaxis}) we get
\be
\tilde{F}_a(k) = \left(\begin{array}{cc}  -\delta_\mu^{~\nu} & 0 \\ 0 & \frac{1}{2(k-z_{a-1})}  \end{array} \right) \tilde{G}_a(k),    \label{Falower}
\ee
where $\tilde{F}_a(k)$ is defined in Theorem \ref{th:main}, and $G_a \equiv L_a \tilde{G}_a L_a^T$ is defined similarly. Using (\ref{gensolaxis}) implies the solution
\be
b_{\mu}^a(z) = - \frac{2 (z-z_{a-1})\tilde{G}_{a\mu N}(z)}{\tilde{G}_{aNN}(z)}   \;.
\ee
Therefore, if $\lim_{k \to  z_{a-1}} \tilde{G}_{aNN}(k) \neq 0$ for generic parameter values, the claim (\ref{balower}) follows. This is proved in Appendix \ref{apen:ghneq0}. The analysis for a horizon rod is essentially identical.

Next, we can write $F_a(k)^T = X_a(z_a, k)^{-1} H_a(k)$, where $H_a(k)$ is a matrix with a finite limit as $k \to z_a$. Using (\ref{Xaxis}) we find
\be
\tilde{F}_a(k)^T = \left(\begin{array}{cc}  -\delta_\mu^{~\nu} & \frac{b_\mu^a(z_a)}{2(k-z_{a})} \\ 0 & \frac{1}{2(k-z_{a})}  \end{array} \right) \tilde{H}_a(k),    \label{Faupper}
\ee
where $H_a \equiv L_a \tilde{H}_a L_a^T$. Therefore the general solution (\ref{gensolaxis}) with $F_a(k)$ replaced with $F_a(k)^T$ gives
\be
b_{\mu}^a(z) =b_\mu^a(z_a) - \frac{2 (z-z_{a})\tilde{H}_{a \mu N}(z)}{\tilde{H}_{a NN}(z)}   \;,
\ee
which implies (\ref{baupper}), since  $\lim_{k \to  z_{a}} \tilde{H}_{aNN}(k)  \neq 0$ for generic parameter values (again, see Appendix  \ref{apen:ghneq0}). The analysis for a horizon rod is completely analogous.
\end{proof}

\noindent {\bf Remarks}.
\begin{enumerate}
\item
  Conversely, for the general solution the conditions (\ref{baupper}) and (\ref{chiaupper}) generically provide nontrivial constraints on the moduli (\ref{moduli}). Similarly, for the solution with $F_a(k)$ replaced by $F_a(k)^T$, the conditions (\ref{balower}) and (\ref{chialower}) generically give nontrivial constraints. Thus, in either case these consistency relations on the finite rods generically provide $(D-3)(n-1)$ constraints on the moduli (\ref{moduli}).
  \item  There are analogous relations that are satisfied automatically on the semi-infinite rods, i.e.  
  for the solution (\ref{gensolaxis}) using $F_R$ on $I_R$ and $F_L^T$ (rather than $F_L$) on $I_L$ one finds that 
  \begin{equation}
  \label{bLRpoints}
 \lim_{z \to z_{n}}b^R_{\mu}(z) = b^R_{\mu}(z_n)   \; , \qquad  \lim_{z \to z_{1}} b^L_{\mu}(z) = b_{\mu}^L(z_1) \; . 
  \end{equation}
  \item An important consequence of this Proposition is that {\it if} $\tilde{Q}_1(k)$ (and hence $F_a(k)$) is symmetric, then both sets of consistency conditions (\ref{balower}, \ref{chialower}) and  (\ref{baupper}, \ref{chiaupper}) are satisfied and thus provide no further constraint on the moduli.
    \end{enumerate}

We now consider the constraints on the parameters (\ref{moduli}) that arise from the symmetry of  the matrices $F_a(k)$. As can be seen from their definition (\ref{Fadef}), the symmetry of $F_a(k)$ is equivalent to the symmetry of the single matrix $\tilde{Q}_1(k)$. To this end, we establish the following result.

\begin{lemma} 
\label{lemma}
$\tilde{Q}_1(k)$ is a rational function of the form
\be
\tilde{Q}_1(k) = \frac{\sum_{l=0}^{n+1} q_{p} k^p}{\prod_{a=1}^n (k-z_a)} ,
\ee
where $q_{n+1}=0$ for $D=4$ and  $q_{n+1}=-2C \text{diag}( 0, 1, 0)$ for $D=5$. 
\end{lemma}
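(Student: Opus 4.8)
\emph{Setup and notation.} The proof rests on the explicit affine‑linear‑in‑$k$ form of the matrices $X_a(z,k)$ in (\ref{Xaxis})--(\ref{Xhor}) and on the fact that, by (\ref{Padef}), each $P_a(k)=X_a(z_a,k)^{-1}X_{a+1}(z_a,k)$ has a single simple pole in $k$. For each rod $I_a$ let $\Pi_a := L_a\,\text{diag}(0,\dots,0,1)\,L_a^{-1}$ denote the rank‑one idempotent projecting onto the last Killing field of the corresponding adapted basis ($v_a$ for an axis rod, $\xi_a$ for a horizon rod), and set $\Pi_a^\perp := \mathbf{1}-\Pi_a$, so that $\Pi_a\Pi_a^\perp=\Pi_a^\perp\Pi_a=0$. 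Reading off (\ref{Xaxis}), (\ref{Xhor}) one has
\be
X_a(z,k) = -\Pi_a^\perp + 2(k-z)\,\Pi_a + N_a(z) , \qquad X_a(z,k)^{-1} = -\Pi_a^\perp + \tfrac{1}{2(k-z)}\big(\Pi_a + N_a(z)\big) ,
\ee
where $N_a(z)$ is the nilpotent matrix built from the Ernst/twist potential ($b^a_\mu(z)$ or $\chi^a_i(z)$), which satisfies $\Pi_a^\perp N_a(z)=N_a(z)$, $\Pi_a N_a(z)=N_a(z)\Pi_a^\perp=0$, whence also $\Pi_a\big(\Pi_a+N_a(z)\big)=\Pi_a$ and $\Pi_a^\perp\big(\Pi_a+N_a(z)\big)=N_a(z)$.

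\emph{Pole structure.} Since $(k-z_a)X_a(z_a,k)^{-1} = -(k-z_a)\Pi_a^\perp + \tfrac12\big(\Pi_a+N_a(z_a)\big)$ and $X_{a+1}(z_a,k)$ are both affine‑linear in $k$, the product $(k-z_a)P_a(k)$ is a matrix polynomial of degree at most $2$. Hence
\be
\mathcal P(k) := \Big[\prod_{a=1}^{n}(k-z_a)\Big]\,Q_1(k) = \prod_{a=1}^{n}\big[(k-z_a)\,P_a(k)\big]
\ee
is a matrix polynomial (a priori of degree at most $2n$), and $\tilde Q_1(k)=C\,Q_1(k)=C\,\mathcal P(k)\big/\prod_{a}(k-z_a)$ is rational with at worst simple poles at $k=z_1,\dots,z_n$, as claimed.

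\emph{Degree count.} Factor $(k-z_a)P_a(k)=G_a(k)\,H_a(k)$, with $G_a(k):=(k-z_a)\,X_a(z_a,k)^{-1}$ and $H_a(k):=X_{a+1}(z_a,k)$. For $a=1,\dots,n-1$ the normalisation (\ref{ErnstBC}) forces $b^{a+1}_\mu(z_a)=0$, so $H_a(k)=-\Pi_{a+1}^\perp+2(k-z_a)\Pi_{a+1}$, and a direct multiplication using the identities above gives
\be
H_a(k)\,G_{a+1}(k) = (k-z_{a+1})\,\Pi_{a+1}^\perp + (k-z_a)\,\Pi_{a+1} - \tfrac12\,N_{a+1}(z_{a+1}) ,
\ee
which is again affine‑linear in $k$: the putative $k^2$ term drops out by $\Pi_{a+1}\Pi_{a+1}^\perp=0$ and the surviving $k$‑coefficient is $\Pi_{a+1}^\perp+\Pi_{a+1}=\mathbf{1}$. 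Regrouping,
\be
\mathcal P(k) = G_1(k)\,\big(H_1(k)G_2(k)\big)\big(H_2(k)G_3(k)\big)\cdots\big(H_{n-1}(k)G_n(k)\big)\,H_n(k)
\ee
exhibits $\mathcal P$ as a product of $n+1$ affine‑linear matrices, so $\deg\mathcal P\le n+1$; writing $\mathcal P(k)=\sum_{p=0}^{n+1}q^{\mathcal P}_p k^p$ puts $\tilde Q_1(k)$ in the stated form with $q_p=C\,q^{\mathcal P}_p$.

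\emph{Leading coefficient, and the main obstacle.} The coefficient of $k^{n+1}$ in $\mathcal P(k)$ is the product of the $k$‑coefficients of the $n+1$ factors just displayed: $-\Pi_1^\perp$ from $G_1$, the identity from each $H_aG_{a+1}$, and $2\Pi_{n+1}$ from $H_n$. Therefore $q_{n+1} = -2C\,\Pi_1^\perp\Pi_{n+1} = -2C\,\Pi_L^\perp\Pi_R$, using $I_1=I_L$ and $I_{n+1}=I_R$. In the standard basis $\Pi_L=\text{diag}(0,\dots,0,1)$, while by (\ref{LRtrans}) $\Pi_R=\text{diag}(0,1,0)$ for $D=5$ and $\Pi_R=\text{diag}(0,1)$ for $D=4$; hence $\Pi_L^\perp\Pi_R=\text{diag}(1,0)\,\text{diag}(0,1)=0$ for $D=4$, giving $q_{n+1}=0$, while $\Pi_L^\perp\Pi_R=\text{diag}(1,1,0)\,\text{diag}(0,1,0)=\text{diag}(0,1,0)$ for $D=5$, giving $q_{n+1}=-2C\,\text{diag}(0,1,0)$. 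The one genuinely computational ingredient is the collapse of $H_a(k)G_{a+1}(k)$ from degree two to degree one with unit leading term; it relies crucially on the normalisation $b^{a+1}_\mu(z_a)=0$ at the lower endpoint of each finite rod (which is what puts $H_a$ in its two‑term form) together with the full battery of projection and nilpotency relations for $\Pi_a,\Pi_a^\perp,N_a$. One must also keep the two boundary factors $G_1$ and $H_n$ separate, since on the semi‑infinite rods $b^L_\mu(z_1)$ and $b^R_\mu(z_n)$ need not vanish — but these enter only the $k^0$ part of those two factors, hence affect neither the degree bound nor $q_{n+1}$.
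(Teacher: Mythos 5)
Your proof is correct and is essentially the paper's own argument: your regrouping $\mathcal{P}(k)=G_1\,(H_1G_2)\cdots(H_{n-1}G_n)\,H_n$ is exactly the paper's decomposition $Q_1(k)=X_L(z_1,k)^{-1}R_2(k)\cdots R_n(k)\,X_R(z_n,k)$ with the denominators cleared, since $H_aG_{a+1}=(k-z_{a+1})I_{D-2}+S_{a+1}$ reproduces the paper's key identity $R_{a+1}(k)=I_{D-2}+S_{a+1}/(k-z_{a+1})$. The only difference is presentational — you carry out the algebra with the projectors $\Pi_a,\Pi_a^\perp,N_a$ rather than explicit block matrices — and your extraction of $q_{n+1}=-2C\,\Pi_L^\perp\Pi_R$ matches the stated leading coefficients in both dimensions.
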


\begin{proof}
  To see this, it is convenient to rewrite (\ref{Qadef}) for $a=1$, as
\be
{Q}_1(k) =  X_L(z_1, k)^{-1} R(k) X_R(z_n, k),   \label{Q1decomp}
\ee
where we have defined 
\bea
&&R(k)= R_2(k) \dots R_n(k), \label{Rdef} \\
&&R_a(k)= X_a(z_{a-1}, k) X_a(z_a,k)^{-1}  \label{Radef}
\eea
for $a=2, \dots, n$.
Using our solution (\ref{Xaxis}) we find that for any axis rod $I_a$ (excluding $I_L, I_R$)
\be
R_a(k)
=  I_{D-2} +\frac{S_a}{k-z_a}, \qquad   R_a(k)^{-1}= I_{D-2} - \frac{S_a}{k-z_{a-1}},    \label{RIS}
\ee
where
\be
S_a\equiv  L_a \left( \begin{array}{cc} 0& - \tfrac{1}{2} b_\mu^a(z_a) \\ 0 & \ell_a \end{array} \right) L_a^{-1}   \label{Sdef}
\ee
and $I_{D-2}$ is the $(D-2)$-dimensional identity matrix. The same expression holds for any horizon rod upon the obvious replacement of $b^a_\mu(z_a)$ with $\chi_i^a(z_a)$.  The lemma now follows straightforwardly from (\ref{Q1decomp}) and the definition (\ref{Qtilde}).  
\end{proof} 

\noindent {\bf Remarks.} 

\begin{enumerate}
\item For $D=4$,  $q_{n+1}$ is trivially symmetric. For $D=5$,  the explicit form of $C$  is computed in Section \ref{sec:5d}, see (\ref{C5d}), which also ensures that $q_{n+1}$ is automatically symmetric.  Therefore, symmetry of $\tilde{Q}_1(k)$ is equivalent to symmetry of the coefficient matrices 
\be
q^T_p= q_p  \; ,   \label{modspace}
\ee
for $p=0,1, \dots, n$.  (\ref{modspace}) are a set of nonlinear algebraic constraints for the moduli (\ref{moduli}) which together with the inequalities $\ell_a>0$ and (\ref{FNNpos}, \ref{F00neg}) define the {\it moduli space of solutions}. The moduli space equations (\ref{modspace}) can impose up to $\tfrac{1}{2}(D-3)(D-2)(n+1)$ constraints on the parameters. For $D=4$ these must be equivalent to the $n+1$ equations obtained by Varzugin~\cite{Varzugin:1997ee}.
\item Consider the special case where all the continuous moduli (\ref{moduli}) are set to zero, except for the rod lengths $\ell_a$.  Also, for $D=5$, suppose that any finite axis rods have rod vectors  $v_L$ or $v_R$.  Then it is straightforward to see that $\tilde{Q}_1(k)$ is diagonal (the matrix $C$ turns out to be diagonal for $D=4,5$, see (\ref{C4d}, \ref{C5d})). Thus, in particular, $\tilde{Q}_1(k)$ is automatically symmetric and there are no constraints on the remaining moduli $\ell_a$, i.e. we obtain a solution to (\ref{modspace}). This class  corresponds to the (generalised) Weyl solutions which are defined by the additional requirement that the $D-2$ commuting Killing fields are hypersurface-orthogonal~\cite{Emparan:2001wk} (so all Ernst/twist potentials must be constants which can be fixed to zero).

\item
The matrices $F_L(k)$ and $F_R(k)$ which determine the general solution on $I_L$ and $I_R$ respectively can be written in terms of $Q_1(k)$ and $C$ using (\ref{Faalt}). Therefore, the asymptotics of the general solution can be deduced from the asymptotic expansion for $\tilde{Q}_1(k)$ for $k\to \infty$, which from the Lemma takes the form
\be
\tilde{Q}_1(k)= q_{n+1} k + q_n + q_{n+1} \sum_{a=1}^n z_a +O(k^{-1})  \; .  \label{Qtildeasympt}
\ee
The coefficients can be easily extracted from the decomposition (\ref{Q1decomp}) together with
\be
R(k) = I_{D-2} + \frac{S}{k}+ O(k^{-2}) , \qquad S= \sum_{a=2}^n S_a  \; .  \label{S}
\ee
The computation of the matrix $C$ is dimension dependent so we present this and the coefficients in the asymptotic expansion in later sections. \end{enumerate}

We are now ready to consider the moduli space of solutions with $n+1$ rods and $h$ horizons (thus there are $n-1-h$ finite axis rods) that are potentially singular on the axis. The general solution on the $z$-axis we have found depends on a number of moduli (\ref{moduli}):
 the rod structure, the change in Ernst and twist potentials across each axis and horizon rod, and the horizon angular velocities. Thus, the number of continuous parameters is given by $n-1+(n+1+h)(D-3)$.  On the other hand, from the uniqueness and existence Theorems \ref{th1} and \ref{th2} we know that the solutions can be specified by the rod structure and the change in twist potentials across each horizon rod (recall by (\ref{Jhorizon}) these are equal to the horizon angular momenta $\{J^a_i \}$), which consists of $n-1+(D-3)h$ parameters (see (\ref{dimmodsp})). Thus we expect $(D-3)(n+1)$ relations on the moduli (\ref{moduli}); these may be thought of as determining  $\{ \Omega^a_i, b^a_\mu(z_a)
, b^L_\mu(z_1), b^R_\mu(z_n)  \} $ in terms of the fundamental moduli $\{ \ell_a, v_a, \chi^a_i(z_a) \}$ (although in practice these may not be the best parameters to express the solution with).

For $D=4$ we see that this coincides with the number of conditions that symmetry of $\tilde{Q}_1(k)$ can impose, i.e.  (\ref{modspace}),  which gives $n+1$ relations. However, for $D=5$ we find that symmetry of $\tilde{Q}_1(k)$ imposes too many conditions, i.e. it imposes $3(n+1)$ rather than $2(n+1)$ conditions. Hence, for $D=5$, there must exist $n+1$ independent redundancies in the symmetry relations (\ref{modspace}). Therefore, we conclude that while for $D=4$ equations (\ref{modspace}) provide a good description of the moduli space of solutions, for $D=5$ imposing symmetry of $\tilde{Q}_1(k)$ leads to a redundant description of the moduli space.  In Section \ref{sec:5d} we will discuss an alternate description for the $D=5$ moduli space.

\section{Four dimensions}
\label{sec:4d}

\subsection{General solution and physical parameters}

In four spacetime dimensions the general solution on each components of the axis and horizon simplifies. It is therefore worth recording some of the key formulas and the solution again in this case. 
The main simplification arises because there is only one axial Killing field and hence the rod vector which vanishes on any axis rod is always $m=\partial_\phi$ (this of course includes the semi-infinite rods $I_L$ and $I_R$).

Near any axis rod $I_a$, the metric (\ref{gaxisadapted}) relative to the standard basis $(k,m)$  is simply
\be
g = \left( \begin{array}{cc} h - h^{-1} \rho^2 w^2 & \rho^2 h^{-1} w\\ h^{-1} \rho^2 w & - h^{-1} \rho^2 \end{array} \right),  \label{4dg}
\ee
where $h <0$.  The general solution to the linear system (\ref{LPbdry})  on the each axis rod  can be written as (\ref{LPgensol}) where 
\be
X_a(z, k) =  \left( \begin{array}{cc} -1 & b^a(z) \\ 0& 2(k-z) \end{array} \right)   \label{4dXaxis}
\ee
and $b^a(z)=b(z)-b(z_{a-1})$ for $a=2, \dots, n$, $b^L(z)=b^R(z)=b(z)$, and $b(z)$ is the Ernst potential (\ref{4dernst}) fixed by imposing that $b \to 0$ at infinity. 

On the other hand, near a horizon rod $I_a$  the metric (\ref{ghoradapted}) relative to the standard basis is
\be
g = L_a  \left( \begin{array}{cc} \gamma - \gamma^{-1} \rho^2 \omega^2 & \rho^2 \gamma^{-1} \omega \\ \gamma^{-1} \rho^2 \omega & - \gamma^{-1} \rho^2 \end{array} \right)L_a^T,
\ee
where $\gamma>0$ and
\be
L_a = \left( \begin{array}{cc} -\Omega^a & 1 \\1& 0 \end{array} \right)  \; .
\ee
The general solution to the linear system on $I_a$ is (\ref{LPgensol}) where
\be
X_{a} (z,k)= L_a\left( \begin{array}{cc} -1 & \chi^a(z) \\ 0& 2(k-z) \end{array} \right) L_a^{-1}  \;   \label{4dXH}
\ee
and $\chi^a(z)= \chi(z)-\chi(z_{a-1})$ is the twist potential defined by (\ref{twist}).

We  now consider the general solution with rods $I_{a=1, \dots, n+1}$.   This is given by Theorem \ref{th:main} in terms of the matrices $F_a(k)$. In turn, the matrices $F_a(k)$ are constructed from the matrices $P_a(k)$ and a constant matrix $C$ arising from the solution to the linear system at infinity using (\ref{Faalt}).  To fix $C$ we need to explicitly compute asymptotic solutions to the linear system (\ref{Psiinfty}), (\ref{Deltainfty}) which match on to the axis solution (\ref{barPsiaxis}), (\ref{Deltaaxis}). Then, from the definition (\ref{Padef}) for matrices $P_a(k)$, we deduce that the general solution on the axis and horizons depends only on the following constants: the rod lengths $\ell_a= z_{a}-z_{a-1}$, the angular velocity of each horizon $\Omega^a$, the jump in Ernst potentials $b(z_a)-b(z_{a-1})$ over each axis rod and jump in twist potentials $\chi(z_a)-\chi(z_{a-1})$ over each horizon rod. 

We now turn to the computation of the constant matrix $C$.  Firstly, Minkowski spacetime  in polar coordinates (\ref{polarcoords}) is given by
\be
\bar{g} = \text{diag}(-1, r^2\sin^2\theta), \qquad \bar{\nu}=0 \; ,    \label{4dflat}
\ee
which implies $\bar{S}= \text{diag}(0, 2)$ and $\bar{T}= \text{diag}(0,2 \cos\theta)$, where $S,T$ are defined in (\ref{LPpolar}). The general solution to (\ref{LPpolar}) in Minkowski space, which agrees with the axis solution (\ref{barPsiaxis}), is
\be
\bar{\Psi}_+ =\text{diag}(-1 , \;  \mu_+ )   \; .   \label{4dPsiflat}
\ee
Thus, using the  asymptotic expansion for $\mu_+$ in polar coordinates, given
 in section \ref{subsec:infinity}, we find that  
 \be
\bar{\Psi}_+ (r, \theta, k) =\left\{ \begin{array}{c}  \text{diag} \left(-1 , \;  -r (1+ \cos \theta) +O(1)  \right)   \qquad 0 \leq \theta < \theta_* \\  \text{diag} \left(  -1 , \;  r (1 - \cos \theta) +O(1)  \right)   \qquad \theta_*  < \theta \leq \pi \end{array} \right.  \;  \label{4dbarPsiinfty}
\ee
as $ r\to \infty$.

More generally, 
any four-dimensional asymptotically flat spacetimes in polar coordinates (\ref{polarcoords}) must take the form
\be
g =  \left( \begin{array}{cc} -1+ \frac{2M}{r}+ O(r^{-2})  & - \frac{2J\sin^2\theta}{r} (1+O(r^{-1})) \\  - \frac{2J\sin^2\theta}{r} (1+O(r^{-1})) & r^2\sin^2\theta (1+ O(r^{-1}))\end{array} \right)  \; ,  \label{4dAF}
\ee
as $r \to \infty$, where $M, J$ are the ADM mass and angular momentum.  It follows that the corresponding matrices $S, T$ in the linear system (\ref{LPpolar}) are now given by
\be
S -\bar{S}=  \left( \begin{array}{cc}   O(r^{-1})  & O(r^{-3}) \\  O(r^{-1}) & O(r^{-1})\end{array} \right)  \; , \qquad 
T-\bar{T}=\left( \begin{array}{cc}  O(r^{-2})  & O(r^{-3}) \\ O(r^{-2}) &  O(r^{-1})\end{array} \right)  \;  \; ,
\ee
which together with (\ref{4dbarPsiinfty}) imply that the RHS of  equations (\ref{Deltaeq}) are
\be
\Upsilon_{r+} = \left( \begin{array}{cc}   O(r^{-2})  & O(r^{-3}) \\  O(r^{-3}) & O(r^{-2})\end{array} \right) , \qquad\Upsilon_{\theta +} = \left( \begin{array}{cc}  O(r^{-1})  & O(r^{-2}) \\ O(r^{-2}) &  O(r^{-1})\end{array} \right)   \; 
\ee
for all $0\leq \theta \leq \pi$.
This justifies the claim (\ref{Deltainfty}). Thus we may compute $C$ from (\ref{C}) using (\ref{4dbarPsiinfty}), which gives
\be
\label{C4d}
C= -I_2  \; .
\ee
Note that from (\ref{Qtilde}) we deduce $\tilde{Q}_1(k)= - Q_1(k)$ and hence  that $Q_1(k)$ must be a symmetric matrix.

As a simple example, consider the rod structure of Minkowski spacetime, which is given by a single rod consisting of the whole $z$-axis. Thus the right and left semi-infinite axes are identified $I_L=I_R$ and there are no continuity conditions to be imposed.  Then combining (\ref{MLR}) with (\ref{bpaxis}) gives
\be
\mathring{g}(z)= X(z,z)
\ee
which using (\ref{4dXaxis}) is equivalent to
\be
h(z)=-1, \qquad b(z)=0  \; .
\ee
This is indeed the data for Minkowski spacetime (\ref{4dflat}).  In itself this a nontrivial result: it shows that any asymptotically flat stationary and axisymmetric vacuum solution with the same rod structure as Minkowski spacetime is isometric to Minkowski spacetime on the axis.  This of course follows from the well known no-soliton theorems.

Given a solution $(h(z), b(z))$ on $I_L$ or $I_R$ we can compute the  mass and angular momentum. Comparing to (\ref{4dAF}) we find as $|z|\to \infty$ 
\be
h(z) = -1+\frac{2M}{|z|} +O(z^{-2}) \; ,   \qquad b(z) = - \frac{\text{sign}(z) 2 J}{z^2} + O(z^{-3})   \label{4daxisMJ}
\ee
where $b(z)$ is determined using (\ref{baxis}) and we have fixed the integration constant so that it vanishes at infinity.  

Finally, given the solution on a horizon rod, the surface gravity can be computed from (\ref{4dkappa}), which in principle may impose a nontrivial constraint on the parameters.

\subsection{Asymptotics of general solution}

We now confirm our general solution  (\ref{gensolaxis}) is asymptotically flat and compute the asymptotic charges.  In particular, the metric and Ernst potential on $I_L$ are given by the components of $F_L(k)= Q_1(k)^T$. Using the  decomposition of $Q_1(k)$ given in equation (\ref{Q1decomp}) we find
\bea
&&F_L(k) = \left( \begin{array}{cc} R_0^{~0}(k) - \frac{R_1^{~0}(k) b(z_1)}{2(k-z_1)}  & -\frac{R_1^{~0}(k)}{2(k-z_1)}  \\  \tilde{F}_{L10}(k) &  \frac{ R_{1}^{~0}(k) b(z_n)+ 2(k-z_n) R_1^{~1}(k)}{2(k-z_1)} \end{array} \right) \; , \\
&&\tilde{F}_{L10}(k) = 2(k-z_n) \left( -R_0^{~1}(k)+\frac{R_1^{~1}(k) b(z_1)}{2(k-z_1)} \right) - b(z_n) \left(R_0^{~0}(k) - \frac{R_1^{~0}(k) b(z_1)}{2(k-z_1)} \right)   \nonumber
\eea
and $R_A^{~B}(k)$ denote the components of the matrix (\ref{Rdef}) in the standard basis.
Hence using (\ref{gensolaxis}) we find that the solution on $I_L$ is
\bea
&&h(z)= -\frac{2(z-z_1)}{ R_{1}^{~0}(z) b(z_n)+ 2(z-z_n) R_1^{~1}(z)},  \\
&&b(z)= -\frac{R_1^{~0}(z)}{ R_{1}^{~0}(z) b(z_n)+ 2(z-z_n) R_1^{~1}(z)} \;,  
\eea
where we have used the fact that $h(z) = - \tilde{F}_{L11}(z)^{-1}$ (see Remark 2 below Theorem \ref{th:main}). We may now compute the asymptotics of the solution as $z\to -\infty$. Using (\ref{S}) we find
\be
h(z) = -1- \frac{S_0^{~0}}{z} +O(z^{-2}), \qquad b(z) = - \frac{S_1^{~0}}{2 z^2} + O(z^{-3})  \label{4dinfty}
\ee
where $S_A^{~B}$ denote the components of the matrix $S$ defined in (\ref{S}). 

We can evaluate these relations more explicitly using (\ref{Sdef}).  We find that 
\bea
&&S_a = \left( \begin{array}{cc} 0& - \tfrac{1}{2}b^a(z_a) \\ 0 & \ell_a  \end{array} \right) \;,  \qquad I_{a\neq L, R} \subset \hat{A} \; , \\
&& S_a=  \left( \begin{array}{cc} \ell_a + \tfrac{1}{2} \Omega^a \chi^a(z_a) & \Omega^a(\ell_a + \tfrac{1}{2} \Omega^a \chi^a(z_a) )  \\  - \tfrac{1}{2}\chi^a(z_a) &  - \tfrac{1}{2}\Omega^a\chi^a(z_a)  \end{array} \right) \; , \qquad I_a \subset \hat{H} \; .
\eea
Therefore, from the asymptotics of the general solution derived above we deduce
\bea
&&M =  \sum_{I_a \subset \hat{H}} M_a \;,  \qquad M_a= \tfrac{1}{2}(\ell_a+ \tfrac{1}{2} \Omega^a \chi^a(z_a)), \label{4dMa} \\ &&J=  \sum_{I_a \subset \hat{H}} J^a \; , \qquad J^a= \tfrac{1}{8}\chi^a(z_a) \; . \label{4dJa} 
\eea
Observe that expressions for the angular momenta are the well-known relations (\ref{Jhorizon}).  The expressions for the mass (\ref{4dMa}) together with (\ref{area}) imply the Smarr relation (for multi-black holes).

On the other hand, suppose instead we use the alternate form of the solution where $F_L(k)$ is replaced by $F_L(k)^T$. Then the only change in the solution is that now $b(z)= \tilde{F}_{L10}(z)/\tilde{F}_{L11}(z)$. Working to first order in the expansion for $R(z)$ as in (\ref{S}) allows us only to determine the $O(1)$ term,
\begin{equation}
    b(z) = b(z_1)-2 S_0^{~1}- b(z_n) + O(z^{-1}).
\end{equation}
Therefore $b(z) \to 0$ implies
\bea
b(z_n)- b(z_1) = -2S_0^{~1} =  \sum_{\substack{I_{a} \subset \hat{A}\\a \neq L,R}} b^a(z_a) - 4 \sum_{I_a \subset \hat{H}} \Omega^a M_a  \label{4dbinfty}  \; .
\eea
We provide an alternate derivation of this relation in Appendix \ref{apen:ernstHorizon} (the same relation was  also found in~\cite{Varzugin:1997ee}).  It is worth emphasising that the coefficient $q_n$ appearing in Lemma \ref{lemma} can be deduced from the above to be 
\be
q_n = -\left( \begin{array}{cc} 1 & b(z_1)- b(z_n) -2 S_0^{~1} \\  0& 1 \end{array} \right)   \qquad
\ee
and thus symmetry of this is equivalent to (\ref{4dbinfty}). Therefore this  asymptotic analysis solves the $p=n$ moduli space equation (\ref{modspace}). 

We can similarly consider the asymptotics of the solution on $I_R$ which is given by the matrix $F_R(k)= Q_1(k)^{-1}$. The computation is essentially the same as above and one finds the formulas (\ref{4dMa}) and (\ref{4dbinfty}). Furthermore, imposing symmetry of $F_R(k)$ one now gets (\ref{4dJa}).

\subsection{Kerr solution}

We now consider solutions with the same rod structure as Kerr. Namely, we assume there are three rods $I_L= (-\infty, z_1), I_H=(z_1, z_2)$, $I_R=(z_2, \infty)$ where $I_H$ is a horizon rod.  The solution is given by Theorem \ref{th:main} in terms of  the matrices $F_a(k)$  given by  (\ref{Faalt}), which in this case are simply 
\be
F_L(k)=Q_1(k)^T\; , \qquad  F_H(k)= P_1(k)^{-1} P_2(k)^T \; , \qquad  F_R(k)=Q_1(k)^{-1} \; ,
\ee
where $Q_1(k)=P_1(k) P_2(k)$ and the $P_a(k)$ are defined by (\ref{Padef}).

First, let us consider $z<z_1$.  It is convenient to use the alternate form of the solution $(h(z), b(z))$ on $I_L$ where $F_L(k)$ is replaced with $F_L(k)^T= Q_1(k)$ in (\ref{gensolaxis}). We  then compute the mass and angular momentum by comparing to the asymptotic expansions (\ref{4daxisMJ}), which in fact  also fixes $b(z_1), b(z_2)$. We find
\bea
&&M = \tfrac{1}{2} [\ell_H + \tfrac{1}{2} \Omega (\chi(z_2)-\chi(z_1)) ]  \; , \label{KerrM} \\
&&b(z_1)=- b(z_2) = 2\Omega M  \label{biKerr}  \; , \\
&&J = \Omega M^2 [ 4M - \tfrac{1}{2}\Omega (\chi(z_2)-\chi(z_1)) ]  \; , \label{KerrJ}
\eea
where $\ell_H=z_2-z_1$ and we have written the latter quantities in terms of $M$.   

On the other hand, from our general asymptotic analysis, (\ref{4dJa}) reduces to
\be
\chi(z_2)- \chi(z_1)= 8 J \;  , \label{4ddeltaY}
\ee
while the expressions (\ref{4dMa}) and (\ref{4dbinfty}) already follow from (\ref{KerrM}) and (\ref{biKerr}). We can use (\ref{4ddeltaY}) to eliminate $\chi(z_2)-\chi(z_1)$. Then (\ref{KerrM})  gives\footnote{Combining this with (\ref{area}) leads to the standard Smarr relation.}
\be  
\ell_H = 2(M- 2 \Omega J)  \label{ellHKerr}
\ee
and (\ref{KerrJ}) can be solved for $J$
\be
J = \frac{4 \Omega M^3}{1+ 4 \Omega^2 M^2}  \; .  \label{KerrJ2}
\ee
One can now check that $Q_1(k)$ is symmetric and therefore we have fully solved the moduli space equations (\ref{modspace}).

Substituting (\ref{KerrJ2}) back into (\ref{ellHKerr}) we find
\be
\ell_H= \frac{2 M(1- 4 \Omega^2 M^2)}{1+ 4\Omega^2 M^2}
\ee
and hence positivity of the horizon rod length $\ell_H>0$ and of the mass $M>0$ implies
\be
|\Omega| < \frac{1}{2M} \; .  \label{Kerrmodspace}
\ee
This determines the full moduli space of nonextreme Kerr black hole solutions. Indeed, the relation (\ref{KerrJ2}) now implies the well-known inequality
\be
|J|<M^2   \label{KerrMS}  \;  . 
\ee
In terms of the physical quantities the solution simplifies a little. We find for $z<z_1$:
\bea
&&h(z) = \frac{-(z-z_1)(z-z_2)}{(z-z_2)^2 - 4 \Omega J (z-z_2) + 4 M \Omega J},  \\  &&b(z) = \frac{2J}{(z-z_2)^2 - 4 \Omega J (z-z_2) + 4 M \Omega J}    \; .
\eea
It is worth noting that the relation for $b(z_1)$ in (\ref{biKerr}) is automatically satisfied by this solution (as it must be by Remark 2 below Proposition \ref{prop:consistency}). Thus from the above analysis we see that the solution is naturally parameterised by $(M, \Omega)$\footnote{Eq (\ref{KerrJ2}) can be solved for $\Omega$, yielding $\Omega J = M- \sqrt{M^2-\frac{J^2}{M^2}}$.  Using this, the solution can be equivalently uniquely parameterised in terms of $(M, J)$.}.  It is interesting to note that we have fully determined the moduli space (\ref{KerrMS}) of non-extremal Kerr solutions by only analysing one semi-infinite axis (this was also found in~\cite{Neugebauer:2003qe}).

A similar analysis can be performed for the other semi-infinite axis $z>z_2$. One again finds (\ref{KerrM})-(\ref{KerrJ}) and the solution for $z>z_2$:
\bea
&&h(z) = \frac{-(z-z_1)(z-z_2)}{(z-z_1)^2 +4 \Omega J (z-z_1) + 4 M \Omega J},  \\  &&b(z) = \frac{-2J}{(z-z_1)^2 +4 \Omega J (z-z_1) + 4 M \Omega J}.
\eea
Again, the relation (\ref{biKerr}) is  automatically satisfied by this $b(z_2)$ (as it must be).
Thus, the analysis of this semi-infinite axis yields equivalent results.

Finally, consider the horizon rod $z_1< z<z_2$.  We find that (\ref{gensolhor}) gives
\bea
&&\gamma(z)=\frac{-4(z-z_1)(z-z_2)}{1+4\Omega^2(z-(z_1-M))(z-(z_2+M))},  \\
&&\chi(z)= \frac{-8\Omega(z-z_1)^2(z-(z_2+M))}{1+4\Omega^2(z-(z_1-M))(z-(z_2+M))},
\eea
where we have used (\ref{biKerr}) and (\ref{KerrM}).  The solution for $\chi(z)$ can be shown to  automatically satisfy (\ref{4ddeltaY}) as a consequence of the above relations (as guaranteed by Proposition \ref{prop:consistency}). Furthermore, it can be checked that ${\gamma}'(z_1)=-\gamma'(z_2)$ automatically so (\ref{4dkappa}) implies that the metric on the horizon has no conical singularities and the surface gravity simplifies to
\be
\kappa= \frac{1- 4 \Omega^2M^2}{4M}  \; .
\ee
Notice that (\ref{Kerrmodspace}) is equivalent to  the non-extremality condition $\kappa>0$.

To summarise, we have fully determined the metric on the whole $z$-axis for any solution with the same rod structure as Kerr and computed all asymptotic and horizon physical quantites. We find this reproduces the full moduli space of nonextremal Kerr black holes, as it must from the no-hair theorem. It is interesting to note that our analysis does this without knowledge of the full spacetime metric.

\section{Five dimensions}
\label{sec:5d}

\subsection{General solution and physical parameters}

For $D=5$ the general solution for the metric data $(h^a_{\mu\nu}(z), b^a_\mu(z))$ on any axis rod  $I_a$ takes the explicit form (\ref{gensolaxis}), with an analogous expression for the data $(\gamma_{ij}^a(z), \chi_i^a(z))$ on any horizon rod (\ref{gensolhor}). The solution is given in terms of components of the matrices $F_a(k)$, which depend on the moduli (\ref{moduli}) and a matrix $C$.  The matrix $C$  arises in the asymptotic solution (\ref{Psiinfty}), (\ref{Deltainfty}), in particular it relates the solution in the left and right segments (\ref{Cdef}). Therefore, to fully fix the general solution on the axis and horizon rods we  need to find the solution to the linear system in Minkowski spacetime which matches onto our axis solution (\ref{barPsiaxis}) and compute the corresponding matrix $C$ using (\ref{C}).

Five-dimensional Minkowski spacetime in polar coordinates (\ref{polarcoords}) is
\be
\bar{g} =  \text{diag} \left( -1 , \;  r(1-\cos\theta) , \;  r(1+\cos\theta)  \right), \qquad e^{2\bar{\nu}}=\frac{1}{2r}   \; ,   \label{5dflat}
\ee 
which gives
\be
\bar{S} = \text{diag}( 0, 1, 1), \qquad \bar{T}= \text{diag}(0, 1+\cos\theta, -(1-\cos\theta)) \; .
\ee
For $r>|k|$, the solution to (\ref{LPpolar}) on Minkowski space which agrees with the axis solution (\ref{barPsiaxis})  is
\be
\bar{\Psi}_+ = \text{diag} \left( -1, \; r(1-\cos\theta)-\mu_+  , \; r(1+\cos\theta)+ \mu_+ \right) N(r, \theta, k)  \; , \label{5dPsiflat}
\ee
where the matrix
\be
N(r, \theta, k)= \left\{ \begin{array}{cc} \text{diag}(1,\; -1, \; -2k)^{-1}  & \quad 0 \leq \theta < \theta_* \\ \text{diag}(1,\; 2k, \; 1)^{-1}  & \quad  \theta_* <\theta \leq \pi \end{array} \right. 
\ee
is needed to ensure the solution matches with the one on the axes.
In particular,  using the  asymptotic expansions
 in Section \ref{subsec:infinity} we find that 
 \be
\bar{\Psi}_+ (r, \theta, k) =\left\{ \begin{array}{c}  \text{diag} \left(-1 , \; -2 r+ O(1) , \; -\tfrac{1}{2}(1+\cos\theta)+ O(r^{-1})   \right)   \qquad 0 \leq \theta < \theta_* \\  \text{diag} \left(  -1 , \; -\tfrac{1}{2}(1-\cos\theta)+O(r^{-1}) , \; 2r +O(1)  \right)   \qquad \theta_*  < \theta \leq \pi \end{array} \right. \; ,  \label{5dPsiinfty}
\ee 
 as $ r\to \infty$.

Now, any five-dimensional asymptotically flat spacetimes in polar coordinates must take the form~\cite{Harmark:2004rm}
\be
g =  \left( \begin{array}{ccc} -1+ \frac{4M}{3\pi r}+ O(r^{-2})  & - \frac{J_1(1-\cos\theta)}{\pi r} (1+O(r^{-1})) & - \frac{J_2(1+\cos\theta)}{\pi r} (1+O(r^{-1}))\\  - \frac{J_1(1-\cos\theta)}{\pi r} (1+O(r^{-1})) & r(1-\cos\theta) (1+ O(r^{-1})) & \frac{\zeta \sin^2\theta }{r}(1+ O(r^{-1})) \\  - \frac{J_2(1+\cos\theta)}{\pi r} (1+O(r^{-1})) &\frac{\zeta \sin^2\theta}{r}(1+ O(r^{-1})) &  r(1+\cos\theta) (1+ O(r^{-1}))\end{array} \right)  \; ,  \label{5dAF}
\ee
as $r \to \infty$, where $M, J_i$ are the ADM mass and angular momenta and $\zeta$ is a gauge invariant constant.  From this one can show that $S,T$ appearing in the linear system in polar coordinates (\ref{LPpolar}) satisfy
\be
S- \bar{S}= \left( \begin{array}{ccc} O(r^{-1}) & O(r^{-2}) & O(r^{-2}) \\ O(r^{-1}) & O(r^{-1}) & O(r^{-2}) \\ O(r^{-1}) & O(r^{-2}) & O(r^{-1}) \end{array} \right), \qquad 
T -\bar{T}= \left( \begin{array}{ccc} O(r^{-2}) & O(r^{-2}) & O(r^{-2}) \\  O(r^{-2}) &   O(r^{-1}) &  O(r^{-2}) \\  O(r^{-2}) &  O(r^{-2}) & O(r^{-1}) \end{array} \right) \; .
\ee
Then using (\ref{5dPsiinfty}) we find that the matrices $\Upsilon$ defined in (\ref{Deltaeq}) satisfy\footnote{In fact one obtains different fall-offs for $0\leq \theta< \theta_*$ and $\theta_*< \theta \leq \pi$ where some components have faster fall-offs. We will not need these in our analysis. }
\be
\Upsilon_{r +} = O(r^{-2})
, \qquad \Upsilon_{\theta +} = O(r^{-1}) \; ,
\ee
for all $0\leq \theta \leq \pi$, 
thus justifying (\ref{Deltainfty}).   Finally, from (\ref{C}) we find
\be
\label{C5d}
C= \left( \begin{array}{ccc} -1 & 0 & 0 \\ 0 & 1 & 0 \\ 0 & 0 & -1 \end{array} \right) \; .
\ee
We therefore have fully fixed the general solution.

We now relate the parameters of the solution to the asymptotic quantities. Given a solution $(h^L_{\mu\nu}, b^L_{\mu})$ on $I_L$ we can compute the mass and angular momenta.  From (\ref{5dAF}) we deduce that 
\bea
&&h^L_{\mu\nu}(z) = \left( \begin{array}{cc} -1 - \frac{4M}{3\pi z}+ O(z^{-2})&   \frac{2J_1}{\pi z} +O(z^{-2}) \\  \frac{2J_1}{\pi z} +O(z^{-2}) & -2z +O(1) \end{array} \right),  \nonumber \\
&&b^L_{\mu}(z)= \left( \begin{array}{c} -\frac{2J_2}{\pi z}+ O(z^{-2}) \\  \frac{4 \zeta}{z}+O(z^{-2}) \end{array} \right)  \; ,  \label{5dLinfty}
\eea
as $z\to -\infty$,  where $b^L_\mu$ is determined using (\ref{baxis}) and we have fixed the integration constant so that $b^L_\mu \to 0$ at infinity.
Similarly, given a solution on $I_R$  we can compute the  asymptotic quantities again from (\ref{5dAF}) which in this case implies that 
\bea
&&h^R_{\mu\nu}(z) = \left( \begin{array}{cc} -1 + \frac{4M}{3\pi z}+ O(z^{-2})&  - \frac{2J_2}{\pi z} +O(z^{-2}) \\ - \frac{2J_2}{\pi z} +O(z^{-2}) & 2z +O(1) \end{array} \right), \nonumber  \\
&&b^R_{\mu}(z)= \left( \begin{array}{c} -\frac{2J_1}{\pi z}+ O(z^{-2}) \\  \frac{4 \zeta}{z}+O(z^{-2}) \end{array} \right),    \label{5dRinfty}
\eea
as $z\to \infty$, again using (\ref{baxis}) and fixing constants so $b^R_\mu \to 0 $ at infinity.  

On the other hand, given a solution on a horizon rod $I_a$ we may compute the surface gravity from (\ref{5dkappa}), which in principle may provide one constraint on the parameters. Similarly, given a solution on an axis rod $I_a$, smoothness requires that there are no conical singularities at any endpoint of the rod, the conditions for which are given by (\ref{regupper}) and (\ref{reglower}).

The moduli (\ref{moduli}) that appear in the general solution  in Theorem \ref{th:main} are constrained by the symmetry of $\tilde{Q}_1(k)$, which is equivalent to the moduli space equations (\ref{modspace}). As noted at the end of Section \ref{sec:modspace}, these equations give a redundant description of the moduli space. On the other hand, Proposition \ref{prop:consistency} and  Remark 1 that follows it show that the consistency conditions on the potentials $b_\mu^a(z), \chi_i^a(z)$ for  the {\it finite} rods generically provide $(D-3)(n-1)$ constraints on the parameters for the general solution. Thus supplementing these with the asymptotic conditions for the potentials (\ref{bLRinfty})  gives $(D-3)(n+1)$ constraints on the parameters as required.  This leads to the following conjecture.

\begin{conj} 
\label{conj}  Given the $D=5$ solution (\ref{gensolaxis}), (\ref{gensolhor})   with $F_a(k)$ replaced by $F_a(k)^T$ for $a=1, \dots, n$ and the inequalities $\ell_a>0$, (\ref{FNNpos}) and (\ref{F00neg}), the moduli space equations (\ref{modspace}) are satisfied if and only if (\ref{balower}), (\ref{chialower}) and (\ref{bLRinfty}) are imposed.
\end{conj}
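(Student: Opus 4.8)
The plan is to prove that the moduli-space equations (\ref{modspace}) and the conditions (\ref{balower}), (\ref{chialower}), (\ref{bLRinfty}) cut out the same locus in the space of moduli (\ref{moduli}), by matching them through the residue structure of $\tilde{Q}_1(k)$. By Proposition \ref{prop:sym}, (\ref{modspace}) is equivalent to symmetry of the single matrix $\tilde{Q}_1(k)$; and by Lemma \ref{lemma}, together with the explicit $q_{n+1}=-2C\diag(0,1,0)$ with $C$ as in (\ref{C5d}) --- which is diagonal, hence symmetric --- its antisymmetric part has the partial-fraction expansion
\be
\tilde{Q}_1(k)-\tilde{Q}_1(k)^T=A_\infty+\sum_{a=1}^{n}\frac{A_a}{k-z_a},\qquad A_\infty:=q_n-q_n^T,\quad A_a:=M_a-M_a^T,\quad M_a:=\Res_{k=z_a}\tilde{Q}_1(k).
\ee
Hence (\ref{modspace}) is equivalent to $A_\infty=0$ and $A_1=\cdots=A_n=0$, i.e.\ $3(n+1)$ scalar equations, whereas by Proposition \ref{prop:consistency} and Remark~1 following it the conditions (\ref{balower}), (\ref{chialower}) on the $n-1$ finite rods together with the two asymptotic conditions (\ref{bLRinfty}) supply only $2(n+1)$ scalar equations. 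The whole point is to see why these nonetheless force the remaining $n+1$.

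The mechanism is that each residue $M_a$ has rank at most one. Using the factorization $\tilde{Q}_1=C\,X_L(z_1,k)^{-1}R(k)X_R(z_n,k)$ of (\ref{Q1decomp}), with $R(k)=R_2(k)\cdots R_n(k)$ and $R_b(k)=I_{D-2}+S_b/(k-z_b)$: for $2\le a\le n$ the residue $M_a$ equals an invertible matrix times $R_2(z_a)\cdots R_{a-1}(z_a)\,S_a\,R_{a+1}(z_a)\cdots R_n(z_a)$ times an invertible matrix, hence has rank $\le\mathrm{rank}\,S_a=1$ by (\ref{Sdef}); and for $a=1$ the pole at $z_1$ comes solely from the rank-one residue of $X_L(z_1,k)^{-1}$. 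Thus each $A_a$ is the antisymmetric part of a rank-$\le 1$ matrix $M_a=u_av_a^T$, whose dual $3$-vector is $u_a\times v_a$, necessarily orthogonal to both $u_a$ and $v_a$; since the rank-one form pins one of $u_a,v_a$ to a direction read off from $S_a$ (resp.\ from the residue of $X_L^{-1}$), the three scalar entries of $A_a$ satisfy one linear relation, so that forcing the appropriate two of them to vanish forces the third. This is precisely the source of the asserted $n+1$ redundancies in (\ref{modspace}).

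It then remains to show that the consistency conditions translate into the vanishing of exactly those two entries of each $A_a$, and of $A_\infty$. For a finite rod $I_a$ one would use the representations $F_a(k)^T=X_a(z_a,k)^{-1}H_a(k)$ and $F_a(k)=X_a(z_{a-1},k)^{-1}G_a(k)$ from the proof of Proposition \ref{prop:consistency} to write $b^a_\mu(z),\chi^a_i(z)$ near the endpoint $z_{a-1}$ in terms of the entries of $F_a(k)$, hence --- via $F_a=-Q_a\tilde{Q}_1^{-1}Q_a^T$ with $Q_a$ unimodular --- in terms of entries of $\tilde{Q}_1(k)$ near $k=z_{a-1}$ transported into the $Q_a$-frame; (\ref{balower}), (\ref{chialower}) then become the vanishing of two prescribed components of the residue $A_{a-1}$. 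For the two semi-infinite rods one would carry out the $D=5$ analogue of the asymptotic analysis around (\ref{4dbinfty}), using the expansion (\ref{Qtildeasympt}), to turn (\ref{bLRinfty}) into two conditions on $A_\infty$ (equivalently on $q_n$). Combined with the rank-one mechanism above, the $2(n+1)$ consistency conditions upgrade to the full $3(n+1)$ equations $A_\infty=0$, $A_a=0$, i.e.\ to (\ref{modspace}), proving ``$\Leftarrow$''; running the same implications in reverse --- symmetry of $\tilde{Q}_1$ makes every $F_a=F_a^T$, whereupon Proposition \ref{prop:consistency} and the asymptotic analysis give (\ref{balower}), (\ref{chialower}), (\ref{bLRinfty}) --- gives ``$\Rightarrow$''.

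I expect the translation step to be the main obstacle, for three reasons: (i) one must pin down, rod by rod, exactly which two entries of $A_a$ the consistency conditions control and verify the rank-one relation links them to the third; (ii) the conjugation by the generically non-symmetric $Q_a$ and by the lattice matrices $L_a$ of (\ref{Laxisrod})--(\ref{Lhorrod}) must be pushed through cleanly, since the consistency conditions are local statements about each $F_a$ while symmetry is a global statement about $\tilde{Q}_1$; and (iii) the asymptotic analysis at large $r$ is genuinely more delicate for $D=5$ than for $D=4$ --- there $q_{n+1}=0$ and the computation around (\ref{4dbinfty}) is immediate, whereas here $q_{n+1}=\diag(0,-2,0)$ is singular and the index permutation in $L_R$ affects the leading behaviour of $F_L(k)$ and $F_R(k)$, and it is exactly in this regime that the over-counting of (\ref{modspace}) by $n+1$ must be exhibited explicitly. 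The degenerate cases with $\mathrm{rank}\,M_a=0$ and the separate bookkeeping for axis versus horizon rods should be routine once the generic rank-one mechanism is in place.
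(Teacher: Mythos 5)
First, a point of calibration: the paper does not prove this statement. Conjecture \ref{conj} is left open --- the text only motivates it by a parameter count, notes that one direction for the finite rods follows from Remark 3 after Proposition \ref{prop:consistency}, verifies the conjecture case-by-case in the examples treated later, and explicitly lists a proof as future work in the Discussion. So there is no proof in the paper for your attempt to be measured against; the question is whether your argument stands on its own, and it does not yet. You have correctly identified the natural reformulation (symmetry of $\tilde{Q}_1(k)$ is equivalent to the vanishing of the antisymmetric parts of its residues and of $q_n$), and your observation that each residue $M_a=\Res_{k=z_a}\tilde{Q}_1(k)$ has rank at most one --- because the pole at $z_a$ enters the factorisation (\ref{Q1decomp}) only through the rank-one matrix $S_a$ of (\ref{Sdef}) (or, for $a=1$, through the rank-one residue of $X_L(z_1,k)^{-1}$) --- is correct and is a genuinely useful structural insight: it plausibly explains where $n+1$ redundancies among the $3(n+1)$ symmetry equations come from. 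But the step you yourself flag as ``the main obstacle'', namely showing that the $2(n+1)$ consistency conditions control exactly the complementary components, is the entire content of the conjecture, and you have not carried it out.

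Moreover, the matching cannot work in the clean residue-by-residue form you describe. (i) The four scalar conditions in (\ref{bLRinfty}) cannot all land on $A_\infty=q_n-q_n^T$: the explicit $q_n$ computed in Section \ref{sec:5d} has at most one nontrivially asymmetric entry, and the paper notes that symmetry of $q_n$ reproduces only the $\mu=0$ component of (\ref{5dbR}); the remaining asymptotic conditions require the next order in the large-$k$ expansion and therefore mix into lower coefficients and residues. (ii) The finite-rod conditions (\ref{balower}), (\ref{chialower}) are imposed at the lower endpoints $z_1,\dots,z_{n-1}$ only, so no condition is naturally attached to the residue at $z_n$; your bookkeeping leaves $A_n$ uncontrolled. (iii) Even where a residue is controlled, ``forcing two components of $A_a$ to vanish forces the third'' holds only if the coordinate direction of the third component does not lie in the plane orthogonal to the known factor of $M_a$ --- a genericity statement you would need to prove, and one that interacts with the conjugations by the generically non-symmetric $Q_a(k)$ and the lattice matrices $L_a$ that you acknowledge but do not control. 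To turn this into a proof you would need to compute, rod by rod, exactly which linear combinations of the entries of which residues each consistency condition constrains, and then verify the non-degeneracy; as it stands the proposal is a promising strategy with the decisive step missing.
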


The converse statement for finite rods is true by Remark 3  below Proposition \ref{prop:consistency}.
The motivation for using the alternate form of the general solution in Theorem \ref{th:main}  with $F_a(k)$ replaced by $F_a(k)^T$ for $a=1, \dots, n$ comes from the following observations. Firstly, Proposition \ref{prop:consistency} shows that imposing (\ref{balower}) on the solution $b_\mu^a(z)$  for the finite axis rods gives an equation for $b^a_\mu(z_a)$ (and similarly for the horizon rods). To see this note that since (\ref{baupper}) is automatic we can write $b^a_\mu(z)= b^a_\mu(z_a)+(z-z_a) f_a(z)$ for some smooth function $f_a(z)$. Then, evaluating at $z=z_{a-1}$ gives $b^a_\mu(z_a)=\ell_a f_a(z_{a-1})$, which can be taken as an equation for $b_\mu^a(z_a)$ (of course, $f_a(z_{a-1})$ will typically be a function of the moduli including $b^a_\mu(z_a)$ itself, so this is a nonlinear equation). Secondly, the asymptotic analysis in the next section shows that $b^L_\mu(z)|_{z\to -\infty}= 0$ fixes $b^L_\mu(z_1)$ and $b_\mu^R(z) |_{z\to \infty}= 0$ fixes $b_\mu^R(z_n)$.

In any case, in all the examples that we study below we find this conjecture is valid and is a convenient way of solving the moduli space equations (more precisely, for single black holes we use a slightly modified version of this conjecture, given below).  In particular, in practice it is easier to impose the $(D-3)(n+1)$ relations listed in the conjecture, rather than the symmetry of $\tilde{Q}_1(k)$ which as argued earlier must include redundancies.

\subsection{Asymptotics of general solution}

We now confirm our general solution is asymptotically flat and deduce the asymptotic charges.  First, consider the solution (\ref{gensolaxis}) on $I_L$ which is given by the components of $F_L(k)= - C^{-1} Q_1(k)^T$ defined by (\ref{Facompaxis}).  Using (\ref{Q1decomp}) to write $Q_1(k)$ in terms of $R(k)$ defined in  (\ref{Rdef}) and then using the asymptotic expansion (\ref{S}) gives
\bea
&&h^L_{\mu \nu}(z) =\left( \begin{array}{cc} -1-\frac{S_0^{~0}}{z} +O(z^{-2}) & - \frac{S_{1}^{~0}}{z} + O(z^{-2}) \\- 2S_0^{~1} - b^R_0(z_n) +O(z^{-1}) & -2 z +2 (z_n-S_1^{~1})+ O(z^{-1})\end{array} \right),  \label{hLinfty} \\
&&b^L_\mu(z) = \left( \begin{array}{c} \frac{S_2^{~0}}{z} + O(z^{-2})  \\  2S_2^{~1}+b^R_1(z_n)+ O(z^{-1}) \end{array} \right),  \; 
\eea
as $z \to -\infty$. Thus comparing to the asymptotics (\ref{5dLinfty}) we deduce that
\be
M = \frac{3\pi S_0^{~0}}{4}, \qquad J_i =- \frac{\pi S_i^{~0}}{2} , \qquad b^R_\mu(z_n)= -2 \left( \begin{array}{c} S_0^{~1} \\ S_2^{~1}\end{array} \right).
\ee
Using (\ref{Sdef}) we can evaluate  these expressions more explicitly.  We find that
\bea
&&S_a = \left( \begin{array}{ccc} 0& -  \tfrac{1}{2} v_a^1 b^a_0(z_a) &  - \tfrac{1}{2} v_a^2  b^a_0(z_a)  \\ 0 & -\tfrac{1}{2} \epsilon_a v_a^1 ( 2u_a^2 \ell_a + v_a^2 b^a_1(z_a)) &  -\tfrac{1}{2} \epsilon_a v_a^2 ( 2u_a^2 \ell_a + v_a^2 b^a_1(z_a))  \\ 0  & \tfrac{1}{2} \epsilon_a v_a^1 ( 2u_a^1 \ell_a + v_a^1 b^a_1(z_a)) &  \tfrac{1}{2} \epsilon_a v_a^2 ( 2u_a^1 \ell_a + v_a^1 b^a_1(z_a))  \end{array} \right) \;, \quad I_{a\neq L, R} \subset \hat{A}  \; ,\\
&& S_a=  \left( \begin{array}{ccc} \ell_a + \tfrac{1}{2} \Omega_i^a \chi_i^a(z_a) & \Omega_1^a(\ell_a + \tfrac{1}{2} \Omega_i^a \chi_i^a(z_a) )  & \Omega_2  ^a(\ell_a + \tfrac{1}{2} \Omega_i^a \chi_i^a(z_a) )\\  - \tfrac{1}{2}\chi_1^a(z_a) &  - \tfrac{1}{2}\Omega_1^a\chi_1^a(z_a) & - \tfrac{1}{2}\Omega_2^a\chi_1^a(z_a)  \\  - \tfrac{1}{2}\chi_2^a(z_a) & - \tfrac{1}{2}\Omega_1^a\chi_2^a(z_a) & - \tfrac{1}{2}\Omega_2^a\chi_2^a(z_a) \end{array} \right)  \;, \quad I_a \subset \hat{H} \; ,
\eea
where $(u_a,v_a)$ is a basis of $U(1)^2$-Killing fields such that $v_a$ is the rod vector and $\epsilon_a = (u_a^1v_a^2 - u_a^2 v_a^1)^{-1}$.
Therefore, from the asymptotics of the general solution derived above we deduce
\bea
&&M =  \sum_{I_a \subset \hat{H}} M_a \;,  \qquad M_a= \tfrac{3\pi}{4}(\ell_a+ \tfrac{1}{2} \Omega_i^a \chi_i^a(z_a)), \label{5dM}\\
&&J_i=  \sum_{I_a \subset \hat{H}} J_i^a \; , \qquad J_i^a= \tfrac{\pi}{4}\chi_i^a(z_a), \label{5dJ}\\
&&b^R_\mu(z_n)= \sum_{I_a \subset \hat{H}} \frac{4\Omega^a_1}{3\pi} \left( \begin{array}{c} -2  M_a  \\   3 J_2^a\end{array}\right)
+ \sum_{\substack{I_a \subset \hat{A}\\ a\neq L,R}} v^1_a  \left( \begin{array}{c} b^a_0(z_a) \\ - \epsilon_a(2u_a^1 \ell_a + v_a^1 b^a_1(z_a))\end{array}\right).\label{5dbR}
\eea
Note that again we reproduce the well-known relations between angular momenta and the change in twist potential across a horizon rod (\ref{Jhorizon}).  Combining these with the above formulae for the mass, together with (\ref{area}), gives the Smarr relation.   Notice that in the absence of a black hole $M=0$ and $J_i=0$, in line with the no-soliton theorem.

If instead we use the alternate general solution with $F_L(k)$ replaced by $F_L(k)^T$ we find
\be
b^L_\mu(z) = b^L_\mu(z_1) - 2 \left( \begin{array}{c} S_0^{~2}  \\  S_1^{~2} \end{array} \right) + O(z^{-1}) ,
\ee 
with $h^L_{\mu\nu}(z)$ given by the transpose of (\ref{hLinfty}).
 Thus the asymptotics  of $b^L_\mu$ now give
\be
b^L_\mu(z_1)= 2 \left( \begin{array}{c} S_0^{~2} \\ S_1^{~2}\end{array} \right)  =\sum_{I_a \subset \hat{H}} \frac{4\Omega^a_2}{3\pi} \left( \begin{array}{c} 2  M_a  \\   -3 J_1^a\end{array}\right)
- \sum_{\substack{I_a \subset \hat{A}\\ a\neq L,R}} v^2_a  \left( \begin{array}{c} b^a_0(z_a) \\ \epsilon_a(2u_a^2 \ell_a + v_a^2 b^a_1(z_a))\end{array}\right) \label{5dbL} \; .
\ee
The $O(z^{-1})$ term, and hence $J_2$, requires a higher order calculation than the one given by (\ref{S}).  On the other hand, the asymptotics of $h^L_{\mu\nu}(z)$ give the mass (\ref{5dM}), the angular momentum $J_1$ (\ref{5dJ}) and the Ernst potential $b_0^R(z_n)$ (\ref{5dbR}).

We may perform an analogous calculation on $I_R$. Using the general solution (\ref{gensolaxis}) and the explicit form $F_R(k)=-\tilde{Q}_1(k)^{-1}$, together with (\ref{Q1decomp}) and (\ref{S}), the asymptotics yield the same expressions as above for $M, J_2, b^R_\mu(z_n), b^L_{0}(z_n)$; here $J_1$ requires the $O(z^{-1})$ term in $b^R_\mu(z)$ which needs a higher order calculation. If one instead considers the solution with $F_R(k)$ replaced by $F_R(k)^T$ one obtains $M, J_i, b^L_\mu(z_1)$ to this order.  

We remark that if one uses the alternate solution as in Conjecture \ref{conj}, then the above shows that the solution on $I_L$ fixes $M, J_1, b_\mu^L(z_1)$ (and $b^R_0(z_n)$) and the solution on $I_R$ fixes $M, J_2, b^R_\mu(z_n)$ (and $b^L_0(z_1)$), so taken together these give all the asymptotic quantities. Indeed, this partially motivates our conjecture. On the other hand using the solution on $I_L$ or $I_R$, together with symmetry of $F_L(k)$ or $F_R(k)$, also gives all the asymptotic quantities. 
In Appendix \ref{apen:ernstHorizon} we show how to derive these expression for $b^R_\mu(z_n), b^L_\mu(z_1)$   from general properties of the Ernst potentials.

It is also worth noting that the leading coefficients appearing in Lemma \ref{lemma}  and (\ref{Qtildeasympt}) can be deduced from the above analysis and are $q_{n+1}= -2 \text{diag}(0,1,0)$ and
\be
q_n = \left(  \begin{array}{ccc} -1 & b^R_0(z_n)+2 S_{0}^{~1}  & 0 \\  0 & 2(z_n +\sum_{a=1}^{n} z_a -  S_1^{~1})  & 0 \\ 0  & 0 & 0 \end{array} \right)   \; .
\ee
Therefore, the $p=n$ moduli space equation (\ref{modspace}) only gives the $\mu=0$ component of (\ref{5dbR}).

In the case of a single black hole the above formulas simplify.  In particular, if say $I_H= (z_1, z_2)$, the mass and angular momenta become
\bea
&&M= \tfrac{3\pi}{4}(\ell_H+ \tfrac{1}{2} \Omega_i \chi_i(z_2)),  \label{5dMBH} \\
&&J_i= \tfrac{\pi}{4}\chi_i(z_2),   \label{5dJBH}
\eea
where $\ell_H= z_2-z_1$ and we have chosen a gauge in which $\chi_i(z_1)=0$ (for a single horizon one is always free to do this). In this case we find it convenient to work with the dimensionless parameters
\be
\label{jwdimless}
j_i = J_ i M^{-3/2} \left( \frac{27 \pi}{32} \right)^{1/2},\qquad  \omega_i = \Omega_i M^{1/2}  \left( \frac{8}{3\pi} \right)^{1/2}, \qquad \lambda_H = \frac{3\pi}{4 M} \ell_H,
\ee
where we are of course now assuming $M>0$.   Then (\ref{5dMBH}) gives
\be
\lambda_H=1 - \omega_i j_i \; . \label{lH}
\ee
For any finite axis rods $I_a$, $a=2, \dots, n$  we also define the associated dimensionless parameters
\be
\label{fdimless}
f^a_0 = b^a_0(z_a) \left( \frac{3 \pi}{8 M} \right)^{1/2},\qquad  f^a_1 = b^a_1(z_a) \left( \frac{3 \pi}{8 M} \right), \qquad \lambda_a = \frac{3 \pi}{4 M} \ell_a \; ,
\ee
and $f^L_\mu$, $f^R_\mu$ are similarly defined with $b_\mu^a(z_a)$ replaced by $b^L_\mu(z_1), b^R_\mu(z_n)$ respectively.

For the single black hole cases we study below, we in fact use a method based on a slightly modified version of Conjecture \ref{conj}, which appears  to be more convenient. Given a single horizon rod, the condition (\ref{chialower})  can be thought of as an equation for $J_i$, as follows. Since (\ref{chiaupper}) is automatic we can write $\chi_i^a(z) = \chi^a_i(z_a)+ (z-z_a) g_a(z)$ for some smooth function $g_a(z)$. Then evaluating at $z=z_{a-1}$ and using (\ref{Jhorizon}) we deduce that $J_i = \pi \ell_a g_a(z_{a-1})/4$ which gives a nonlinear equation for $J_i$ (typically $g_a(z_{a-1})$ depends on all the moduli, including $J_i$).  On the other hand, from the asymptotics (\ref{5dLinfty}) and (\ref{5dRinfty}), the $O(z^{-1})$ term in $b^L_0(z)$ and $b^R_0(z)$  gives $J_2$ and $J_1$ respectively. The above asymptotic analysis showed that, for the alternate solution, the computation of these $O(z^{-1})$ terms requires a higher order calculation, which in general will give different formulas for $J_i$ than (\ref{5dJ}). Thus, one can take these asymptotic equations as new equations for $J_i$, instead of those from (\ref{chialower}) described above.  Thus we  reformulate Conjecture \ref{conj} as follows.

\begin{conj} \label{conj2}
Given the $D=5$ single black hole solution (\ref{gensolaxis}), (\ref{gensolhor})  with $F_a(k)$ replaced by $F_a(k)^T$ for $a=1, \dots, n$ and the inequalities $\ell_a>0$, (\ref{FNNpos}) and (\ref{F00neg}),  the moduli space equations (\ref{modspace}) are satisfied if and only if  (\ref{balower}), (\ref{5dLinfty}) and (\ref{5dRinfty}) are imposed (the latter two conditions also determine $M, J_i$ and $\zeta$).
\end{conj}

Of course, for the case of no black hole the two conjectures are equivalent. On the other hand, for multi-black holes, Conjecture \ref{conj2} would need to be revisited. We will not consider this here.

\subsection{Minkowski spacetime}

First consider the rod structure of Minkowski spacetime as in Figure \ref{fig:flat}.
\begin{figure}[h!]
\centering
\subfloat{
\begin{tikzpicture}[scale=1.5, every node/.style={scale=1}]
\draw[very thick](-4,0)--(-0.1,0)node[black,left=3cm,above=.2cm]{$(0,1)$};
\draw[very thick](.1,0)--(4,0)node[black,left=3cm,above=.2cm]{$(1,0)$};
\draw[fill=black] (0,0) circle [radius=.1] node[black,font=\large,below=.1cm]{};
\end{tikzpicture}}
\caption{Rod structure for Minkowski spacetime.}
\label{fig:flat}
\end{figure}
Thus we have two rods $I_L=(-\infty, z_1)$ and $I_R= (z_1, \infty)$.  In this case the matrices which give the general solution in Theorem \ref{th:main} are  $F_L(k)=-C^{-1} P_1(k)^T$ and $F_R(k)= - (C P_1(k))^{-1}$ where  $P_1(k)= X_L(z_1, k)^{-1} X_R(z_1, k) $. 

Let us first consider $z<z_1$. We use the alternate form of the solution obtained by replacing $F_L(k)$ with $F_L(k)^T$ in (\ref{gensolaxis}) as described in Conjecture \ref{conj}. Explicitly, we find
\be
F_L(k)^T  = \left(
\begin{array}{ccc}
 1 & b^R_{0}(z_1) -\frac{b^L_{0}(z_1) b^R_{1}(z_1)}{2(k-z_1) }  & - \frac{b^L_{0}(z_1)}{2(k-z_1) } \\0 & 2(k-z_1)- \frac{b^L_{1}(z_1) b^L_{1}(z_1) }{2(k-z_1) } & -\frac{b^L_{1}(z_1)}{2(k-z_1) }\\ 0 & - \frac{b^R_{1}(z_1)}{2(k-z_1) } & -\frac{1}{2(k-z_1) }
\end{array}
\right)  \; ,
\ee
which gives
\be
h^L_{\mu\nu}(z)  =\left( \begin{array}{cc} - 1& - b^R_0(z_1) \\ -b^R_0(z_1)+ \frac{b^L_0(z_1) b^R_1(z_1)}{2(z-z_1)} & - 2 (z-z_1) \end{array} \right) , \qquad b^L_{\mu}(z)= \left( \begin{array}{c} b^L_0(z_1) \\ b^L_{1}(z_1)\end{array} \right), \qquad z<z_1 \; .
\ee
Imposing our boundary condition $b_\mu^L(z)\to 0$ as $z\to -\infty$ then implies 
\be
b^L_\mu(z_1)=0 \; ,
\ee
which then immediately fixes $b^L_\mu(z)=0$ for $z<z_1$.

The analysis for $z>z_1$ is analogous.  One gets
\be
F_R(k)  = \left(
\begin{array}{ccc}
 1 & -b^L_{0}(z_1) +\frac{b^R_{0}(z_1) b^L_{1}(z_1)}{2(k-z_1) }  & \frac{b^R_{0}(z_1)}{2(k-z_1) }  \\ 0 & -2(k-z_1)+ \frac{b^R_{1}(z_1) b^L_{1}(z_1) }{2(k-z_1) } & \frac{b^R_{1}(z_1)}{2(k-z_1) }\\ 0 &  \frac{b^L_{1}(z_1)}{2(k-z_1) } & \frac{1}{2(k-z_1) }
\end{array}
\right)  \; ,
\ee
and hence using the general solution (\ref{gensolaxis})  the metric data reads
\be
h^R_{\mu\nu}(z)  =\left( \begin{array}{cc} - 1&  b^L_0(z_1) \\ b^L_0(z_1)- \frac{b^L_1(z_1) b^R_0(z_1)}{2(z-z_1)} &  2 (z-z_1) \end{array} \right) , \qquad b^R_{\mu}(z)= \left( \begin{array}{c} b^R_0(z_1) \\ b^R_{1}(z_1)\end{array} \right), \qquad z>z_1 \; .
\ee
Imposing the boundary condition $b^R_\mu(z)\to 0$ as $z \to \infty$ implies
\be
b^R_{\mu}(z_1)=0 \; 
\ee
and thus $b_\mu^R(z)=0$ for $z<z_1$.

We have now fixed all nontrivial parameters. Indeed, given the above parameter conditions the matrices $F_a(k)$ are automatically symmetric in line with Conjecture \ref{conj}. Also notice that the asymptotic conditions for $h_{\mu\nu}^L, h_{\mu\nu}^R$ are both satisfied automatically with $M=J_1=J_2=\zeta=0$. The final solution is simply
\bea
&&h^L_{\mu\nu}(z)  =\left( \begin{array}{cc} - 1& 0\\ 0 & 2 (z_1-z) \end{array} \right) , \qquad b^L_{\mu}(z)= 0, \qquad z<z_1 \\ 
&& h^R_{\mu\nu}(z)  =\left( \begin{array}{cc} - 1&  0 \\ 0 &  2 (z-z_1) \end{array} \right) , \qquad b^R_{\mu}(z)=0, \qquad z>z_1\;.
\eea
This of course is the metric data on axis for Minkowski spacetime (\ref{5dflat}).  As in four dimensions this is a nontrivial result, showing that the only asymptotically flat spacetime in this symmetry class with the same rod structure as Minkowski spacetime is Minkowski spacetime itself. Of course, this is expected and follows from the more general no-soliton theorem for vacuum gravity.

\subsection{Eguchi-Hanson soliton}

Let us now attempt to construct a soliton solution, i.e. a non-trivial solution with no horizon. Of course, we know from the no-soliton theorem for asymptotically flat vacuum solutions that there can be no smooth solution in this case. Nevertheless, it is interesting to see how this emerges from our formalism.

The simplest rod structure without a horizon which is not flat space is given by three axis rods $I_L=(-\infty, z_1)$, $I_B=(z_1, z_2)$ and $I_R= (z_2, \infty)$ with rod vectors $(0,1)$, $v_B=(p,q)$ and $(1,0)$ respectively, where $(p,q)$ are coprime integers. The finite axis rod $I_B$ corresponds to a 2-cycle, or bolt, in the spacetime. The admissibility condition (\ref{admissible}) between adjacent axis rods fixes $p=\pm 1$ and $q=\pm 1$ and without loss of generality we can fix $p=1$ (since $v_B$ is only defined up to a sign). We   also fix $q=1$ which can always be arranged since $v_R$ is only defined up to a sign. Thus we take the  rod vector for $I_B$  to be  $v_B= (1, 1)$.  The rod structure is depicted in Figure \ref{fig:EH}.
\begin{figure}[h!]
\centering
\subfloat{
\begin{tikzpicture}[scale=1.5, every node/.style={scale=1}]
\draw[very thick](-4,0)--(-1,0)node[black,left=2cm,above=.2cm]{$(0,1)$};
\draw[very thick](-.8,0)--(.3,0)node[black,left=.8cm,above=.2cm]{$(1,1)$};
\draw[very thick](.5,0)--(3.5,0)node[black,left=2cm,above=.2cm]{$(1,0)$};
\draw[fill=black] (-.9,0) circle [radius=.1] node[black,font=\large,below=.1cm]{};
\draw[fill=black] (.4,0) circle [radius=.1] node[black,font=\large,below=.1cm]{};
\end{tikzpicture}}
\caption{Rod structure for the simplest soliton spacetime.}
\label{fig:EH}
\end{figure}
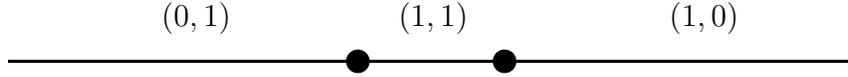
We choose the other independent axial vector to be $u_B=(1,0)$, so the change of basis matrix (\ref{Laxisrod}) is
\be
L_B = \left( \begin{array}{ccc} 1 & 0  & 0 \\ 0 &1 & 0 \\ 0 & -1 & 1  \end{array} \right)  \; .
\ee
The general solution in this case is determined by (\ref{gensolaxis}) where 
\be
F_L(k)= -C^{-1} Q_1(k)^T, \qquad F_B(k)= - P_1(k)^{-1} C^{-1} P_2(k)^T, \qquad F_R(k)= - (C Q_1(k))^{-1}  \label{Fa3rods}
\ee
and $Q_1(k)$ and $P_a(k)$ are given by (\ref{Qadef}) and (\ref{Padef}).
It is convenient to write the general solution on each rod given in Theorem \ref{th:main} with $F_a(k)$ replaced by $F_a(k)^T$ for $a=L,B$ as described in Conjecture \ref{conj}.

First, imposing that the general solution $(h^L_{\mu\nu}(z), b^L_\mu(z))$ on $I_L$ obeys our boundary condition $b^L_\mu(z)\to 0$ as $z\to -\infty$ fixes the constants
\be
b^L_\mu(z_1)= -b^B_\mu(z_2) \;  ,
\ee
with $b_\mu^L(z)|_{z \to z_1}= b^L_\mu(z_1)$ being automatically satisfied (as guaranteed by (\ref{bLRpoints})). 
Next, imposing that $(h^R_{\mu\nu}(z), b^R_\mu(z))$  on $I_R$ obeys $b^R_\mu(z)\to 0$ as $z\to \infty$ fixes
\be
b^R_\mu(z_2) = \left( \begin{array}{c} b_0^B(z_2) \\ -b_1^B(z_2)+ 2(z_1-z_2) \end{array} \right)
\ee
with $b_\mu^R(z)|_{z \to z_2}= b_\mu^R(z_2)$ being automatically satisfied (again, as guaranteed by (\ref{bLRpoints})).  These relations also follow from our general asymptotic analysis (\ref{5dbL}) and (\ref{5dbR}) respectively. 

Finally, the solution $(h^B_{\mu\nu}(z), b^B_\mu(z))$ on $I_B$ satisfies $b_{\mu}^B(z)|_{z \to z_2}=b^B_{\mu}(z_2)$ automatically (as guaranteed by Proposition \ref{prop:consistency}) and $b_{\mu}^B(z)|_{z \to z_1}=0$ fixes
\be
b^B_{\mu}(z_2)= \left( \begin{array}{c}0\\ z_1-z_2\end{array} \right) \; ,
\ee
where we have used the above to simplify this expression.
All parameters have been now fixed except for the axis rod length $\ell_B= z_2-z_1$.  The matrices $F_a(k)$ are now all symmetric demonstrating the validity of Conjecture \ref{conj} in this case. The resulting solution is 
\bea
\label{EHhbL}
&&h^L_{\mu\nu}(z) = \left( \begin{array}{cc} -1 & 0 \\  0 & -\frac{4(z-z_1)(z-z_2)}{2z-z_1-z_2} \end{array} \right), \qquad b^L_{\mu}(z) = \left( \begin{array}{c} 0 \\  - \frac{(z_2-z_1)^2}{2z-z_1-z_2} \end{array} \right),  \qquad z\in I_L \\
\label{EHhbB}
&&h^B_{\mu\nu}(z) = \left( \begin{array}{cc} -1 & 0 \\  0 & -\frac{(z-z_1)(z-z_2)}{z_2-z_1} \end{array} \right), \qquad b^B_{\mu}(z) = \left( \begin{array}{c} 0 \\ \frac{(z-z_1)(z+z_1-2z_2)}{z_2-z_1} \end{array} \right) \; , \quad z \in I_B  \\
\label{EHhbR}
&&h^R_{\mu\nu}(z) = \left( \begin{array}{cc} -1 & 0 \\  0 & \frac{4(z-z_1)(z-z_2)}{2z-z_1-z_2} \end{array} \right), \qquad b^R_{\mu}(z) = \left( \begin{array}{c} 0 \\  - \frac{(z_2-z_1)^2}{2z-z_1-z_2} \end{array}\right) \; , \qquad z \in I_R.
\eea
From the asymptotics $z\to \pm \infty$, we immediately deduce from (\ref{5dLinfty}) or (\ref{5dRinfty}) that 
\be
M=0, \qquad J_1=J_2=0,  \qquad \zeta=-\tfrac{1}{8} (z_2-z_1)^2  \; .
\ee
This corresponds to the unique unbalanced solution which is guaranteed to exist by Theorem \ref{th2}.

We may now analyse regularity of the solution. The metric induced on the bolt (\ref{axisrodmetric}) is
\be
\mathbf{g}_B = - \td t^2+ \ell_B \left( \frac{c_B^2 \td y^2}{1-y^2}+ \frac{1}{4}(1-y^2) (\td x^1)^2 \right), \qquad y = \frac{2z- z_1-z_2}{z_2-z_1}\;,
\ee
where $(t, x^1)$ are coordinates such that $k=\partial_t, u= \partial_{x^1}$ and recall $(k, u)$ is the adapted basis for $I_B$. Recall that $u=m_1$ and hence $x^1$ is a $2\pi$-periodic angle. Therefore, it is clear that the spatial part of the metric on the bolt is a smooth round metric on $S^2$  iff $c_B=1/2$ (indeed, one can check that the conditions for the removal of the conical singularity (\ref{regupper}) and (\ref{reglower}) at $z=z_1$ and $z=z_2$ are satisfied iff $c_B=1/2$). Although this gives a smooth metric on the bolt, this shows that in this case the balance condition $c_B=1$ (\ref{balance}) is violated so there must be a conical singularity at $I_B$. On the other hand, if we impose the balance condition $c_B=1$, then inspecting the metric on the bolt shows that there must be conical singularities at the endpoints of $I_B$.

We have shown that any asymptotically flat solution with a single bolt must have a conical singularity. This is indeed consistent with the no-soliton theorem mentioned above. In fact, in this case it is easy to write down the full solution off axis. It is given by the Eguchi-Hanson soliton (\ref{EH}) where $(\theta, \psi, \phi)$ are Euler angles on $S^3$. The  rods $I_L, I_B$ and $I_R$ can be identified with  $\theta = \pi, R = a$ and $\theta=0$. It then follows that $v_L= \partial_\psi+\partial_\phi$ and $v_R=\partial_\psi-\partial_\phi$ are the $2\pi$-periodic rod vectors on the semi-infinite axes, which implies $\phi^1=(\psi - \phi)/2$ and $\phi^2=(\psi + \phi)/2$. Weyl coordinates $(t, \phi^1, \phi^2, \rho, z)$ for this metric are
\be
\rho = \tfrac{1}{2}\sqrt{R^4-a^4} \sin \theta, \quad z= \tfrac{1}{2}(z_1+z_2) + \tfrac{1}{2} R^2 \cos\theta
\ee
and the corresponding metric data is
\bea
&&g =- \td t^2+ \tfrac{1}{4} R^2 \left(1- \frac{a^4}{R^4} \right)\left[ (1-\cos\theta) \td \phi^1+(1+\cos\theta) \td \phi^2 \right]^2+ \tfrac{1}{4} R^2 \sin^2\theta (\td \phi^1-\td\phi^2)^2\;, \nonumber \\
&&e^{2\nu} = \frac{R^2}{ R^4 - a^4\cos^2\theta} \; .
\eea
Using $a^2= \ell_B$, it is straightforward to show that $g$ gives the same $(h^a_{\mu\nu}, b^a_\mu)$ on each rod as our general solution above (\ref{EHhbL})-(\ref{EHhbR}). In addition $e^{2\nu}$ on the axes and the bolt agrees with our expressions (\ref{nu0}) with $c_L = c_R = 1$ and $c_B = 1/2$. 

\subsection{Myers-Perry solution}

We now consider the simplest rod structure of a single black hole with $S^3$ topology , i.e., the same rod structure as the Myers-Perry solution, see Figure \ref{fig:MP}.
\begin{figure}[h!]
\centering
\subfloat{
\begin{tikzpicture}[scale=1.5, every node/.style={scale=1}]
\draw[very thick](-4,0)--(-1,0)node[black,left=2cm,above=.2cm]{$(0,1)$};
\draw[thick,dashed](-.8,0)--(.3,0)node[black,left=.8cm,above=.2cm]{$H$};
\draw[very thick](.5,0)--(3.5,0)node[black,left=2cm,above=.2cm]{$(1,0)$};
\draw[fill=white] (-.9,0) circle [radius=.1] node[black,font=\large,below=.1cm]{};
\draw[fill=white] (.4,0) circle [radius=.1] node[black,font=\large,below=.1cm]{};
\end{tikzpicture}}
\caption{Rod structure for the Myers-Perry black hole.}
\label{fig:MP}
\end{figure}
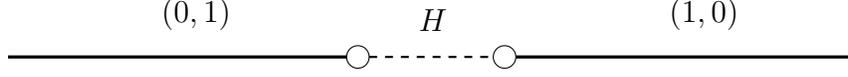
Thus we have three rods $I_L=(-\infty, z_1)$, $I_H= (z_1, z_2)$ and $I_R=(z_2, \infty)$ where $I_H$ is a horizon rod.

The  general solution can be obtained from Theorem \ref{th:main} where the $F_a(z)$ are again given by (\ref{Fa3rods}) (although $X_2(z,k)$ now refers to the horizon rod).  Again, it is convenient to use the alternate form of the solution  with $F_a(k)$ replaced by $F_a(k)^T$ for $a=1,2$ as described in Conjecture \ref{conj2}. The solution depends on the  parameters $(\ell_H, b_\mu^L(z_1), b_{\mu}^R(z_2), \chi_i(z_2), \Omega_i)$ where $\ell_H=z_2-z_1$ and we choose a gauge in which $\chi_i(z_1)=0$.  

From the asymptotics for the solution on $I_L$ and $I_R$ given in (\ref{5dLinfty}) and (\ref{5dRinfty}) we find the mass $M$ and angular momenta $J_i$ are given by (\ref{5dMBH}) and (\ref{5dJBH}), the Ernst potentials are\footnote{Equation (\ref{bLRMP}) also follows from our general asymptotic analysis (\ref{5dbL}) and (\ref{5dbR}). The same result can be established from general considerations using (\ref{bLjumpH}) and (\ref{bRjumpH}), together with the fact that $b_\mu^L(z)=0$ on $I_R$ and $b_{\mu}^R(z)=0$ on $I_L$ (from their definition (\ref{bL}, \ref{bR}) the potentials $b^L_\mu, b^R_\mu$ are constant on $I_R, I_L$ respectively and vanish at infinity).} 
 \be
b^L_\mu(z_1) =\Omega_2  \left( \begin{array}{c} \tfrac{8}{3\pi} M \\  - \frac{4J_1}{\pi} \end{array} \right), \qquad  b^R_\mu(z_2)= \Omega_1  \left( \begin{array}{c} - \tfrac{8}{3\pi} M \\ \frac{4 J_2}{\pi} \end{array} \right)  \; , \label{bLRMP}
\ee
and 
\bea
J_1= \tfrac{16}{9 \pi} M \Omega_1 \left(  M - \tfrac{3}{2} \Omega_2 J_2\right)  \; , \qquad J_2 = \tfrac{16}{9 \pi} M \Omega_2 \left(  M - \tfrac{3}{2} \Omega_1 J_1\right) \; ,  \label{JMP} 
\eea
where we have eliminated $\ell_H$ and $\chi_i(z_2)$ in favour of $M$ and $J_i$ using (\ref{5dMBH}) and (\ref{5dJBH}). It is worth noting that the solutions on $I_L$ and $I_R$ automatically obey $b^L_\mu(z)|_{z\to z_1}= b^L_\mu(z_1)$ and $b^R_\mu(z)|_{z \to z_2}= b^R_\mu(z_2)$ and therefore no further constraints arise from these rods (as guaranteed by (\ref{bLRpoints})).   Observe that (\ref{JMP}) are linear in $J_i$ so we can straightforwardly solve these for $J_i$ and therefore express all parameters in terms of the physical variables $M, \Omega_i$.

 It is convenient to use the dimensionless quantities (\ref{jwdimless}).
Then solving (\ref{JMP})  gives
\be
\label{MPjs}
j_1 = \frac{\omega_1(1-\omega_2^2)}{1-\omega_1^2 \omega_2^2}, \qquad j_2 = \frac{\omega_2(1-\omega_1^2)}{1-\omega_1^2 \omega_2^2}
\ee
and $| \omega_1 \omega_2 | \neq 1$.\footnote{If $|\omega_1\omega_2|=1$ then (\ref{JMP}) imply $\lambda_H=0$ which contradicts our nonextremality assumption.} Thus as promised we can express all quantities in terms of $M, \omega_i$. In particular, eliminating $j_i$ we find that (\ref{lH}) becomes
\be
\lambda_H = \frac{(1-\omega_1^2)(1-\omega_2^2)}{1-\omega_1^2 \omega_2^2}  \; .
\ee
It is now readily verified that the matrices $F_a(k)$ are symmetric in accordance with Conjecture \ref{conj2}.  We have thus fully solved the moduli space equations (\ref{modspace}).

To determine the precise moduli space, we will also need the invariants
\bea
&&\det h^L_{\mu\nu}(z) =  -\frac{2 (z-z_1)(z-z_2)}{\bar{z}_1+z_1 - z}, \qquad \bar{z}_1=\frac{4M}{3\pi} \frac{1-\omega_1^2}{1-\omega_1^2\omega_2^2}\; ,  \qquad z<z_1\; , \\
&&\det h^R_{\mu\nu}(z) = - \frac{2 (z-z_1)(z-z_2)}{z-z_2 +\bar{z}_{2}}, \qquad \bar{z}_{2}=\frac{4M}{3\pi} \frac{1-\omega_2^2}{1-\omega_1^2\omega_2^2} \;  , \qquad z>z_2 \; .
\eea
A smooth Lorentzian metric on $I_L$ requires that the determinant $\det h_{\mu\nu}^L(z)<0$ and is smooth for $z<z_1$ (see (\ref{FNNpos}))
and from the above expression we see this is equivalent to $\bar{z}_1>0$. Similarly, the requirement that $\det h^R_{\mu\nu}(z)$ is smooth and negative on $I_R$ is  equivalent to $\bar{z}_{2}>0$.  The inequalities $\lambda_H>0, \bar{z}_1>0, \bar{z}_2>0$ are equivalent to
\be 
|\omega_i |<1  \;  . \label{omMPmodspace}
\ee
This fully constrains the moduli space of solutions which is simply given by (\ref{omMPmodspace}) and $M>0$.  One can show (\ref{omMPmodspace}) implies 
\be
|j_1|+ |j_2|<1  \; ,
\ee
which is a well-known inequality for the Myers-Perry black holes.

Now we turn to the solution $(\gamma_{ij}(z), \chi_i(z))$ on the horizon rod $z_1<z<z_2$ which can be deduced from (\ref{gensolhor}). 
Writing the parameters in terms of $M, \omega_i$ as above, we find that both $\chi_i(z)|_{z\to z_1}=0$ and (\ref{5dJBH}) are automatically satisfied (as they must be).  Furthermore, using (\ref{5dkappa}), we find that removal of the conical singularities of the horizon metric at the endpoints $z=z_1, z_2$ imposes no further constraints and fixes the surface gravity to be
\be
\kappa = \sqrt{  \frac{3\pi}{8M} (1- \omega_1^2)(1-\omega_2^2)}  \; .
\ee
The horizon topology is of course $S^3$ with $m_2=0$ at $z=z_1$ and $m_1=0$ at $z=z_2$. Notice that the moduli space (\ref{omMPmodspace}) is equivalent to the nonextremality condition $\kappa>0$.

 It is straightforward to check that the metric data for above solution agrees precisely with the Myers-Perry solution restricted to the $z$-axis, and the parameter region $|\omega_i|<1$ we have derived agrees with the full moduli space of non-extremal Myers-Perry black holes (of course, this includes 5d Schwarzschild for $\omega_i=0$).  It is interesting to note that by combining (\ref{JMP}) we obtain the thermodynamic identity recently obtained by integrating the sigma model equation over the boundary of the orbit space~\cite{Kunduri:2018qqt}. Thus our present method  leads to a refinement of these identities.

\subsection{Black ring}
We now consider the rod structure of the black ring as depicted in Figure \ref{fig:BR}. 
 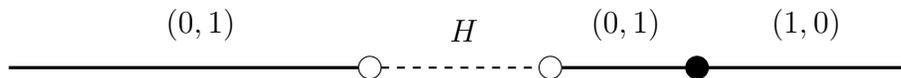
\begin{figure}[h!]
\centering
\subfloat{
\begin{tikzpicture}[scale=1.5, every node/.style={scale=1}]
\draw[very thick](-4,0)--(-0.9,0)node[black,left=2.1cm,above=.2cm]{$(0,1)$};
\draw[thick,dashed](-.7,0)--(.7,0)node[black,left=1.0cm,above=.2cm]{$H$};
\draw[very thick](0.9,0)--(2.0,0)node[black,left=.8cm,above=.2cm]{$(0,1)$};
\draw[very thick](2.2,0)--(4,0)node[black,left=1.4cm,above=.2cm]{$(1,0)$};
\draw[fill=white] (-.8,0) circle [radius=.1] node[black,font=\large,below=.1cm]{};
\draw[fill=white] (.8,0) circle [radius=.1] node[black,font=\large,below=.1cm]{};
\draw[fill=black] (2.1,0) circle [radius=.1] node[black,font=\large,below=.1cm]{};
\end{tikzpicture}}
\caption{Rod structure for the black ring}
\label{fig:BR}
\end{figure}
Thus we have four rods $I_L=(-\infty, z_1)$, $I_H=(z_1, z_2)$, $I_D=(z_2, z_3)$ and $I_R=(z_3, \infty)$, where $I_H$ is a horizon rod and $I_D$ is an axis rod with rod vector $v_D=(0,1)$. The topology of the horizon is $S^2\times S^1$ and the finite axis rod $I_D$ lifts to a noncontractible 2-disc in spacetime. We use the adapted basis $\tilde{E}_A=(k,m_1,m_2)$ for $I_D$, i.e. $u_D=(1,0)$, so the change of basis matrix $L_D$ (\ref{Laxisrod}) is simply the identity matrix.

The general solution is given by Theorem \ref{th:main} and once again it is convenient to use the alternate form of the solution where $F_a$ is replaced with $F_a^T$ for $a=L,H,D$ as described in Conjecture \ref{conj2}. The solution depends on the parameters $(\ell_H, \ell_D, b_\mu^L(z_1), \chi_i(z_2), b^D_\mu(z_3), b^R_\mu(z_3), \Omega_i)$ where $\ell_H= z_2-z_1, \ell_D= z_3-z_2$ and we choose a gauge in which $\chi_i(z_1)=0$.

From the asymptotics for the solution on $I_L$ and $I_R$ given in (\ref{5dLinfty}) and (\ref{5dRinfty}) we find the mass $M$ and angular momenta $J_i$ are given by (\ref{5dMBH}) and (\ref{5dJBH}) and the Ernst potentials are
\be
b^L_\mu(z_1)+b_\mu^D(z_3) =\Omega_2  \left( \begin{array}{c} \tfrac{8}{3\pi} M \\  - \frac{4J_1}{\pi} \end{array} \right), \qquad  b^R_\mu(z_3)=  \Omega_1  \left( \begin{array}{c} - \tfrac{8}{3\pi} M \\ \frac{4 J_2}{\pi} \end{array} \right) , \;  \label{bLRring}
\ee
where we have eliminated $\ell_H$ and $\chi_i(z_2)$ in favour of $M$ and $J_i$ using (\ref{5dMBH}) and (\ref{5dJBH}). The solution on $I_L$ and $I_R$ automatically obeys $b^L_\mu(z)|_{z\to z_1}= b^L_\mu(z_1)$ and $b^R_\mu(z)|_{z \to z_2}= b^R_\mu(z_2)$ and therefore no further constraints arise from these rods (as must be from (\ref{bLRpoints})).

The asymptotics of the solution also give nontrivial equations for $J_i$ which in terms of dimensionless variables introduced in (\ref{jwdimless}) and (\ref{fdimless}) are given by
\begin{equation}
    \begin{gathered}
    \label{jeqnsBR}
    j_1 = \omega_1( 1 + \lambda_D + j_2(f_0^D - \omega_2))\;,\\
    j_2 = \omega_2 - \omega_1(j_1 \omega_2 + f^D_1) + f^D_0 (\omega_1 j_1  - 2) \; ,
    \end{gathered}
\end{equation}
where the previous relations have been used to eliminate variables in favour of $j_i, \omega_i, f^D_\mu, \lambda_D$. These equations correspond to (\ref{JMP}) for the Myers-Perry solution. 

Next consider $I_D$. We find that $b^D_\mu(z)|_{z \to z_3} = b^D_\mu(z_3)$ is automatically satisfied (see Proposition \ref{prop:consistency}), however $b^D_\mu(z)|_{z \to z_2} = 0$ gives the new constraints 
\be
\label{fDBR}
f^D_\mu = \frac{j_2 \lambda_D}{D}\begin{pmatrix} 1-\omega_1^2 \\ -\omega_1 \lambda_D \end{pmatrix}, \quad D \equiv (1-\omega_1 j_1)^2 -j_2(f^D_0 + \omega_1 (\omega_1 j_2 + f^D_1)).
\ee
For these equations to be well-defined $D$ must be nonzero. These relations were obtained by evaluating $\lim_{z\to z_3} b^D_\mu(z)$, so if $D=0$ then the numerators $j_2 \lambda_D (1-\omega_1^2)$ and $-\omega_1 j_2\lambda_D^2$ would have to vanish as well to ensure that $b^D_\mu(z_3)$ was well-defined. This implies that $j_2 = 0$ (since $\lambda_D > 0$), which combined with (\ref{lH}) and $D=0$ implies that $\lambda_H=0$. Since $\lambda_H>0$ this means that $D\neq0$ and so (\ref{fDBR}) are well-defined.

Equations (\ref{jeqnsBR}) and (\ref{fDBR}) are significantly more complicated than the corresponding parameter constraints for the Myers-Perry solution (\ref{MPjs}). Therefore it is instructive to first consider the $S^1$ singly spinning case.

\subsubsection{Singly spinning black ring}
The $S^1$ spinning black ring corresponds to setting $j_2 = 0$ in the above equations. In this case (\ref{fDBR}) simply gives that $f^D_\mu = 0$. Substituting this back into the equations for $j_i$ (\ref{jeqnsBR}) gives
\begin{equation}
\label{BRS1j1}
\begin{gathered}
    j_1 = \omega_1 (1+\lambda_D) \\
    \omega_2(1 - \omega_1j_1) = 0.
\end{gathered}
\end{equation}
The first of these two equations gives $j_1$ and the second implies that $\omega_2 = 0$ since $1-\omega_1 j_1 = \lambda_H \neq 0$.

This gives the solution for the general unbalanced $S^1$ spinning black ring parameterised in terms of $(M,\omega_1,\lambda_D)$. Note that the matrices $F_a(k)$ are now automatically symmetric in accordance with Conjecture \ref{conj2}. The horizon rod length $\lambda_H = 1-\omega_1^2(1+\lambda_D)$ and so $\lambda_H>0$ gives the constraint
\begin{equation}
    \omega_1^2< \frac{1}{1+\lambda_D},
\end{equation}
which together with the conditions $M>0$ and $\lambda_D>0$ determines the moduli space of unbalanced solutions. It can then be checked that (\ref{FNNpos}) and (\ref{F00neg}) are satisfied automatically and so impose no further constraints.

Next consider conical singularities on $I_D$. The balance condition (\ref{balance}) and regularity condition at $z=z_3$ (\ref{regupper}) is equivalent to
\begin{equation}
    (1-\omega_1^2)^2 - \lambda_D \omega_1^2(2-\omega_1^2)=0
\end{equation}
which implies  $\omega_1^2>0$ and
\begin{equation}
    \lambda_D = \frac{(1-\omega_1^2)^2}{\omega_1^2(2-\omega_1^2)} \; .
\end{equation}
Substituting this back in we obtain
\be
\lambda_H = \frac{1-\omega_1^2}{2-\omega_1^2},
\ee
which gives the moduli space of the balanced solution as
\begin{equation}
    M>0, \quad 0<\omega_1^2<1.  \label{BRS1moduli}
\end{equation}
In addition, the expression for $j_1$ (\ref{BRS1j1}) now takes the simple form
\begin{equation}
    j_1 = \frac{1}{\omega_1(2-\omega_1^2)}.
\end{equation}
Extremising this over the moduli space (\ref{BRS1moduli}) gives the well-known inequality $|j_1|\geq \sqrt{27/32}$.
Finally, condition (\ref{5dkappa}) for the removal of conical singularities on $I_H$ imposes no further constraints and fixes the surface gravity to be
\begin{equation}
    \kappa = \sqrt{\frac{3\pi}{8M}}\frac{\sqrt{1-\omega_1^2}}{|\omega_1|}.
\end{equation}

\subsubsection{Doubly spinning black ring}
Now we consider the doubly spinning solution corresponding to $j_2 \neq 0$. In this case it is no longer straightforward to solve (\ref{jeqnsBR}) and (\ref{fDBR}) in terms of any of the variables already defined.  Firstly, using (\ref{jeqnsBR}) and (\ref{fDBR}), together with the  balance condition (\ref{balance}) on $I_D$ and the condition for removal of the conical singularity on $I_D$ at $z=z_3$  (\ref{regupper}), one can show that $\omega_2=0$ implies $j_2=0$ (here we are also assuming $\lambda_H,\lambda_D>0$). Thus we deduce $\omega_2 \neq 0$.

It turns out it is convenient to define a new parameter $t$, using the denominator $D$ defined in (\ref{fDBR}), by
\be
\label{tBR}
t = \frac{\omega_2 D}{j_2 \lambda_D}.
\ee
Note that $t\neq 0$. This gives
\begin{equation}
    f^D_\mu = \frac{\omega_2}{t}\begin{pmatrix} 1-\omega_1^2 \\ -\omega_1 \lambda_D \end{pmatrix}.
\end{equation}
Now we can solve (\ref{jeqnsBR}) for $j_i$\footnote{Using (\ref{jeqnsBR}) and (\ref{tBR}) one can show that the denominator of (\ref{jssolBR}) being zero is incompatible with $\lambda_H,\lambda_D>0$ and the conditions for the removal of conical singularities (\ref{balance}) and (\ref{regupper}).}
\begin{equation}
\begin{gathered}
\label{jssolBR}
j_1 = \frac{\omega_1(t^2(1+\lambda_D) - \omega_2^2(t-1+\omega_1^2)(t-2+\omega_1^2(2+\lambda_D))}{t^2 - \omega_1^2 \omega_2^2 (t-1+\omega_1^2)^2}  \; , \\
j_2 = \frac{t \omega_2(t-2+\omega_1^2)(1-\omega_1^2(1+\lambda_D))}{t^2 - \omega_1^2 \omega_2^2 (t-1+\omega_1^2)^2}  \; ,
\end{gathered}
\end{equation}
and then (\ref{tBR}) for $\omega_2$\footnote{The denominator of (\ref{w2BR}) can never vanish since the denominator of (\ref{jssolBR}) is nonzero and $j_2 \neq 0$.}
\be
\label{w2BR}
\omega_2^2 = \frac{t^2(1-\omega_1^2 - \lambda_D(t-2+2\omega_1^2))}{(t-2 + 2\omega_1^2)(1-\omega_1^2(1+\lambda_D))}.
\ee
This gives two branches of solutions corresponding to either $\omega_2> 0$ or $\omega_2<0 $. We have now solved for the generic\footnote{As explained above,  a couple of possible special cases were  ruled out using the balance condition.} unbalanced doubly spinning black ring solution parameterised in terms of $(M,\omega_1,t,\lambda_D)$. The matrices $F_a$ are now indeed symmetric as expected from Conjecture \ref{conj2}.

Now consider the possible conical singularities on $I_D$. To remove this the balance condition (\ref{balance}) and the regularity condition  (\ref{regupper}) at $z=z_3$ must be satisfied, which in this case reduces to
\be
\label{conicalBR}
\omega_1^4 \lambda_D + \omega_1^2(1+t \lambda_D) - 1 = 0.
\ee
Note that this implies that $\omega_1 \neq 0$. Solving this for $\lambda_D$ one finds\footnote{If $t+\omega_1^2=0$, using (\ref{conicalBR}) and (\ref{w2BR}) one can show that the denominator of (\ref{jssolBR}) is zero which is a contradiction. Therefore (\ref{ldBR}) is the unique solution of (\ref{conicalBR}).}
\be
\label{ldBR}
\lambda_D = \frac{1-\omega_1^2}{\omega_1^2(t+\omega_1^2)}.
\ee
The expressions (\ref{jssolBR}), (\ref{w2BR}) and (\ref{lH}) can be simplified with this result and one finds
\be
j_1 = \frac{1+(t-1+2\omega_1^2)(t-1+\omega_1^2)}{\omega_1(t+\omega_1^2)^2}, \qquad
j_2 = \frac{\omega_2(t-1+\omega_1^2)(t-2+2\omega_1^2)}{t(t+\omega_1^2)^2},
\ee
\be
\omega_2^2 = \frac{t^2(\omega_1^4 - (t-2)(1-\omega_1^2))}{\omega_1^2(t-1+\omega_1^2)(t-2+2\omega_1^2)},
\ee
\be
\label{lHlDBR}
\lambda_H = \frac{(1-\omega_1^2)(t-2+\omega_1^2)}{\omega_1^2(t+\omega_1^2)} = \lambda_D (t-2+\omega_1^2).
\ee
This gives the balanced doubly rotating solution, however one still needs to find the bounds on the parameters $(M,\omega_1,t)$. These turn out to be given by $M>0$,
\begin{equation}
    \label{BRmoduli}
   0<1-\omega_1^2<t-1, \qquad t < ((1-\omega_1^2)+(1-\omega_1^2)^{-1}).
\end{equation}
Positivity of the rod lengths $\lambda_H,\lambda_D>0$ is equivalent to the first condition and the second condition then corresponds to $\omega_2^2> 0$. The conditions (\ref{FNNpos}) and (\ref{F00neg}) are then automatically satisfied and impose no further constraints.

Note that the limit curve given by $t \to ((1-\omega_1^2)+(1-\omega_1^2)^{-1})$ corresponds to the $\omega_2 \to 0$ (or equivalently $j_2 \to 0 $) singly spinning limit. It turns out that taking this limit one recovers the results of the previous section on the $S^1$ spinning ring as one might expect. Therefore, although the original definition of $t$ (\ref{tBR}) only holds when $j_2 \neq0$, this parameterisation can be extended to cover the singly spinning case as well.

Finally consider the horizon rod $I_H$. Using the parameters $(M,\omega_1,t)$, we find that both $\chi_i(z)|_{z \to z_1}=0$ and (\ref{5dJBH}) are automatically satisfied (as they must be). There are no further constraints from removing conical singularities at the endpoints of $I_H$ since (\ref{5dkappa}) is also satisfied automatically for a surface gravity given by
\be
\kappa = \sqrt{\frac{3 \pi}{8 M (t-1+\omega_1^2)}}\frac{(1-\omega_1^2)(t-2 + \omega_1^2)(t+\omega_1^2)}{|\omega_1|(t-2 + 2\omega_1^2)}.
\ee
From this one can explicitly see that the limit curve $\omega_1\to \sqrt{2-t}$, which is a boundary of the moduli space of solutions, corresponds to extremal solutions as one might expect. On the other hand although $\kappa = 0$ as $\omega_1 \to 1$, this corresponds to a singular solution since $\lambda_D\to0$ in this limit.

We have now constructed the most general regular solution on the axes and horizon with the given rod structure. We will now show that our solution maps exactly to the Pomeransky-Sen'kov solution for the balanced doubly rotating black ring. Chen, Hong and Teo ~\cite{Chen:2011jb} present the solution for $\omega_1 >0,\omega_2>0$ in terms of the parameters $(\chi,\mu,\nu)$, satisfying 
\begin{equation}
    \label{CHTmoduli}
    \chi>0,\qquad 0<\nu<\mu<1.
\end{equation}
Note that we take $\nu \neq \mu$ since we are considering non-extremal solutions and $\nu \neq0$ since we are considering $\omega_2\neq0$. To find an expression for $t$ in terms of these variables, first use (\ref{lHlDBR}) to give 
\be
t = (2-\omega_1^2)+\frac{\lambda_H}{\lambda_D}.
\ee
Using this, combined with the expressions for $M, \Omega_1, \ell_H, \ell_D$ from the known solution gives
\be
M = \frac{3 \pi \chi^2(\mu+\nu)}{(1-\mu)(1-\nu)},\qquad 
\omega_1^2 = \frac{2(\mu + \nu)}{(1+\mu)(1+\nu)}, \qquad
t = \frac{2(1+\mu^2)(1-\nu)}{(1-\mu^2)(1+\nu)}.
\ee
Inverting these relations for $\chi^2,\mu$ and $\nu$ gives
\be
\chi^2 = \frac{2M}{3 \pi}\frac{(1-\omega_1^2)}{\omega_1^2}, \qquad
\mu = \frac{x-(1-\omega_1^2)}{x+(1-\omega_1^2)}, \qquad
\nu = \frac{1-x}{1+x},
\ee
where
\be
    x = \sqrt{(1-\omega_1^2)(t-1+\omega_1^2)}.
\ee
A short calculation also demonstrates that these expressions give bijections between the subspaces defined by (\ref{CHTmoduli}) and (\ref{BRmoduli}) restricted to $\omega_1> 0, \omega_2> 0$.  One can also show that the metric data on the axis and horizon rods agrees precisely under this map. Therefore, we deduce that the Pomeransky-Sen'kov black ring is the most general regular solution within this class of rod structures (for $\omega_2=0$ see the singly spinning case above).

\section{Discussion}
\label{sec:disc}

In this paper we have considered the classification of $D=4,5$ asymptotically flat stationary vacuum black hole spacetimes that admit $D-3$ commuting axial Killing fields. We have   developed a general method based on integrability of the Einstein equations for this class of spacetimes. In particular, we have presented a general solution for the metric and associated Ernst and twist potentials on each axis and horizon component, see Theorem \ref{th:main}. This solution depends on a number of geometrically defined moduli which obey a set of algebraic equations and inequalities. Generically the solutions possess conical singularities on the axes and  correspond to the moduli space of solutions guaranteed to exist in Theorem \ref{th2}.  However, by imposing that the axis and horizon metric is free of conical singularities we obtain, at least in principle, the moduli space of {\it regular} black hole solutions in this class for any given rod structure (which may be empty depending on the rod structure).

In practice the  equations which define the moduli spaces increase in complexity as one increases the number of rods. Therefore an analysis of the general solution remains out of reach. To this end, it would be interesting to prove Conjecture \ref{conj} and \ref{conj2}, as this may lead to a better understanding of the moduli space equations. Nevertheless, we have studied various special cases in which it is possible to fully solve the moduli space equations. In particular, for rod structures corresponding to the Kerr black hole, the Myers-Perry black holes and the known doubly spinning black rings, we find that the resulting moduli space of regular solutions coincide precisely with that of the known solutions. Thus our analysis, together with Theorem \ref{th2}, provides a proof of uniqueness of these solutions within their class of rod structures (of course, for the Kerr case we recover the classic no-hair theorem). These proofs are constructive in the sense that we also obtain the metric and associated Ernst or twist potentials  on the axes and horizon.

More interestingly, our general solution can be used to determine the (non)existence of new types of regular black hole solution in this context.   For $D=5$ an open question is whether a regular vacuum black lens exists. We are currently investigating this question for the simplest rod structure compatible with a $L(n,1)$ horizon topology (i.e. a single finite axis rod).  It turns out that the analysis of the moduli space equations is much more complicated than the black ring case. We have proven that the singly spinning case $J_2=0$ must always possess a conical singularity on axis. This explains why the previously constructed singly spinning solutions did not lead to regular black lens spacetimes~\cite{Chen:2008fa, Tomizawa:2019acu}. The analysis of the doubly spinning case is far more involved and details will be presented in a forthcoming paper.

By construction, we have obtained the general solution only on the boundary of the orbit space, i.e. on the axis and horizon rods. On the other hand, Theorem \ref{th2} shows that for given boundary data, there exists a unique solution that is smooth everywhere away from the axes.  An interesting question is to write down this full solution explicitly, given our boundary solution. We expect that further methods from integrability theory will be required for this, e.g., by employing the technique used for the Ernst equations~\cite{MS}. In particular, this would be useful to analyse  regularity of the full solution at the axes (i.e. to show that the metric components are even functions of $\rho^2$).   In any case, we anticipate that this regularity issue will likely be satisfied automatically as in four-dimensions (even for conically singular solutions). Therefore, given the general regular boundary solution, we expect a unique spacetime that is regular everywhere on and outside the axes and horizon to exist. Thus the analysis in this paper should be sufficient to determine the full moduli space of regular black hole solutions.

It would be interesting to develop our method to study the analogous classification problem for other types of boundary conditions. In particular, for $D=5$ one can have asymptotically Kaluza-Klein (KK) or Taub-NUT (TN) vacuum solutions.  This could be of interest, as in these cases, the space of regular solutions is richer since one can have regular soliton spacetimes (e.g. $\mathbb{R}\times$ Euclidean Schwarzschild and the KK monopole, for KK and TN asymptotics respectively). Presumably our analysis can be adapted to these cases, although clearly one would have to revisit the solution of the spectral equations near infinity.

Our method is based on the existence of an auxiliary linear system whose integrability condition is the vacuum Einstein equations for spacetimes in this symmetry class. 
It seems likely that this method could be employed in other theories of gravity which are integrable for spacetimes with $D-2$ commuting Killing fields.  For example, it is well-known that this is the case for $D=4$ Einstein-Maxwell equations and an analogous inverse scattering method has been developed~\cite{Neugebauer:2003qe}. This was recently used to construct the general charged, rotating, double-black hole solution~\cite{Hennig:2019knn}. 

More generally, any theory which reduces to a two-dimensional sigma-model with coset target space is integrable in this sense.  A notable example is $D=5$ minimal supergravity (Einstein-Maxwell-CS theory)~\cite{Figueras:2009mc}. This theory could be particularly interesting  to study as it is already known to contain a rich class of regular spacetimes with these symmetries.  Besides the well-known charged versions of the Myers-Perry black holes and black rings, this theory also admits positive energy soliton solutions (a.k.a microstate geometries)~\cite{Bena:2007kg}, supersymmetric black lenses, and  black holes with nontrivial topology in the DOC (2-cycles)~\cite{Kunduri:2014iga, Kunduri:2014kja, Tomizawa:2016kjh, Horowitz:2017fyg, Breunholder:2017ubu, Breunholder:2018roc}. Recently a complete classification of supersymmetric spacetimes in this class  was obtained revealing an infinite class of new black holes, black lenses and rings in spacetimes with nontrivial 2-cycles~\cite{Breunholder:2017ubu}.  It would be very interesting to provide a complementary classification based on integrability as this would also capture the much larger moduli space of nonsupersymmetric solitons and black holes.
\\
 
\noindent {\bf Acknowledgements}. JL is supported by a Leverhulme Trust Research Project Grant. FT is supported by an EPSRC studentship. We would like to thank Harry Braden and Hari Kunduri for helpful discussions.

\appendix 

\section{Rod structure of Gibbons-Hawking solitons}
\label{apen:GH}

In Section \ref{sec:prelim} we showed that the Eguchi-Hanson soliton can be interpreted as an asymptotically Minkowski solution which is regular everywhere except for a conical singularity on its bolt. In particular, it gives a rod structure which satisfies the admissibility condition (\ref{admissible}) and hence gives the corresponding solution that is guaranteed to exist in Theorem \ref{th2}. It is natural to wonder whether the more general Gibbons-Hawking solitons can be similarly interpreted. In fact, we find that within this class of solutions, the only case which gives an admissible rod structure is the Eguchi-Hanson soliton.

The Gibbons-Hawking solitons are
 \be
 \td s^2_{\text{GH}}=   -\td t^2+ H^{-1} (\td \tau + \chi_i \td x^i)^2 +H \td x^i \td x^i \; , \qquad H= \sum_{a=1}^n \frac{1}{|x - p_a|}  \; ,
 \ee
 where $x^i$ are Cartesian coordinates on $\mathbb{R}^3$,  $p_a \in \mathbb{R}^3$ are constants and $\chi$ is determined by $\td \chi= \star_3 \td H$. We assume $n>1$ and note that  for $n=2$ this is the Eguchi-Hanson soliton (\ref{EH}) in different coordinates (for $n=1$ this of course Minkowski spacetime).  If we take the $p_a= (0,0,z_a)$  collinear then the metric has biaxial symmetry and in cylindrical coordinates reads
 \bea
 &&\td s^2_{\text{GH}}=   -\td t^2+ H^{-1} (\td \tau + \chi \td \phi)^2+ H \rho^2 \td \phi^2 +H ( \td \rho^2+ \td z^2)\;, \nonumber \\
 && H= \sum_{a=1}^n \frac{1}{\sqrt{\rho^2 +(z-z_a)^2}}, \qquad \chi = \sum_{a=1}^n \frac{z-z_a}{\sqrt{\rho^2 +(z-z_a)^2}}  \; .
 \eea
 Observe that this metric is also in Weyl coordinates.
 As is well-known, if $(\tau, \phi)$ are identified as Euler angles on $S^3$ (i.e. such that the orbits of $\partial_\phi\pm \partial_\tau$ are independently $2\pi$-periodic) this gives a smooth ALE metric with $S^3/\mathbb{Z}_n$  topology at infinity and any curve between the centres $p_a$ corresponds to a 2-cycle (or bolt).   
 
 On the other hand, one can identify $(\tau, \phi)$ such that the topology at infinity is $S^3$ resulting in an asymptotically Minkowski spacetime.  Explicitly, as $r= |x| \to \infty$ we have $H\sim n/r$ and $\chi \sim n \cos\theta$, where $(r, \theta)$ are standard polar coordinates on $\mathbb{R}^3$, so
 \be
 \td s^2_{\text{GH}} \sim - \td t^2+  \td R^2+ \frac{1}{4}R^2\left[ ( \td \psi+ \cos \theta \td \phi)^2+ \td \theta^2+ \sin^2\theta \td \phi^2\right] \; ,
 \ee
 where we have defined coordinates $\psi= \tau/n$ and  $R^2= 4n r$. Thus identifying $(\theta, \psi, \phi)$ to be Euler angles on $S^3$ gives an asymptotically Minkowski spacetime. In particular, the rod vectors with $2\pi$-periodic orbits on the two semi-infinite axes $\theta=0$ and $\theta=\pi$ are $v_R= \partial_\phi- \partial_\psi$ and $v_L= \partial_\phi+ \partial_\psi$ respectively. Let us compute the rod structure for this asymptotically flat vacuum solution.

 It is clear there are $n+1$ axis rods $I_1= (-\infty, z_1)$,  $I_a=( z_{a-1}, z_a)$ for $a=2, \dots, n$ and $I_{n+1}= (z_n, \infty)$. The rod vector on each rod is a multiple of
 \be
\tilde{v}_a = \partial_\phi- \chi_a\partial_\tau
 \ee
 where
 \be 
 \chi_a \equiv \chi|_{I_a} = \sum_{b=1}^n \text{sign}(z-z_b) = 2 (a-1)-n
 \ee
 for $a=1, \dots, n+1$. For $a=1$ and $a=n+1$ this expression reduces to $v_L$ and $v_R$ respectively and hence is correctly normalised.  With respect to the $2\pi$-periodic basis $(v_R, v_L)$ the rod vectors are
 \be
 \tilde{v}_a= \left( \frac{a-1}{n} ,1- \frac{a-1}{n} \right) 
 \ee
 so $\tilde{v}_1= v_1= (0,1)$ and $\tilde{v}_{n+1}= v_{n+1}= (1,0)$ as previously noted. However, for $a=2, \dots, n$ rod vectors must be rescaled to ensure they have integer entries with respect to a $2\pi$-periodic basis.  
Thus for $a=2, \dots, n$ the rod vectors are 
 \be
 v_a= \frac{1}{\text{gcd}(a-1, n)}\left( a-1, n- a+1 \right) \;,
 \ee
 where the prefactor is included to ensure the components are coprime and hence $v_a$ has $2\pi$-periodic orbits.  
 
 We will now examine whether this rod structure satisfies the admissibility condition (\ref{admissible}).
 In general we have
 \be
 v_2= (1, n-1) \; , \qquad  v_n= (n-1, 1) \; ,
 \ee
 so $\det (v_1, v_2)=-1$ and $\det( v_n, v_{n+1})=-1$ satisfy (\ref{admissible}).  Therefore, if $n=2$, we have an admissible rod structure $v_1=(0,1)$, $v_2=(1,1)$, $v_3=(1,0)$. This is the Eguchi-Hanson soliton discussed in the main text (\ref{EH}).  However, for $n>2$ and $a=2, \dots, n-1$ we have
 \be
 \det (v_a, v_{a+1} ) =  -\frac{n}{\text{gcd}(a-1, n)\text{gcd}(a, n)} \;,
 \ee
 which is never equal to $\pm 1$ and hence the admissibility condition (\ref{admissible}) is always violated for $n>2$. Instead, for these cases the corners of the orbit spaces $z_2, \dots, z_{n-1}$ are orbifold singularities.
 
\section{Geometry near corners of orbit space}
\label{app:corners}

\subsection{Intersection of axes}

Here we consider the geometry of a $D=5$ spacetime near a fixed point of the $U(1)^2$-action, i.e., we consider the geometry near a corner of the orbit space $z=z_a$ where two consecutive axis rods $I_a$ and $I_{a+1}$ meet. 

Then, as shown in Section \ref{sec:axes}, smoothness of the metric on $I_{a}$ at $z=z_a$ requires (\ref{regupper}), whereas smoothness of the metric on $I_{a+1}$ at $z=z_a$ requires (\ref{reglower}) with $a$ replaced by $a+1$, i.e., ${h^{a+1}}'(z_{a})^2/h^{a+1}_{00}(z_{a})=-4c_{a+1}^2$.  On the other hand, for any axis rod 
\be
h^a_{00}(z)=  \mathring{g}_{AB}k^A k^B
\ee
is simply the squared norm of the stationary Killing field $k$ on the axis. Therefore, $h_{00}^a(z_a)=h^{a+1}_{00}(z_a)$ and hence eliminating the norm of $k$ between the aforementioned regularity conditions we deduce that
\be
c_a^{-1} {h^a}'(z_a)= - c_{a+1}^{-1} {h^{a+1}}'(z_a) \; , \label{regcorner}
\ee
where in order to fix the sign we have used the fact that ${h^a}'(z_a)>0$ and ${h^{a+1}}'(z_a)<0$ (these follow from $h^a<0$ in the interior of $I_a$).  

Finally, observe that using (\ref{nu0}) the condition (\ref{regcorner}) is equivalent to continuity of $| z-z_a| e^{2\mathring{\nu}}$ at $z=z_a$.  In fact this continuity condition for the conformal factor $e^{2\nu}$ has been previously proven in~\cite{Tod:2013ska}.

\subsection{Intersection of horizon and axis}

We now consider the geometry where a horizon rod $I_a$ meets an axis rod $I_{a+1}$. In particular, the geometry on the axis  corresponding to $I_{a+1}$ (\ref{axisrodmetric}) is a $(D-2)$-dimensional Lorentzian spacetime that must have a regular $(D-3)$-dimensional horizon as $z\to z_a$ corresponding to where the full horizon intersects the axis corresponding to $I_{a+1}$. We will  now compute the surface gravity of this `axis horizon' $z=z_a$, which must of course coincide with the surface gravity of the full horizon.   

For $D=5$,  the Killing field null on the horizon $\xi$ restricted to the axis rod $I_{a+1}$ is $\xi= k+\Omega u_{a+1}$ where $(k, u_{a+1})$ is the adapted basis of $I_{a+1}$ and $\Omega$ is a constant angular velocity. Therefore, the metric on this component of the axis (\ref{axisrodmetric}) must be of the form
\bea
\mathbf{g}_{a+1} &=& -\frac{c_{a+1}^2\td z^2}{h^{a+1}(z)} + (p_1 (z-z_a) + O((z-z_a)^2) (\td x^0)^2 \nonumber  \\ &+& O(z-z_a) \td x^0  (\td {x}^1- \Omega  \td x^0) + (p_2+O(z-z_a)) (\td {x}^1- \Omega  \td x^0)^2 \; ,  \label{axisnearhorizon}
\eea
 as $z\to z_a^+$, 
where we choose adapted coordinates such that $k= \partial/ \partial {x^0}, u_{a+1}=\partial/\partial {x^1}$. The expansions of the metric components follow from smoothness, together with $\xi$ being null on the axis horizon and $u_{a+1}$ being tangent to the axis horizon. Here $p_1<0$, $p_2>0$ are constants related to the metric components ($p_1=0$ would correspond to an extremal horizon which we do not consider here).  It follows that the determinant $h^{a+1}(z) = p_1p_2 (z-z_a) + O((z-z_a)^2)$ and hence defining $\epsilon^2= z-z_a$, the first two terms in (\ref{axisnearhorizon}) approach the Rindler metric 
\be
-\frac{4c_{a+1}^2}{p_1 p_2} \left( \td \epsilon^2- \kappa^2 \epsilon^2 (\td x^0)^2 \right) \; ,
\ee
as $\epsilon \to 0$, with surface gravity
\be
\kappa^2 = \frac{p_1^2 p_2 }{4 c_{a+1}^2}
= \frac{{h^{a+1}}'(z_a)^2}{4c_{a+1}^2 h^{a+1}_{11}(z_a) }  \; .  \label{5Dkappaaxis}
\ee
The second equality  follows from the relations $p_2 = h_{11}^{a+1}(z_a)$ and $p_1 p_2 = {h^{a+1}}'(z_a)$.  A similar analysis for $D=4$ (which effectively can be obtained from dropping the $\td x^1$ terms above)  gives
\be
\kappa^2= \frac{{h^{a+1}}'(z_a)^2}{4 c_{a+1}^2}  \; . \label{4dkappaaxis}
\ee
This analysis confirms the axis geometry on $I_{a+1}$ has a smooth non-degenerate horizon at $z=z_a$ with surface gravity (\ref{5Dkappaaxis}) for $D=5$ and (\ref{4dkappaaxis}) for $D=4$.  

On the other hand, as shown above, smoothness of the horizon metric at the corner $z=z_a$ leads to a different expression for $\kappa$. For $D=4$ this is given by (\ref{4dkappa}) and combining this with (\ref{4dkappaaxis}) implies
\be
\kappa^2 \gamma'(z_a)= c_{a+1}^{-1}{h^{a+1}}'(z_a)  \;, \label{contHA}
\ee 
where the signs are fixed from the fact that $\gamma'(z_a)<0$ and ${h^{a+1}}'(z_a)<0$. For $D=5$,  the expression for the surface gravity (\ref{5dkappa}), written in  coordinates  $\hat{\phi}^i, i=1,2$, adapted to the horizon rod $I_a$ so that $u_{a+1}= \partial_{\hat{1}}$ and $v_{a+1}= \partial_{\hat{2}}$, becomes
\be
\kappa^{-2} = \frac{\gamma'(z_a)^2}{4 \gamma_{\hat{1} \hat{1}}(z_a)} \; ,\label{5dkappasimp}
\ee
where we used $ \gamma'(z_a)= \gamma_{\hat{1} \hat{1}}(z_a) \gamma'_{\hat{2} \hat{2}}(z_a)$. Next, note that 
\be
h^{a+1}_{11}(z)= \mathring{{g}}_{AB}u_{a+1}^Au_{a+1}^B, \qquad  \gamma_{\hat{1} \hat{1}}(z)= \mathring{{g}}_{AB}u_{a+1}^Au_{a+1}^B \; ,
\ee
on the  rods $I_{a+1}$ and $I_a$ respectively,  are both equal to the norm squared of $u_{a+1}$, so in particular $h^{a+1}_{11}(z_a)= \gamma_{\hat{1} \hat{1}}(z_a)$. Hence eliminating the norm of $u_{a+1}$ between (\ref{5Dkappaaxis}) and (\ref{5dkappasimp}) we deduce that  (\ref{contHA}) also holds for $D=5$. The analysis for a horizon rod $I_a$ meeting an axis rod $I_{a-1}$ is entirely analogous and similarly to (\ref{contHA}) one can derive that
\begin{equation}
\label{contHAlower}
    \kappa^2 \gamma'(z_{a-1})= c_{a-1}^{-1} {h^{a-1}}'(z_{a-1})
\end{equation}
for $D=4,5$.

Finally, using (\ref{nu0}) and (\ref{nu0hor}) we see that (\ref{contHA}) is equivalent to the continuity of $|z-z_a|e^{2\mathring{\nu}}$ at $z=z_a$ (with a similar condition at $z=z_{a-1}$ for (\ref{contHAlower})), just as in the case of a corner separating two axis rods. 

\section{Ernst potential identities\label{apen:ernstHorizon}}

Consider a component of the horizon $H$ with corresponding rod $I_a$ and we drop rod labels when convenient and unambiguous. First, recall the well-known identity
\be
\int_{H} \star \td \xi = - 2\kappa A \;,
\ee
where $\xi$ is the horizon Killing field (\ref{corotKVF}), $\kappa$ is the surface gravity and $A$ is the area of $H$. Therefore, using (\ref{Hint}) we deduce that
\be
\zeta(z_a)-\zeta(z_{a-1})= - \frac{2 \kappa A }{(2\pi)^{D-3}}  \; , \label{cjump}
\ee
where we have defined a new potential $\zeta$ by
\be
\td  \zeta= \star (m_1 \wedge \dots m_{D-3}\wedge \td \xi)   \; .  \label{zetapot}
\ee
Also, we will need the following fact: in coordinates adapted to the horizon rod (\ref{ghoradapted}) implies that the 1-form dual to the corotating Killing field is 
\be
\xi_A=\tilde{g}_{A D-3} =O(\rho^2) 
\label{vnearh}
\ee
near the horizon. Thus, in particular, $\xi= 0$ on the horizon (although $\td \xi \neq 0$ since $\rho$ is not a good coordinate on the horizon). 

For $D=4$ we can write (\ref{4dernst}) in terms of the corotating Killing field
\be
\td b = -\star ( \xi \wedge \td \xi) +\Omega \star (\xi \wedge \td m)+ \Omega  \td \zeta- \Omega^2 \td \chi \;,
\ee
where we have used the definition of the twist potential (\ref{twist}) and (\ref{zetapot}).  Evaluating this on the horizon we see that the first two terms must vanish due to (\ref{vnearh}). Thus we find that on the horizon
\be
\td b = \Omega ( \td \zeta - \Omega \td \chi) 
\ee
and integrating this over the horizon rod $I_a$ gives
\be
b(z_a)- b(z_{a-1})
=  -\Omega \left(  \frac{\kappa A}{\pi} + 8\Omega J \right)= -4 \Omega M \;,
\ee
where in the first equality we used (\ref{cjump}) and (\ref{Jhorizon}) and in the final equality the standard  Smarr relation for the Komar mass of the horizon $M=\frac{1}{8\pi}\int_{H} \star \td \xi$.  This implies the identity (\ref{4dbinfty}).

For $D=5$, one can show again using (\ref{vnearh}) that on the horizon
\be
\td b^L_{\mu} = \left( \begin{array}{c} - \Omega_2 \Omega_i \td \chi_i+ \Omega_2 \td \zeta \\  \Omega_2 \td \chi_1 \end{array} \right)
\ee
and hence integrating this over the horizon rod
\be
b^L_\mu(z_a)- b^L_\mu(z_{a-1}) =\Omega_2 \left( \begin{array}{c} - \frac{4}{\pi} \left( \Omega_i J_i + \frac{\kappa A}{8\pi} \right) \\  \frac{4  J_1}{\pi} \end{array} \right)= \Omega_2 \left( \begin{array}{c}-  \frac{8M}{3\pi}  \\  \frac{4 J_1}{\pi} \end{array} \right) \;, \label{bLjumpH}
\ee
where in the first equality we used (\ref{Jhorizon}) and (\ref{cjump}) and in the second the Smarr relation. Similarly, 
one finds that on the horizon
\be
\td b^R_{\mu} = \left( \begin{array}{c}  -\Omega_1 \Omega_i \td \chi_i+  \Omega_1 \td \zeta \\  \Omega_1 \td \chi_2 \end{array} \right)
\ee
and hence
\be
b^R_\mu(z_a)- b_\mu^R(z_{a-1}) =\Omega_1 \left( \begin{array}{c}  -\frac{8M}{3\pi}  \\ \frac{4 J_2}{\pi} \end{array} \right)  \; .  \label{bRjumpH}
\ee
In a similar manner, one can also evaluate the change in Ernst potential  associated to any other axis rod over a horizon rod. Formulae for $b^L_\mu(z_a) - b^L_\mu(z_{a-1})$ and $b^R_\mu(z_a) - b^R_\mu(z_{a-1})$ across axis rods can also be derived, which combined with (\ref{bLjumpH}) and (\ref{bRjumpH}) imply the identities (\ref{5dbL}) and (\ref{5dbR}).

\section{Proof of Proposition \ref{prop:consistency}}
\label{apen:ghneq0}
First we observe that for an axis rod  $\tilde{G}_{a NN}(k) = v_a^T G_a(k) v_a$ where in the standard basis $v_a^T= (0, v_a^1, \dots, v_a^{D-3})$ is the rod vector.  Similarly, for a horizon rod we can write  $\tilde{G}_{a 00}(k) = v_a^T G_a(k) v_a$ where $v_a^T=(1, \Omega_1^a, \dots, \Omega^a_{D-3})$ denotes the horizon null vector.  Similar statements hold for the matrices $H_a(k)$.  Thus, to complete the proof of Proposition \ref{prop:consistency} we need to establish 
\begin{equation}
\label{gneq0}
    \lim_{k\to z_{a-1}}v_a^T G_a(k) v_a \neq 0
\end{equation}
and
\begin{equation}
\label{hneq0}
    \lim_{k\to z_{a}}v_a^T H_a(k) v_a \neq 0,
\end{equation}
for each finite rod $I_a$, for generic values of the parameters. We will only explicitly prove (\ref{gneq0}), though (\ref{hneq0}) can be proved in an almost identical fashion. 

Writing out $G_a$ explicitly in terms of the $P_a$ matrices using the expression for $F_a$ (\ref{Faalt}) gives
\begin{equation}
\begin{gathered}
    G_2(k) = -X_L(z_1,k) C^{-1} P_n(k)^T \cdots P_2(k)^T,\\
    G_a(k) = -X_{a-1}(z_{a-1},k) P_{a-2}(k)^{-1} \cdots P_1(k)^{-1} C^{-1} P_n(k)^T \cdots P_a(k)^T,
\end{gathered}
\end{equation}
where $a = 3,\dots, n$. Consider a fixed, but arbitrary set of axis rod vectors $v_a$ (this is of course only relevant for $D=5$). Then, from the definition of the matrices $G_a(k)$  it is clear that the LHS of (\ref{gneq0}) is a rational function $\mathcal{R}_a( \vec{\varphi })$ where the vector $\vec{\varphi}$ denotes the continuous moduli in (\ref{moduli}) (i.e. excluding the axis rod vectors).  For the purposes of the proposition we need to prove $\mathcal{R}_a( \vec{\varphi })\neq 0$ for {\it generic} values of the moduli $\vec{\varphi}$, i.e. the zero set of $\mathcal{R}_a$ is lower-dimensional.   A simple strategy to prove this is to find an explicit value of the moduli $\varphi_0$ for which $\mathcal{R}_a(\varphi_0) \neq 0$, since when combined with analyticity of the numerator of $\mathcal{R}_a$,  implies that the zero-set of $\mathcal{R}_a$ does not contain an open set. It is worth noting that for this argument  the value $\varphi_0$ does not need to belong to the actual moduli space of solutions (defined by (\ref{modspace})).

It is convenient to choose $\varphi_0$ for each rod $I_a$ such that $P_b(z_{a-1})=I_{D-3}$ for all $b \ne a-1$  and $1\le b \le n$. This is achieved by setting $b_\mu^{b}(z_{b})=0$ or $\chi_i^{b}(z_{b})=0$, depending on whether $I_{b}$ is an axis or horizon rod, and $z_b = z_{a-1} +1/2$. The result of this is that for any finite rod $I_a$
\begin{equation}
\label{gneq02}
    \lim_{k\to z_{a-1}}v_a^T G_a(k) v_a \to
    - v_a^T X_{a-1}(z_{a-1},z_{a-1}) C^{-1} v_a
\end{equation}
under these parameter identifications. Therefore in order to prove $(\ref{gneq0})$ all that remains is to show that the right hand side of (\ref{gneq02}) is generically nonzero.

First consider $D=4$. Using the explicit expression for $C$ (\ref{C4d}) one finds that
\begin{equation}
    - v_a^T X_{a-1}(z_{a-1},z_{a-1}) C^{-1} v_a = 
    \begin{cases}
    -1 + \Omega^{a-1}\chi^{a-1}(z_{a-1}), \quad & \text{$I_{a-1}$ horizon rod, $I_a$ axis rod},\\
    -1 + \Omega^a b^{a-1}(z_{a-1}), \quad & \text{$I_{a-1}$ axis rod, $I_a$ horizon rod},
    \end{cases}
\end{equation}
which are indeed generically nonzero.

Now consider $D=5$, in which case $C$ is explicitly given by (\ref{C5d}). If $I_{a-1}$ is an axis rod and $I_a$ is a horizon rod, we can also set $\Omega^a_i = 0$ which implies that the right hand side of $(\ref{gneq02})$ is simply given by $-1$. If $I_{a-1}$ is a horizon rod and $I_a$ is an axis rod then the right hand side of $(\ref{gneq02})$ is given by 
\begin{equation}
    [v_a^i \chi^{a-1}_i(z_{a-1})][\tilde{v}^T_a v_{a-1}] - v_a^T \tilde{v}_a,
\end{equation}
where $\tilde{v}_a^T = \begin{pmatrix}0 & - v_a^1 & v_a^2 \end{pmatrix}$, which is generically nonzero. Finally, if both $I_{a-1}$ and $I_a$ are axis rods then the right hand side of $(\ref{gneq02})$ is given by
\begin{equation}
    (\det A_{a-1})^{-1} \det \left(\begin{array}{cc} v^1_a & v_a^2 \\ v^1_{a-1} & v^2_{a-1}\end{array} \right) \tilde{v}_a^T (b_1^{a-1}(z_{a-1}) v_{a-1} - u_{a-1}),
\end{equation}
where the matrix $A_{a-1}$ and the axial Killing field $u_{a-1}$ are introduced in (\ref{uAdef}). The first factor is nonzero since $A_{a-1}\in GL(2, \mathbb{Z})$, the second factor is nonzero since $v_a$ and $v_{a-1}$ must be linearly independent (in particular see (\ref{admissible})), and the third factor is generically nonzero since $\tilde{v}_a$ cannot be orthogonal to both $v_{a-1}$ and $u_{a-1}$. This establishes the claim.

\end{document}